\documentclass[a4paper,12pt]{article}

\usepackage[right=1in,left=1in,top=1in,bottom=1in]{geometry}
\usepackage[T1]{fontenc}
\usepackage{amsfonts,amsmath,amssymb,amsthm,mathrsfs,mathtools}
\usepackage[numbers,square]{natbib}
\usepackage{url}
\usepackage{xcolor}
\definecolor{deepnavy}{rgb}{0.04,0.24,0.57}
\definecolor{deepernavy}{rgb}{0.02,0.12,0.285}
\definecolor{pathzero}{RGB}{145,30,55}
\definecolor{pathone}{RGB}{0,100,70}
\usepackage{graphicx}
\makeatletter
\renewcommand{\fnum@figure}{\textbf{\figurename\nobreakspace\thefigure}}
\makeatother
\DeclareRobustCommand{\captionlead}[1]{{\bfseries\boldmath #1}}
\usepackage{sectsty}
\usepackage{titletoc}
\usepackage{setspace}
\usepackage{bbm}
\usepackage{MnSymbol}
\usepackage[all]{xy}
\usepackage{dsfont}
\usepackage{tabularx}
\usepackage{tabularray}
\usepackage{multicol}
\usepackage{enumitem}
\usepackage{booktabs}
\usepackage{float}
\usepackage{microtype}
\usepackage{needspace}
\usepackage{fancyhdr}
\usepackage{quantikz}
\usetikzlibrary{arrows.meta,calc,positioning}
\usepackage{tikz-cd}
\usepackage{pgfplots}
\pgfplotsset{compat=1.18}
\usepackage[colorlinks,hypertexnames=false]{hyperref}
\hypersetup{
  colorlinks=true,
  linkcolor=deepernavy,
  citecolor=deepernavy,
  urlcolor=deepernavy,
  pdftitle={The Quantum Composition Paradox},
  pdfauthor={Jacob Biamonte},
  pdfsubject={Fixed-context Born-kernel composition, reversible state machines, and quantum music},
  pdfkeywords={Born rule, stochastic divisibility, quantum histories, quantum state machines, quantum music, quantum circuits}
}

\sectionfont{\large}
\bibpunct{[}{]}{,}{n}{}{,}
\makeatletter
\g@addto@macro\th@plain{\thm@notefont{\bfseries}}
\g@addto@macro\th@definition{\thm@notefont{\bfseries}}
\makeatother

\newtheorem{theorem}{Theorem}[section]
\newtheorem{proposition}[theorem]{Proposition}
\newtheorem{lemma}[theorem]{Lemma}
\newtheorem{corollary}[theorem]{Corollary}
\theoremstyle{definition}
\newtheorem{definition}[theorem]{Definition}
\newtheorem{example}[theorem]{Example}
\newtheorem{remark}[theorem]{Remark}

\newcommand{\Born}{\mathsf B}
\newcommand{\U}{\mathrm U}
\newcommand{\C}{\mathcal C}
\newcommand{\Mc}{\mathcal M_{\C}}

\newcommand{\Tr}{\operatorname{Tr}}
\newcommand{\rank}{\operatorname{rank}}

\newcommand{\openone}{\mathds{1}}
\newcommand{\ii}{\imath}
\newcommand{\R}{\mathbb R}
\newcommand{\HS}{\operatorname{HS}}
\newcommand{\PromiseBQP}{\mbox{\textup{\textsc{PromiseBQP}}}}
\fancypagestyle{appendixpages}{%
  \fancyhf{}%
  \fancyfoot[C]{\scriptsize\color{black!50}}%
  \fancyfoot[L]{\scriptsize\color{black!50}Appendix to \textit{The Quantum Composition Paradox}}%
  \fancyfoot[R]{\scriptsize\color{black!50}Page \thepage\ of \pageref*{app:last-page}}%
}

\title{\textbf{The Quantum Composition Paradox}}
\author{Jacob Biamonte}
\date{}

\begin{document}
\setcitestyle{numbers,square}
\maketitle
\begin{center}
\small {\'E}TS Montr{\'e}al, Universit{\'e} du Qu{\'e}bec, Montreal, QC, Canada
\end{center}
\vspace{0.5em}

\begingroup
\bfseries
Quantum theory does not generally permit
the probability laws obtained from individual unitary steps by the Born rule
to be sewn into a consistent genealogy; we classify the exceptions
and show that faithful composition can hold from an initial boundary yet fail
after an internal restart.
For finite-dimensional, composition-closed unitary families, universal
composition holds exactly for
unitary monomials, the
phase-dressed permutations whose Born kernels realize reversible deterministic
state machines.  Prescribed sequences evade this obstruction.
We classify all pairs of qubit unitary steps, give a necessary-and-sufficient criterion
for pairs of qutrit unitary steps, and prove that
a boundary-stable unitary sequence on a $d$-dimensional Hilbert space
contains
at most $d$ fully mixing steps, with equality in every prime dimension.  We also
construct arbitrarily long genuinely mixing sequences that compose from their
initial boundary but fail after an internal restart, and
a qutrit-controlled
two-qubit realization with active interference.  We define a
Born--Chapman--Kolmogorov current that vanishes exactly when coherent and
stepwise-checked endpoint laws agree, together with an associated measure of
how many bits the endpoint reveals about the intermediate checking schedule.
This state-machine connection provides a foundation for a quantum theory of music,
in which unitary operations are notes and temporal boundaries are cues.
Musical-transition prediction is proved \PromiseBQP-complete, and the stepwise
patterns that preserve a genealogy specify rules for rhythm, whereas the exceptional failure
of that genealogy from an internal cue
is the quantum music paradox.\\
\par
\endgroup

\noindent Key words: Born rule, stochastic divisibility, quantum histories,
quantum state machines, quantum process theory, quantum music, quantum circuits

\newpage

\section{The composition problem of quantum mechanics}

Quantum theory offers two distinct routes to the same final time.  If
observations remain unchecked, amplitudes combine and interfere; if an
intermediate observation creates a record, probabilities combine
instead.  The double slit makes the resulting disagreement vivid.  Its
experimental lineage runs from Young's interference experiment to J\"onsson's
electron realization~\cite{Young1804,Jonsson1961}; Born's probability rule
and Bohr's complementarity supply its enduring conceptual vocabulary~\cite{Born1926,Bohr1928}.

Feynman eventually said that this phenomenon
``\textit{has in it the heart of
quantum mechanics.  In reality, it contains the only mystery}''
~\cite{FeynmanLectures}.  Here that mystery appears as a failure of
composition: coherent amplitudes compose, but their Born probabilities
generally do not.  Classifying when they do compose poses a quantum--classical
interface challenge and raises a sharp question: does a faithful genealogy
survive restriction to a newly prepared internal boundary,
as it must for a consistent
temporal model that can be restarted from that boundary?

The double slit provides the familiar negative case: stepwise and
endpoint-checked routes assign different probabilities.  This question
narrows the measurement problem.  Quantum mechanics
has long supplied probability laws for successive observations
~\cite{Luders,WignerMeasurement}; in the fixed-context setting used here,
summing over intermediate outcomes gives the stepwise-checked Born probability
kernel.  We first classify
when it equals the endpoint Born kernel of the same evolution without
intermediate checks, and then whether that same equality survives an internal boundary.

\begin{figure}[htbp]
\centering
\begin{minipage}[t]{0.48\textwidth}
\centering
\begin{tikzpicture}
\begin{axis}[
 width=\linewidth,height=4.0cm,
 axis lines=left,
 xmin=-3.2,xmax=3.2,ymin=0,ymax=1.08,
 xtick=\empty,ytick=\empty,
 xlabel={$x$},ylabel={$p_{\mathrm{end}}(x)$},
 xlabel style={at={(axis description cs:1,0)},anchor=north east},
 ylabel style={at={(axis description cs:0,1)},anchor=south east,rotate=-90}]
 \addplot[black,thick,domain=-3.2:3.2,samples=280]
  {exp(-x^2/4.2)*(cos(deg(4.8*x)))^2};
\end{axis}
\end{tikzpicture}

\small Endpoint checked:
$p_{\mathrm{end}}(x)=|\psi_1(x)+\psi_2(x)|^2$.
\end{minipage}\hfill
\begin{minipage}[t]{0.48\textwidth}
\centering
\begin{tikzpicture}
\begin{axis}[
 width=\linewidth,height=4.0cm,
 axis lines=left,
 xmin=-3.2,xmax=3.2,ymin=0,ymax=1.25,
 xtick=\empty,ytick=\empty,
 xlabel={$x$},ylabel={$p_{\mathrm{step}}(x)$},
 xlabel style={at={(axis description cs:1,0)},anchor=north east},
 ylabel style={at={(axis description cs:0,1)},anchor=south east,rotate=-90}]
 \addplot[black,thick,domain=-3.2:3.2,samples=180]
  {0.62*exp(-0.55*(x+0.75)^2)};
 \addplot[black,thick,domain=-3.2:3.2,samples=180]
  {0.62*exp(-0.55*(x-0.75)^2)};
 \addplot[black,thick,densely dashed,domain=-3.2:3.2,samples=180]
  {0.62*exp(-0.55*(x+0.75)^2)+0.62*exp(-0.55*(x-0.75)^2)};
 \node at (axis cs:-1.15,0.34) {$p_1$};
 \node at (axis cs:1.15,0.34) {$p_2$};
 \node at (axis cs:0,1.08) {$p_1+p_2$};
\end{axis}
\end{tikzpicture}

\small Stepwise checked:
$p_{\mathrm{step}}(x)=p_1(x)+p_2(x)$.
\end{minipage}

\medskip
\resizebox{0.96\linewidth}{!}{%
\begin{tikzpicture}[
 >=stealth,
 line width=0.65pt,
 block/.style={draw,minimum width=1.55cm,minimum height=0.58cm,
               align=center,font=\footnotesize},
 route/.style={font=\footnotesize,anchor=east},
 output/.style={font=\footnotesize,anchor=west}
]
 \node[route] at (0,0.72) {Endpoint checked};
 \node[circle,fill=black,inner sep=2.1pt] (esource) at (0.65,0.72) {};
 \node[font=\scriptsize,below] at (esource.south) {source};
 \node[block] (eslits) at (2.65,0.72) {two slits};
 \node[block] (escreen) at (7.75,0.72) {screen};
 \node[output] (eout) at (9.05,0.72) {$p_{\mathrm{end}}(x)$};
 \draw[->] (esource) -- (eslits);
 \draw[->] (eslits) -- (escreen);
 \draw[->] (escreen) -- (eout);

 \node[route] at (0,-0.72) {Stepwise checked};
 \node[circle,fill=black,inner sep=2.1pt] (ssource) at (0.65,-0.72) {};
 \node[font=\scriptsize,below] at (ssource.south) {source};
 \node[block] (sslits) at (2.65,-0.72) {two slits};
 \node[block,minimum width=2.05cm] (scheck) at (5.25,-0.72)
   {which-path check};
 \node[block] (sscreen) at (7.75,-0.72) {screen};
 \node[output] (sout) at (9.05,-0.72) {$p_{\mathrm{step}}(x)$};
 \draw[->] (ssource) -- (sslits);
 \draw[->] (sslits) -- (scheck);
 \draw[->] (scheck) -- (sscreen);
 \draw[->] (sscreen) -- (sout);
\end{tikzpicture}%
}
\caption{\captionlead{The double slit as a quantum--classical composition failure.}  Endpoint-checked
alternatives add as amplitudes and interfere.  Stepwise-checked alternatives
add as probabilities.  The two processes assign different
distributions to the final outcomes.}
\label{fig:double-slit-composition}
\end{figure}

Let $X=\{0,\ldots,d-1\}$ label the basis of a $d$-dimensional Hilbert
space $\mathsf H\cong\mathbb C^d$, and fix the atomic context
$\C=\{P_x=|x\rangle\langle x|:x\in X\}$, with
$\sum_{x\in X}P_x=\openone$.  A $\C$-diagonal initial (entrance) state has
the form $\rho=\sum_{x\in X}q(x)P_x$ for any probability distribution $q$
 over $X$.
For any unitary $U\in\U(d)$, write
$\Born(U)=\Born_{\C}(U)$ for its \emph{Born kernel}, the transition matrix
in this context:
\begin{equation}
 \begin{gathered}
 \Born(U):=U\odot\overline U,\\[-0.2ex]
 \text{\(\langle y|\Born(U)|x\rangle
 =\langle yy|\bigl(U\otimes\overline U\bigr)|xx\rangle
 =|\langle y|U|x\rangle|^2\)}.
 \end{gathered}
 \label{eq:story-born-map}
\end{equation}
Here $\odot$ is elementwise (Hadamard) product and the overbar denotes
entrywise complex conjugation in the fixed context.  This matrix element is
the transition probability
from $|x\rangle$ to $|y\rangle$ under $U$.
Each $\Born(U)$ is unistochastic, hence doubly stochastic, and therefore a
valid classical transition map over the chosen outcomes.  Every kernel
identity below holds for every context-basis
preparation and hence for every probability distribution over that context.
The context $\C$ fixes the basis in which the entrywise conjugation
and Hadamard product are taken.

We use $\circ$ for chronological composition of
both unitaries and their stochastic
kernels, where $V\circ U$ means that $U$ acts
before $V$.

A family of Born kernels assigned to allowed contiguous passages is a
\emph{Born-induced stochastic genealogy} when, at every allowed cut, the
kernel of the whole passage equals the chronological product of the kernels
of its subpassages.

\begingroup
\setlength{\emergencystretch}{2em}
For a unitary $U$ followed by a unitary $V$,
define what we call the signed \emph{Born--Chapman--Kolmogorov current},
or \emph{Born-composition defect},
and its normalized Frobenius norm by
\begin{align}
 \mathcal J_{\C}(V,U)
 &:=\Born_{\C}(V\circ U)
   -\Born_{\C}(V)\circ\Born_{\C}(U),
 \label{eq:born-chapman-current}\\
  0\leq \nu_{\C}(V,U)
 &:=\frac{1}{\sqrt{d-1}}
 \left\|\mathcal J_{\C}(V,U)\right\|_{\mathrm F}
 \leq 1.
 \label{eq:current-size-main}
\end{align}
Here $d\geq2$, and the upper bound is sharp by
Theorem~\ref{thm:bound-main}.
This terminology denotes the
signed departure from the Chapman--Kolmogorov equation which is a faithful
consistency relation for classical transition kernels in stochastic processes
and statistical mechanics~\cite{Chapman,Kolmogorov,BaezBiamonte}.

To quantify the information revealed
from an intermediate measurement, prepare a
uniformly chosen context-basis state.  In this two-step experiment, $X$ is a uniformly sampled,
recorded input basis label, $S$ is an independent fair bit selecting whether
to insert the unread intermediate context check, and $Y$ is the recorded
output label.  Write
$\mathcal I(S;Y\mid X)$ for the conditional
mutual information about the checking
schedule, with $0\leq\mathcal I\leq1$ bit per run in the uniform-input,
two-step case of Proposition~\ref{prop:schedule-information}.
Here both recorded variables $X$ and $Y$ take values in the same
outcome alphabet $X$ associated with $\C=\{P_x:x\in X\}$.

Figure~\ref{fig:quantum-stochastic-composition}
gives equivalent formulations of the quantum--classical consistency
law for a faithful Born-kernel genealogy.
The endpoint
probabilities coincide, whereas the intermediate check may still change the
quantum state.

\begin{figure}[H]
\centering
\fbox{\begin{minipage}{0.92\linewidth}
\centering
\resizebox{0.98\linewidth}{!}{%
\begin{tikzpicture}[>=Latex,line width=0.8pt,yscale=0.82]
  \node (start) at (0,2.00) {$(U,V)$};

  \node[font=\bfseries] at (-3.75,1.85) {Quantum composition};
  \node[font=\bfseries] at (3.75,1.85) {Stochastic composition};

  \node (qcomp) at (-3.50,0.20) {$V\circ U$};
  \node (ceval) at (3.50,0.20)
    {$(V\odot\bar V,\;U\odot\bar U)$};

  \node (qeval) at (-3.50,-1.75)
    {$(V\circ U)\odot(\bar V\circ\bar U)$};
  \node (ccomp) at (3.50,-1.75)
    {$(V\odot\bar V)\circ(U\odot\bar U)$};

  \draw[->] (start) -- node[pos=0.66,above left] {compose} (qcomp);
  \draw[->] (start) -- node[pos=0.66,above right] {evaluate} (ceval);
  \draw[->] (qcomp) -- node[left] {evaluate} (qeval);
  \draw[->] (ceval) -- node[right] {compose} (ccomp);

  \draw[line width=0.95pt] (-1.25,-1.61) -- (1.25,-1.61);
  \draw[line width=0.95pt] (-1.25,-1.89) -- (1.25,-1.89);
\end{tikzpicture}%
}
\par\smallskip

\[
 \begin{aligned}
 \text{The diagram commutes}
 &\Longleftrightarrow
 \Born_{\C}(V\circ U)=\Born_{\C}(V)\circ\Born_{\C}(U)\\
 &\Longleftrightarrow
 \mathcal J_{\C}(V,U)=0
 \Longleftrightarrow\nu_{\C}(V,U)=0
 \Longleftrightarrow\mathcal I=0.
 \end{aligned}
\]
\end{minipage}}
\caption[The quantum--classical consistency law for faithful Born-kernel composition.]{%
\captionlead{The quantum--classical consistency law for faithful Born-kernel composition.}
The left route
composes the unitaries before Born evaluation; the right route
composes their Born kernels.}
\label{fig:quantum-stochastic-composition}
\end{figure}
\endgroup

\begingroup
\setlength{\emergencystretch}{2em}
Several related notions must be separated.  Pollock et al.~and Milz
and Modi distinguish operational process-tensor Markovianity from weaker
divisibility tests: agreement at the level of intermediate maps need not
supply a full multitime Markov process~\cite{PollockEtAl,MilzModi}.
Multitime Kolmogorov consistency asks when inserting or removing measurements
preserves the relevant
statistics~\cite{SmirneEtAl2019,GarciaDiazAccessible}
and~\cite{StrasbergGarciaDiaz,SakuldeeTarantoMilz}.
Our equality is the complete specialization to prescribed closed-system
unitary passages.
Barandes develops a formulation of quantum theory in terms of
indivisible stochastic processes.  His framework identifies permutations as
stochastic maps with stochastic inverses and relates generic
Born-kernel noncomposition to interference~\mbox{\cite{BarandesCorrespondence,BarandesTheorem,BarandesProcesses}}.
We classify the exact relevant cases.
The Born-kernel composition law on
a unitary semigroup yields
unitary monomials, whereas prescribed genuinely mixing unitaries can
compose on some passages and fail
on others;
composition need not survive all shifts
inside the preparation boundary.  The framework also
supplies the Born--Chapman--Kolmogorov current and its exact
schedule-information measure, a consistency theorem for reversible state
machines, and an operation-supplied quantum counterpart of Xenakis-style
stochastic music~\cite{Xenakis1992,Ames,Gibson2023}.  In the
circuit
score model, musical-transition prediction is proved to be
\PromiseBQP-complete.
\endgroup

\Needspace{7\baselineskip}
The conditions under which the
diagram in Fig.~\ref{fig:quantum-stochastic-composition} commutes
admit an equivalent path-sum formulation.  For
entrance $x$, exit $y$, and paths $\gamma=(x,z,y)$ through
intermediate context outcomes $z$, write
$\mathcal A(\gamma)=\langle y|V|z\rangle\langle z|U|x\rangle$.  Then
\begin{equation}
 \begin{aligned}
 \langle y|\mathcal J_{\C}(V,U)|x\rangle
 &=\left|\sum_\gamma\mathcal A(\gamma)\right|^2
   -\sum_\gamma|\mathcal A(\gamma)|^2\\
 &=\sum_{\gamma\ne\gamma'}\mathcal A(\gamma)
   \overline{\mathcal A}(\gamma').
 \end{aligned}
 \label{eq:main-path-comparison}
\end{equation}
Here $\gamma\ne\gamma'$ means $z\ne z'$; the
overbar denotes complex
conjugation.

\Needspace{5\baselineskip}

The diagram commutes exactly when this aggregate interference
vanishes for every entrance--exit pair; the normalized current
$\nu_{\C}(V,U)$ quantifies the failure of composition.

Equation~\eqref{eq:main-path-comparison} makes the distinction from
consistent histories explicit.
Weak or
medium decoherence for every context-basis entrance implies endpoint
Born-kernel composition~\cite{Griffiths,DowkerHalliwell,PazZurek,HalliwellReview}.
However, the converse can fail:
diagram commutation requires only aggregate interference
cancellation, and Appendix~\ref{app:not-consistent-histories} gives a multistep
passage that composes while violating even weak decoherence.
The full multistep path-sum derivation is given in
Appendix~\ref{app:discrete-path-integral}.

For a proposed Born-kernel semantics of finite-state quantum control
or a quantum generalization of Xenakis-style
music~\cite{Xenakis1992,Ames,Gibson2023}, vanishing of the
current is the interval-level test that the locally
supplied kernels reproduce the endpoint distribution of the specified coherent
passage.  Boundary stability imposes this test on every required interval.

The argument begins with three results.
Theorem~\ref{thm:semigroup-main} classifies universal Born-kernel composition,
Theorem~\ref{thm:qubit-paradox-main} isolates the boundary-sensitive
separation, and Theorem~\ref{thm:walsh-composition-app} gives a dense
two-qubit realization.  Section~\ref{sec:quantum-music} develops the application
to quantum music: composition identifies admissible rhythms, and
musical-transition prediction is proved \PromiseBQP-complete in the
circuit-universal score model (Definition~\ref{def:qmp-main} and
Theorem~\ref{thm:qmp-promisebqp-main}).

\section{Universal Born-kernel composition}

The standard formulation of quantum theory treats reversible transformations
as forming the unitary group $\U(d)$.  Thus every $U$ comes with an inverse $U^\dagger$, with
$U^\dagger\circ U=U\circ U^\dagger=\openone$.  If the Born images are also to form a
genealogy under the same composition law, then
\begin{equation}
 \Born(U^\dagger\circ U)=\Born(\openone)
 =\Born(U)^\top\circ\Born(U).
 \label{eq:story-inverse-composition}
\end{equation}

Equation~\eqref{eq:story-inverse-composition} holds if and only if
$U$ is a unitary monomial.  We extend this result to universal
composition.
Fix a basis of
distinguishable readout events, and let
$\Gamma\subseteq\U(d)$ be a unitary semigroup of allowed
transformations, meaning that

\[
U,V\in\Gamma\ \Longrightarrow\ V\circ U\in\Gamma.
\]

No closure under adjoints is assumed, as $\Gamma$ may or may not contain
$U^\dagger$ when it contains $U$.
The following theorem classifies universal composition.

\begin{theorem}[%
Classification of Born-consistent unitary semigroups]
\label{thm:semigroup-main}
The identity
\begin{equation}
 \Born(V\circ U)=\Born(V)\circ\Born(U)
 \quad\text{for every }U,V\in\Gamma
 \label{eq:semigroup-law}
\end{equation}
holds if and only if every element of $\Gamma$ is a unitary monomial in
the context basis.
\end{theorem}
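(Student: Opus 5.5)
The plan is to handle the easy direction first and then reduce the converse to the inverse identity~\eqref{eq:story-inverse-composition}, using the compactness of $\U(d)$ to manufacture an approximate inverse inside $\Gamma$.

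\emph{Sufficiency.} Suppose every element of $\Gamma$ is a unitary monomial, say $U=D_U\Pi_U$ with $D_U$ diagonal unitary and $\Pi_U$ a permutation matrix. Then $\Born(U)=\Pi_U$. Since monomials form a group, $V\circ U=D_V\Pi_VD_U\Pi_U=D'\,\Pi_V\Pi_U$ for some diagonal unitary $D'$. Hence $\Born(V\circ U)=\Pi_V\Pi_U=\Born(V)\circ\Born(U)$. Equivalently, in~\eqref{eq:main-path-comparison} each entrance--exit pair has at most one nonzero path, so all cross terms vanish.

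\emph{Necessity.} Assume~\eqref{eq:semigroup-law} holds on $\Gamma$ and fix $U\in\Gamma$. Induction on the semigroup law gives $\Born(U^n)=\Born(U)^n$ for all $n\ge1$. The powers $U^n$ lie in the compact group $\U(d)$, so by a pigeonhole/recurrence argument there is a subsequence $n_k\to\infty$ with $U^{n_k}\to\openone$. Concretely, diagonalize $U$ and apply Dirichlet simultaneous approximation to the eigenphases. Then $U^{n_k-1}\to U^\dagger$. Since $\Born$ is continuous, $\Born(U)^{n_k}\to\Born(\openone)=\openone$ and $\Born(U)^{n_k-1}\to\Born(U^\dagger)=\Born(U)^\top$. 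Writing $B=\Born(U)$ and passing to the limit in $B^{n_k}=B^{n_k-1}B$ yields $\openone=B^\top B$. This is exactly~\eqref{eq:story-inverse-composition}, obtained without assuming $U^\dagger\in\Gamma$.

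It remains to show that a doubly stochastic $B$ with $B^\top B=\openone$ is a permutation matrix, which is where the monomial conclusion comes from. The diagonal entries give $\sum_y B_{yx}^2=1$. Column sums give $\sum_y B_{yx}=1$ with $0\le B_{yx}\le1$. Together these force $\sum_y B_{yx}(1-B_{yx})=0$, so each column is a standard basis vector. Orthogonality of columns then makes $B$ a permutation matrix. Finally, $|\langle y|U|x\rangle|^2\in\{0,1\}$ in a permutation pattern means $U$ is a unitary monomial.

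The main obstacle is the absence of adjoint closure, since the paper stresses that $U^\dagger$ need not lie in $\Gamma$. The recurrence step is the key device that bypasses it. I would write that step carefully, checking that the relevant limit identity involves only products of elements of $\Gamma$, namely the powers $U^n$, together with continuity of $\Born$.
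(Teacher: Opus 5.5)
Your proof is correct, and it shares the paper's skeleton: induction to $\Born(U^n)=\Born(U)^n$, recurrence $U^{n_k}\to\openone$ from simultaneous Diophantine approximation, continuity of $\Born$, and the final column argument that an orthogonal doubly stochastic matrix is a permutation. The difference lies in how you obtain $B^{\mathsf T}B=\openone$.

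The paper never uses $U^\dagger$. Its lemma on recurrent bistochastic powers first shows that every bistochastic matrix is a Euclidean contraction (Jensen's inequality applied row by row). It then argues that $\|v\|_2=\lim_j\|A^{m_j}v\|_2\le\|Av\|_2\le\|v\|_2$, so $A$ is an isometry, and hence $A^{\mathsf T}A=\openone$. You instead note that $U^{n_k-1}=U^{n_k}U^\dagger\to U^\dagger$, use $\Born(U^\dagger)=\Born(U)^{\mathsf T}$, and pass to the limit in $B^{n_k}=B^{n_k-1}B$.

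Your route is slightly more elementary, since it needs no contraction inequality. It also makes clear why adjoint closure is unnecessary: the composition law extends by continuity to the closure of $\Gamma$, which contains $U^\dagger$. This recovers the motivating inverse identity \eqref{eq:story-inverse-composition} exactly. In exchange, the paper's lemma is a purely classical statement valid for any bistochastic matrix whose powers recur to $\openone$. It does not depend on the matrix being the Born kernel of a unitary, nor on the transpose relation for adjoints.
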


The proof is given in
Appendix~\ref{app:semigroup-proof}.  Every such map has the form
$\Delta\circ\Pi$, where
\begin{equation}
 \Delta=\operatorname{diag}
 \left(e^{-\ii\theta_1},e^{-\ii\theta_2},\ldots,e^{-\ii\theta_d}\right),
 \qquad \theta_1,\ldots,\theta_d\in\mathbb R,
\end{equation}
is diagonal unitary and $\Pi$ is a permutation matrix.  The diagonal phases
are gauge symmetries of the Born-kernel map:
\begin{equation}
 \Born_{\C}(\Delta)=\openone,
 \qquad
 \Born_{\C}(\Delta\circ\Pi)=\Born_{\C}(\Pi)=\Pi.
 \label{eq:born-permutation-machine}
\end{equation}
\begingroup
\widowpenalty=10000
  Birkhoff's theorem places this result in its convex
setting: every doubly stochastic matrix is a convex combination of permutation
matrices~\cite{Birkhoff1946}.
At the unitary level, the conclusion becomes sharper:
exact Born-kernel composition collapses the underlying unitary family itself
to unitary monomials.
\par\endgroup

\begin{sloppypar}

The semigroup classification has an immediate finite-state
interpretation: universal Born-kernel composition selects unitary monomials,
whose Born images are reversible deterministic state machines.
The context outcomes supply the machine's finite states, and an output map
assigns observable symbols to those states.  Choosing a musical output alphabet
gives this same compositional structure a musical interpretation.  A compatible
family of their permutation kernels supplies the multitime genealogy; the full
Moore-machine corollary is stated in
Appendix~\ref{app:state-machine-corollary}.
\end{sloppypar}

\section{Composition conditions for prescribed sequences}
\label{sec:three-questions}

\begin{sloppypar}
\noindent

The classification of Born-consistent
unitary semigroups leaves a different problem open: which
prescribed pairs $U,V$ satisfy the composition identity, without requiring it
for every concatenation of those operations?  More generally, which prescribed
forward sequences give the same final outcome probabilities with
and without the intermediate context checks?
\end{sloppypar}

For a prescribed sequence $U_1,\ldots,U_k$, the test
becomes the
boundary matrix
identity
\begin{equation}
 \Born_{\C}(U_k\circ\cdots\circ U_1)
 =\Born_{\C}(U_k)\circ\cdots\circ\Born_{\C}(U_1).
 \label{eq:prescribed-composition-main}
\end{equation}
It compares endpoint statistics of a coherent passage with the statistics
obtained by recording the same fixed context at
its intermediate boundaries.  Taking the record
generally changes the quantum state; the question is whether it changes the
final distribution.  The path-sum identity, its aggregate-interference
criterion, and the formal separation of this theory from consistent histories are given in
Appendices~\ref{app:discrete-path-integral} and
\ref{app:not-consistent-histories}.

We say a Born kernel has \emph{full support} when every matrix entry is
strictly positive.  A full-support kernel is called
\emph{genuinely mixing}.  We give a genuinely mixing witness for which
this equality holds and then show that it can hold from one
initial boundary and fail
as soon as the same steps are restarted from a later one.
\paragraph{%
A genuinely mixing single-qubit Born-kernel genealogy.}
\label{sec:qubit-solutions}

Fix the computational context of one qubit and let
$\sigma_x,\sigma_y,\sigma_z$ be the Pauli
operators.  The current in Eq.~\eqref{eq:born-chapman-current} measures the
failure of a pair of unitary steps to compose over the two context outcomes.
Every column sums to zero: the current moves probability but does not create
it.  A consistent genealogy requires it to vanish at every required cut.

Write
\begin{equation}
 R_\sigma(\theta):=e^{-\ii\theta\sigma/2}.
 \label{eq:pauli-rotation}
\end{equation}
Here $\sigma\in\{\sigma_x,\sigma_y,\sigma_z\}$, and every qubit Born kernel
takes the form
\begin{equation}
 K(\lambda):=\frac12\begin{pmatrix}
 1+\lambda&1-\lambda\\
 1-\lambda&1+\lambda
 \end{pmatrix},
 \qquad -1\leq\lambda\leq1,
 \label{eq:qubit-kernel}
\end{equation}
Thus a qubit step is genuinely mixing exactly when
$-1<\lambda<1$.
These stochastic maps obey
\begin{equation}
 K(\lambda)\circ K(\mu)=K(\lambda\cdot\mu).
 \label{eq:qubit-kernel-product}
\end{equation}
For $\Born(U)=K(\mu)$ and $\Born(V)=K(\lambda)$, this product
equals $\Born(V\circ U)$ if and only if the diagram in
Fig.~\ref{fig:quantum-stochastic-composition} commutes, equivalently if and only if
$\nu_{\C}(V,U)=0$.

For the cross-axis family
$U=R_{\sigma_y}(\theta)$ and $V=R_{\sigma_x}(\phi)$,
\[
\Born(V\circ U)=K(\cos\phi\cdot\cos\theta)
=\Born(V)\circ\Born(U)
\]
for every $\theta,\phi$.  Thus
$\nu_{\C}(R_{\sigma_x}(\phi),R_{\sigma_y}(\theta))=0$ throughout the
full two-parameter family, with $\theta$ and $\phi$ independent.

We now choose from this family a genuinely mixing pair of
single-qubit unitary steps with distinct
individual Born kernels.  Put
\begin{equation}
 \begin{aligned}
 U&=R_{\sigma_y}\!\left(\frac{\pi}{3}\right)
 =\frac{\sqrt3}{2}\openone-\frac{\ii}{2}\sigma_y,\\
 V&=R_{\sigma_x}(\phi)
 =\sqrt{\frac23}\openone-\frac{\ii}{\sqrt3}\sigma_x,\qquad
 \cos\phi=\frac13.
 \end{aligned}
 \label{eq:qubit-distinct-pair}
\end{equation}
The Born kernels are
\begin{equation}
 \begin{aligned}
 \Born(U)&=K(1/2)
 =\begin{pmatrix}\frac34&\frac14\\[2pt]\frac14&\frac34\end{pmatrix},
 \\
 \Born(V)&=
 \;K(1/3)
 =\begin{pmatrix}\frac23&\frac13\\[2pt]\frac13&\frac23\end{pmatrix}.
 \end{aligned}
\label{eq:qubit-distinct-local-kernels}
\end{equation}
Nevertheless, the coherent two-step passage obeys their stochastic
composition law:
\begin{equation}
 \begin{aligned}
 \Born(V\circ U)
 &=\Born(V)\circ\Born(U)\\
 &=
 \;K(1/6)
 =\begin{pmatrix}\frac7{12}&\frac5{12}\\[2pt]\frac5{12}&\frac7{12}\end{pmatrix}.
 \end{aligned}
 \label{eq:qubit-distinct-composition}
\end{equation}
The computational context is the spectral context of
$\sigma_z$, this witness satisfies
$[U,\sigma_z]\ne0$, $[V,\sigma_z]\ne0$, and $[U,V]\ne0$, even though
$\nu_{\C}(V,U)=0$.
For each endpoint, the two contributing path amplitudes are both
nonzero but differ in phase by $\pm\pi/2$, causing the conjugate cross terms in
the Born kernel to cancel exactly.  The equality therefore reflects an
exact quantum cancellation rather than a monomial or one-path special case;
Appendix~\ref{app:qubit} gives the calculation and the continuous
$R_{\sigma_y}(\theta)$--$R_{\sigma_x}(\phi)$ family.

Beginning in $|0\rangle$, let $P_x=|x\rangle\langle x|$ and let the unread
computational-basis projective L\"uders channel be
$\Mc(\rho)=\sum_{x=0}^1P_x\rho P_x$; its outcome is discarded.
Figure~\ref{fig:qubit-two-routes} lines up the two routes at their common
boundaries.

\begin{figure}[!t]
\centering
\begingroup
\setlength{\fboxsep}{7pt}
\setlength{\fboxrule}{0.6pt}
\fbox{%
\resizebox{0.94\textwidth}{!}{%
\begin{tikzpicture}[>=latex,line width=0.65pt]
  \node[anchor=west,font=\bfseries] at (-7.10,2.18)
    {Endpoint-checked passage};
  \node (u0) at (-6.35,1.25)
    {$\begin{pmatrix}1&0\\0&0\end{pmatrix}$};
  \node (u1) at (-3.80,1.25)
    {$\begin{pmatrix}
      \frac34&\frac{\sqrt3}{4}\\
      \frac{\sqrt3}{4}&\frac14
    \end{pmatrix}$};
  \node (u2) at (2.70,1.25)
    {$\begin{pmatrix}
      {\color{pathzero}\frac7{12}}&\frac{\sqrt3}{4}+\frac{\ii\sqrt2}{6}\\
      \frac{\sqrt3}{4}-\frac{\ii\sqrt2}{6}&{\color{pathone}\frac5{12}}
    \end{pmatrix}$};
  \node (u3) at (6.35,1.25)
    {$\begin{pmatrix}
      {\color{pathzero}\frac7{12}}&0\\
      0&{\color{pathone}\frac5{12}}
    \end{pmatrix}$};
  \draw[->] (u0.east) -- node[above] {$U$} (u1.west);
  \draw[->] (u1.east) -- node[above] {$V$} (u2.west);
  \draw[->] (u2.east) -- node[above] {$\Mc$} (u3.west);

  \node[anchor=west,font=\bfseries] at (-7.10,0.10)
    {Stepwise-checked passage};
  \node (c0) at (-6.35,-0.82)
    {$\begin{pmatrix}1&0\\0&0\end{pmatrix}$};
  \node (c1) at (-3.80,-0.82)
    {$\begin{pmatrix}
      \frac34&\frac{\sqrt3}{4}\\
      \frac{\sqrt3}{4}&\frac14
    \end{pmatrix}$};
  \node (cm) at (-1.05,-0.82)
    {$\begin{pmatrix}\frac34&0\\0&\frac14\end{pmatrix}$};
  \node (c2) at (2.70,-0.82)
    {$\begin{pmatrix}
      {\color{pathzero}\frac7{12}}&\frac{\ii\sqrt2}{6}\\
      -\frac{\ii\sqrt2}{6}&{\color{pathone}\frac5{12}}
    \end{pmatrix}$};
  \node[font=\scriptsize] at (2.70,-1.62)
    {$\frac34VP_0V^\dagger+\frac14VP_1V^\dagger$};
  \node (c3) at (6.35,-0.82)
    {$\begin{pmatrix}
      {\color{pathzero}\frac7{12}}&0\\
      0&{\color{pathone}\frac5{12}}
    \end{pmatrix}$};
  \draw[->] (c0.east) -- node[above] {$U$} (c1.west);
  \draw[->] (c1.east) -- node[above] {$\Mc$} (cm.west);
  \draw[->] (cm.east) -- node[above] {$V$} (c2.west);
  \draw[->] (c2.east) -- node[above] {$\Mc$} (c3.west);

  \node at (0,-3.10) {$
    \begin{aligned}
      \Pr(\mathrm{end}=|0\rangle)&=\textcolor{pathzero}{
        \frac34\!\cdot\!\frac23+\frac14\!\cdot\!\frac13}
        =\frac12+\frac1{12}=\frac7{12}
        &&\textcolor{pathzero}{\scriptstyle(|0\rangle\to|0\rangle,\ |1\rangle\to|0\rangle)},\\[5pt]
      \Pr(\mathrm{end}=|1\rangle)&=\textcolor{pathone}{
        \frac34\!\cdot\!\frac13+\frac14\!\cdot\!\frac23}
        =\frac14+\frac16=\frac5{12}
        &&\textcolor{pathone}{\scriptstyle(|0\rangle\to|1\rangle,\ |1\rangle\to|1\rangle)}.
    \end{aligned}$};
\end{tikzpicture}%
}%
}
\endgroup
\caption{\captionlead{Endpoint
agreement for a pair of single-qubit unitary steps $(U,V)$.}  The unread
intermediate projective check changes the quantum state while preserving the
final context probabilities.  The two recorded branches ending in $|0\rangle$
contribute $1/2$ and $1/12$; those ending in $|1\rangle$ contribute $1/4$
and $1/6$.  For these unequal local kernels, the two endpoint
distributions are both $(7/12,5/12)$.}
\label{fig:qubit-two-routes}
\end{figure}

The intermediate check is a non-selective rank-one projective L\"uders
measurement in the computational ($\sigma_z$) context, equivalently
context-basis dephasing~\cite{Luders}.  Its outcome is
discarded and the same qubit then evolves under $V$.  It changes the
state from pure to mixed, with the coherent and stepwise states differing by
$\sqrt3\sigma_x/4$, yet leaves the endpoint marginal $(7/12,5/12)$ unchanged.
Thus two genuinely mixing prescribed steps with distinct Born kernels carry a
genealogy despite this projective measurement disturbance.
Appendices~\ref{app:qubit} and
\ref{app:qutrit} classify all pairs of
qubit unitary steps and all pairs of qutrit unitary steps with this property;
Appendix~\ref{app:qubit} derives the qubit witness from the general criterion.

Let $R$ denote the final qubit before the context measurement.
For the recorded entrance label $X$ and independent fair checking bit $S$,
write $I_c:=I(S;Y\mid X)$ and $I_q:=I(S;R\mid X)$.
The two protocols produce different density operators for $R$, but identical
context probabilities:
\[
 0=I_c\leq I_q\simeq0.2504\ \text{bits}.
\]
Thus the quantum state retains information about the intermediate check
that the prescribed measurement conceals.
Appendix~\ref{app:schedule-information} gives the bound and calculation.

\section[Composition at the quantum--classical divide]{%
\texorpdfstring{%
Composition at the quantum--classical divide}%
{Composition at the quantum--classical divide}}
\label{sec:composition-paradox}

We provide a framework and prove that faithful composition
from the initial boundary need not survive temporal restriction and fresh
preparation at every internal boundary.

Let $U_1,\ldots,U_L$ be a prescribed sequence of qubit unitaries.
For $1\leq a\leq b\leq L$, write
\begin{equation}
 U_{b:a}:=U_b\circ U_{b-1}\circ\cdots\circ U_a.
\label{eq:boundary-indexed-passage-main}
\end{equation}
for the contiguous passage containing precisely the steps
$U_a,\ldots,U_b$, with both endpoint steps included; in particular,
$U_{k:k}=U_k$.  Number the temporal
boundaries $0,\ldots,L$, so $U_k$ carries boundary $k-1$ to boundary $k$;
consequently, $U_{b:a}$ begins at boundary $a-1$ and ends at boundary $b$.
The boundary before $U_1$, boundary $0$, is the \emph{initial
boundary}.  Beginning
at the boundary immediately before $U_a$, namely boundary $a-1$, means
freshly preparing a $\C$-diagonal entrance state there and applying a
contiguous passage $U_{b:a}$; this is an internal boundary when
$2\leq a\leq L$, and taking $b=L$ gives the full prescribed suffix.  It is a
new preparation boundary, not a pause and resumption of the
earlier quantum state.  There are two distinct consistency requirements.

\begin{definition}[Prefix consistency]
A sequence is \emph{prefix-consistent} if
\begin{equation}
 \Born(U_{k:1})=\Born(U_k)\circ\cdots\circ\Born(U_1)
 \quad\text{for every }2\le k\le L.
 \label{eq:prefix}
\end{equation}
It is \emph{boundary-stable}, equivalently
subinterval-consistent, if
\begin{equation}
 \begin{aligned}
  \Born(U_{b:a})
  &=\Born(U_b)\circ\cdots\circ\Born(U_a),\\[-1pt]
  &\hspace{1em}\text{for every }1\le a\le b\le L.
 \end{aligned}
 \label{eq:all-intervals}
\end{equation}
Here ``\textit{subinterval}'' means an unbroken
chronological passage $U_{b:a}$.  A selection that skips an intermediate
prescribed step, such as $U_3\circ U_1$,
represents a different circuit,
not an additional subinterval condition.
\end{definition}
Prefix consistency is the $a=1$ case of boundary stability: it
compares each coherent prefix with its fully stepwise-checked realization from
the original boundary.  Boundary stability makes the
fixed-context sequential statistics Kolmogorov consistent under every
insertion or deletion of an intermediate context measurement, for every
context-diagonal entrance state; see Proposition~\ref{prop:schedule-consistency}.

\begin{theorem}[Qubit boundary separation]
\label{thm:qubit-paradox-main}
For a fixed qubit context:
\begin{enumerate}
 \item prefix-consistent sequences of arbitrary finite length exist with
 every step genuinely mixing;
 \item consistency may fail on certain contiguous
 subintervals.
\end{enumerate}
\end{theorem}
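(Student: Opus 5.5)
We will work in the Bloch (adjoint) picture. Write $O(W)\in SO(3)$ for the rotation induced by $W\in\U(2)$, so that $W\sigma_jW^\dagger=\sum_i O(W)_{ij}\sigma_i$, and put $\hat z=(0,0,1)$. From Eq.~\eqref{eq:qubit-kernel}, every qubit step has $\Born(W)=K(\lambda(W))$ with $\lambda(W)=2|\langle0|W|0\rangle|^2-1$, and this equals the $zz$ entry $O(W)_{zz}$. By Eq.~\eqref{eq:qubit-kernel-product}, the pair identity $\Born(V\circ U)=\Born(V)\circ\Born(U)$ is therefore $O(VU)_{zz}=O(V)_{zz}O(U)_{zz}$. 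Expanding the matrix product turns this into a single orthogonality condition:
\[
 O(V)_{zx}O(U)_{xz}+O(V)_{zy}O(U)_{yz}=0.
\]
In words, the equatorial part of the Bloch vector $O(U)\hat z$ must be orthogonal to the equatorial part of $O(V)^{\top}\hat z$. As a sanity check, this reproduces the cross-axis family $R_{\sigma_x}(\phi)$, $R_{\sigma_y}(\theta)$ quoted above: the first vector lies along $x$ and the second along $y$. Signs and conventions will be checked against Appendix~\ref{app:qubit}.

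For part~1 we use an inductive construction. Set $W_1=\openone$ and $W_k=U_{k-1:1}$. Let $v_k$ be the equatorial projection of $O(W_k)\hat z$. Choose $U_k=R_{n_k}(\theta_k)$ with unit axis $n_k$ in the $xy$-plane, parallel to $v_k$ (any equatorial axis if $v_k=0$), and angle $\theta_k\in(0,\pi)$. For such a rotation, the equatorial part of $O(U_k)^{\top}\hat z$ is proportional to $\sin\theta_k\,(\hat z\times n_k)$, which is orthogonal to $n_k$ and hence to $v_k$. The pair criterion applied to $(U_k,W_k)$ then gives $\Born(U_{k:1})=\Born(U_k)\circ\Born(W_k)$. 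By induction, $\Born(W_k)=\Born(U_{k-1})\circ\cdots\circ\Born(U_1)$, so Eq.~\eqref{eq:prefix} holds for every $k$. Since $\lambda(U_k)=\cos\theta_k\in(-1,1)$, each step is genuinely mixing, and $L$ is arbitrary.

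For part~2 it suffices to exhibit one failing contiguous subinterval. We take $L=3$ with $U_1=R_{\sigma_y}(\pi/3)$, $U_2=R_{\sigma_x}(\phi)$ with $\cos\phi=1/3$ (the witness of Eq.~\eqref{eq:qubit-distinct-pair}), and $U_3$ chosen by the recipe above. Then $v_3$ is the equatorial part of $O(U_2)O(U_1)\hat z$. This vector has $x$-component $\pm\sin(\pi/3)\neq0$, since $R_{\sigma_x}$ fixes the $x$-component, and a $y$-component $\pm\sin\phi\cos(\pi/3)\neq0$. Hence $n_3$ has both components nonzero. Now test the subinterval $U_{3:2}$, that is, the pair $(U_3,U_2)$ restarted at boundary~$1$. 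The equatorial part of $O(U_2)\hat z$ is $(0,\pm\sin\phi,0)$, and the equatorial part of $O(U_3)^{\top}\hat z$ is proportional to $\hat z\times n_3=(-n_{3,y},n_{3,x},0)$. Their inner product is $\pm n_{3,x}\sin\phi\sin\theta_3\neq0$, so the criterion fails and $\Born(U_{3:2})\neq\Born(U_3)\circ\Born(U_2)$, although the sequence is prefix-consistent. The same mechanism extends to any length, giving arbitrarily long prefix-consistent, genuinely mixing sequences that are not boundary-stable.

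The main obstacle is bookkeeping rather than any conceptual step. We must fix the sign and transpose conventions of the adjoint map carefully, and confirm that the equatorial part of $O(U_k)^{\top}\hat z$ is exactly perpendicular to the rotation axis. The latter follows by Rodrigues' formula: $O^{\top}\hat z=\cos\theta\,\hat z+\sin\theta\,(\hat z\times n)$ up to orientation, because $n\perp\hat z$. As a final check, we would compute the explicit $\lambda$ values to confirm a nonzero current $\nu_{\C}(U_3,U_2)$.
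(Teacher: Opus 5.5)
Your proof is correct. The conventions check out: $\lambda(W)=O(W)_{zz}$, and the rotation map is a homomorphism. The final test for part~2 is the sign-insensitive product $\pm n_{3,x}\sin\phi\sin\theta_3\neq0$, so the orientation bookkeeping you flag is harmless.

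For part~1 your construction is the paper's own, in geometric form. The appendix proposition on infinite qubit sequences takes $U_k=c_k\openone+\ii s_k(\cos\delta\,\sigma_x-\sin\delta\,\sigma_y)$, where $\delta$ is the relative phase of the two amplitudes of $U_{k-1:1}|0\rangle$. That is exactly a rotation about the equatorial axis collinear with the equatorial Bloch component of the prefix state. The paper certifies each step by the relative factor $\ii$ between the two path amplitudes, i.e.\ $\chi=0$. Your equatorial-orthogonality sum and $\chi$ are the same scalar, $\beta(V\circ U)-\beta(V)\beta(U)$.

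For part~2 your witness differs from the paper's. The paper uses the constant-kernel sequence of Corollary~\ref{cor:constant-kernel-infinity}. Its closed-form two-step polarization $\lambda^2-(1-\lambda^2)\cos\alpha_r$ shows failure after every internal restart $r\ge2$, with all one-step kernels identical. That is stronger than the statement requires. Your three-step example (the paper's pair witness plus a recipe-chosen $U_3$) exhibits one failing suffix with less computation, which is all that is claimed.

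An even shorter route exists. Apply part~1 with $L=3$ and then Theorem~\ref{thm:qubit-boundary-stability-app}: a boundary-stable qubit sequence has at most two genuinely mixing steps. Since single steps and prefixes compose, the failing interval must be $U_{3:2}$, with no computation at all.
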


Here the question is compatibility across preparation
boundaries:
even when every step is genuinely mixing, prefix consistency
does not imply boundary stability.

To isolate the distinct issue of temporal-boundary
instability, now consider the deliberately restricted
\emph{constant-amplitude infinite one-qubit sequence}
\begin{equation}
 U_1=R_{\sigma_y}\!\left(\frac{\pi}{3}\right),
 \qquad
 U_k=R_{\sigma_z}(\alpha_k)\circ
 R_{\sigma_x}\!\left(\frac{\pi}{3}\right),
 \qquad k\ge2,
 \label{eq:constant-infinity-notes}
\end{equation}
where $0<\alpha_k<\pi/2$ is fixed by
\begin{equation}
 \tan\alpha_k=\frac{\sqrt3}{2\sqrt{4^{k-1}-1}}.
 \label{eq:constant-infinity-phase}
\end{equation}
For its first three steps,
$\alpha_2=\arctan(1/2)$ and
$\alpha_3=\arctan(1/(2\sqrt5))$.  Its third prefix composes, but the
unchanged last two steps fail after an internal restart:
\begin{equation}
 \begin{aligned}
 \Born(U_3\circ U_2\circ U_1)
 &=K(1/8)=\Born(U_3)\circ\Born(U_2)\circ\Born(U_1),\\
 \Born(U_3\circ U_2)
 &=K\!\left(\frac{1-6/\sqrt5}{4}\right)
 \ne K(1/4)=\Born(U_3)\circ\Born(U_2).
 \end{aligned}
 \label{eq:qubit-third-step-main}
\end{equation}
Thus $\nu_{\C}(U_3,U_2)\ne0$.
Every step has the same genuinely mixing Born kernel,
\begin{equation}
 \Born(U_k)=K(1/2)
 =\begin{pmatrix}\frac34&\frac14\\[2pt]\frac14&\frac34\end{pmatrix},
 \label{eq:constant-infinity-local}
\end{equation}
The compensating phase changes with its position in the sequence.  For the first $\ell$-step
passage $U_{\ell:1}$, one obtains
\begin{equation}
 \Born(U_{\ell:1})=K(2^{-\ell})
 =\Born(U_\ell)\circ\cdots\circ\Born(U_1)
 \qquad\text{for every }\ell\ge1.
 \label{eq:constant-infinity-prefix}
\end{equation}
Thus the sequence is prefix-consistent at
every finite length $\ell\geq1$.

Yet restarting just before any $U_r$, $r\ge2$, fails already
over the next two steps:
\begin{equation}
 \Born(U_{r+1}\circ U_r)
 =K\!\left(\frac{1-3\cos\alpha_r}{4}\right)
 \ne K(1/4)
 =\Born(U_{r+1})\circ\Born(U_r).
 \label{eq:constant-infinity-cue-failure}
\end{equation}
These steps retain the same individual Born kernels; only the temporal
boundary has moved.  Appendix~\ref{app:qubit} proves the recurrence and its
sign-reversed
companion, whose local kernel is $K(-1/2)$.  It is not an inverse sequence.

For the first three steps, the
one-step kernels and composition identities
can be listed exhaustively.  Since $\Born(U_1)=K(1/2)$ is invertible, the

noncomposition of the suffix $U_3\circ U_2$ persists under right
composition with $\Born(U_1)$.

\begin{center}
\begingroup
\small
\renewcommand{\arraystretch}{1.55}
\begin{tabular}{p{0.46\linewidth}|p{0.46\linewidth}}
\hline
\multicolumn{2}{l}{Individual steps:
$\Born(U_1)=\Born(U_2)=\Born(U_3)=K(1/2)$.}\\
\hline
\textbf{Composition holds} & \textbf{Composition fails}\\
\hline
$\Born(U_2\circ U_1)=\Born(U_2)\circ\Born(U_1)$
& $\Born(U_3\circ U_2)\ne\Born(U_3)\circ\Born(U_2)$\\
$\Born(U_3\circ U_2\circ U_1)
 =\Born(U_3)\circ\Born(U_2\circ U_1)$
& $\Born(U_3\circ U_2\circ U_1)
 \ne\Born(U_3\circ U_2)\circ\Born(U_1)$\\
$\Born(U_3\circ U_2\circ U_1)
 =\Born(U_3)\circ\Born(U_2)\circ\Born(U_1)$
&
\end{tabular}
\endgroup
\end{center}

A check after $U_2$ and checks after both $U_1$ and $U_2$ preserve the
endpoint distribution, whereas a check only after $U_1$ changes it.  Adding
the second check restores an agreement that the first check alone destroys.
With a recorded uniform entrance and a fair bit selecting whether
to check the displayed cut, the two orphaned compositions give
$(\mathcal I(S;Y\mid X),\nu_{\C})\simeq(0.0834,0.6708)$ and
$(0.0204,0.3354)$, respectively, with information in bits per run.

The general bounds $\mathcal I(S;Y\mid X)\leq1$ bit and
$\nu_{\C}\leq1$ are proved in Appendices~\ref{app:schedule-information} and
\ref{app:bound}, respectively; the latter is sharp.

\noindent

Nothing here requires a retrocausal interpretation.  The endpoint law is
determined by the forward-ordered unitary evolution of the entire
prescribed passage,
and no later operation alters an earlier recorded event.  The issue is a
failure of faithful temporal composition: locally induced Born kernels need
not reproduce the coherent endpoint law after a fresh preparation at an
internal boundary or when prescribed readouts are inserted or removed.

The qubit construction provides the minimal-dimensional boundary-sensitive
witness.  The qutrit
analysis gives an exact condition for every
prescribed pair of qutrit unitary steps.

\begin{theorem}[Sharp full-support capacity]
\label{thm:full-support-capacity-main}
Let $d\geq2$ and let $U_1,\ldots,U_L$ be unitary operations on
$\mathbb C^d$.  If the sequence is boundary-stable, then at most $d$
one-step Born kernels $\Born(U_k)$ have full support.  Moreover, for every
prime $d$, there exists a boundary-stable $d$-step sequence in which every
one-step Born kernel has full support.
\end{theorem}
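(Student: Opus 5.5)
The upper bound will come from a Hilbert--Schmidt dimension count in the Schrödinger picture. The lower bound will come from mutually unbiased bases.

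\textbf{Setup.} Write $\Phi_k(\rho)=U_k\rho U_k^\dagger$ and $\Phi_{b:a}=\Phi_b\circ\cdots\circ\Phi_a$. Let $\Delta(\rho)=\sum_x P_x\rho P_x$ be the context dephasing, and let $\mathcal D_0$ be the $(d-1)$-dimensional real space of traceless $\C$-diagonal Hermitian matrices. On diagonal inputs, the kernel identity for an interval reads $\Delta\Phi_{b:a}\Delta=\Delta\Phi_b\Delta\cdots\Delta\Phi_a\Delta$.

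Applying boundary stability to both $[a,b]$ and $[a,b-1]$ and subtracting gives the key consequence
\[
 \Delta\,\Phi_{b:k+1}\,(1-\Delta)\,\Phi_{k:a}\,\Delta=0\qquad\text{for all }a\le k<b.
\]
In words, the off-diagonal part generated at any boundary, once propagated forward over any later stretch, has zero diagonal.

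\textbf{Orthogonal blocks.} For each step $k$, define the block
\[
 G_k:=\Phi_{L:k+1}\bigl((1-\Delta)\Phi_k(\mathcal D_0)\bigr),
\]
the coherence created by step $k$ from a fresh diagonal preparation, pushed to the final boundary. Also set $G_{L+1}:=\mathcal D_0$ at the final boundary.

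I would show that these blocks are pairwise Hilbert--Schmidt orthogonal.
\begin{itemize}
\item For $G_j$ against $\mathcal D_0$: this is exactly the displayed identity with $b=L$, since pairing with a diagonal matrix only sees the diagonal.
\item For $j<k$: because the maps are unitary, the pairing pulls back to boundary $k$ as $\Tr\bigl(\Phi_{k:j+1}(O_j)\,(1-\Delta)\Phi_k(D')\bigr)$, with $O_j\in(1-\Delta)\Phi_j(\mathcal D_0)$ and $D'\in\mathcal D_0$.
\item Expand $(1-\Delta)\Phi_k(D')$. The term $\Phi_k(D')$ pulls back to $\Tr(\Phi_{k-1:j+1}(O_j)D')$. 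The term $\Delta\Phi_k(D')$ pairs only with $\Delta\Phi_{k:j+1}(O_j)$.
\item Both vanish by the displayed identity.
\end{itemize}
All blocks are traceless, so they sit inside the $(d^2-1)$-dimensional space of traceless Hermitian matrices.

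\textbf{Injectivity from full support (main obstacle).} The step I expect to need the most care is showing that full support of $\Born(U_k)$ makes $(1-\Delta)\Phi_k$ injective on $\mathcal D_0$, so that $\dim G_k=d-1$.

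The argument I would try is this. Suppose $D\in\mathcal D_0$ is nonzero and $U_kDU_k^\dagger$ is diagonal. Then $U_k$ carries each eigenspace of $D$ onto a span of basis vectors. Because $D$ is traceless and nonzero, it has a proper nonzero eigenspace, and that eigenspace is itself a coordinate subspace. So $U_k$ maps some proper coordinate subspace onto a coordinate subspace, which forces zero entries in $U_k$ and contradicts full support.

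\textbf{Upper bound.} If $m$ steps have full support, the orthogonal blocks give $(m+1)(d-1)\le d^2-1$. Hence $m\le d$.

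\textbf{Construction for prime $d$.} For prime $d$ there exist $d+1$ mutually unbiased bases, for example Wootters--Fields. Realize them as unitaries $B_0=\openone,B_1,\ldots,B_d$ whose columns are the basis vectors, and put $U_k:=B_kB_{k-1}^\dagger$.

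Then $U_{b:a}=B_bB_{a-1}^\dagger$. Its entries are overlaps between vectors of two distinct bases in the family, so every entry has modulus $1/\sqrt d$. Hence $\Born(U_{b:a})=J/d$, where $J$ is the all-ones matrix. Since $(J/d)^n=J/d$, every interval kernel equals the product of its one-step kernels, so the sequence is boundary-stable. Every one-step kernel $J/d$ has full support.
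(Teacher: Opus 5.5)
The upper bound in your proposal is correct. The prime-dimensional construction, however, fails as written. You take $B_k$ with the basis vectors as columns and set $U_k:=B_kB_{k-1}^\dagger$. The interval unitary $U_{b:a}=B_bB_{a-1}^\dagger$ then has $(y,x)$ entry $\sum_z(B_b)_{yz}\overline{(B_{a-1})_{xz}}$. That is an inner product of \emph{rows} of $B_b$ and $B_{a-1}$, not an overlap $\langle\psi^{(b)}_y|\psi^{(a-1)}_x\rangle$ of basis vectors, and rows of unitaries whose columns are mutually unbiased need not be mutually unbiased. Only the prefixes ($a=1$, where $B_0=\openone$) come out right.

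Concretely, take the qubit bases $B_1=H$ and $B_2=SH$ with $S=\operatorname{diag}(1,\ii)$, whose columns are the $\sigma_x$ and $\sigma_y$ eigenvectors. Then $U_2=B_2B_1^\dagger=S$, whose Born kernel is $\openone$. For the Wootters--Fields bases you cite in odd prime $d$, realized naturally as $B_a=\operatorname{diag}(\omega^{ax^2})F_d$ with $\omega=e^{2\pi\ii/d}$, every $U_k$ with $k\geq2$ is diagonal. So only $U_1$ has full support.

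The repair is to move the adjoint: $U_k:=B_k^\dagger B_{k-1}$. This telescopes to $U_{b:a}=B_b^\dagger B_{a-1}$, whose $(y,x)$ entry is exactly $\langle\psi^{(b)}_y|\psi^{(a-1)}_x\rangle$. It has modulus $1/\sqrt d$ since $b\neq a-1$. This is the paper's choice $U_k=V_{k+1}^\dagger V_k$, and your idempotence argument for $J/d$ then finishes the proof.

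Your upper bound is essentially the time-reversed mirror of the paper's argument. The paper works in the Heisenberg picture anchored at boundary $0$, with frames $F_k(v)=U_{k:1}^\dagger\operatorname{Diag}(v)U_{k:1}$. Its innovations $E_k=F_k-F_{k-1}T_k^*$ are mutually orthogonal and orthogonal to $\operatorname{ran}F_0$. Conjugated to boundary $k-1$, $E_k(v)$ becomes $(1-\Delta)(U_k^\dagger\operatorname{Diag}(v)U_k)$, the coherence that step $k$ can \emph{read}. Your $G_k$ is instead the coherence step $k$ \emph{creates}, pushed to boundary $L$, and your extra block $\mathcal D_0$ sits at the final rather than the initial boundary.

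The paper obtains $\rank E_k=d-1$ from $E_k^*E_k=\openone-T_kT_k^*$ and Perron--Frobenius applied to $\Born(U_k)\Born(U_k)^{\mathsf T}$. Your eigenspace argument is more elementary: it needs only that $U_k$ maps no proper coordinate subspace onto a coordinate subspace. The paper's Gram identity additionally gives $\|E_k\|_{\mathrm F}^2=d\,\ell_2(\Born(U_k))$, which drives its linear-entropy budget.

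One small correction to your setup: the instances of your key identity that you actually use have $k=a$. They come from the intervals $[a,b]$ and $[a+1,b]$, not $[a,b-1]$, but they do follow from boundary stability.
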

The
proofs are given in Appendices~\ref{app:qutrit} and
\ref{app:prime-count}.

\section[Quantum rhythm and transition prediction complexity]{%
\texorpdfstring{%
Quantum rhythm and transition prediction complexity}%
{Quantum rhythm and transition prediction complexity}}
\label{sec:quantum-music}

A musical composition combines a sequence of notes with its organization in
time. The state-machine connection provides a foundation for these two choices:
in the quantum model, unitary operations are notes and temporal boundaries are
cues. An output map assigns sounds to designated readout-boundary/outcome
pairs~\cite{Miranda,MirandaThomasItaborai}. Sounds are classical, but musical
rules need not be.

A prescribed operation sequence defines a score, while a readout schedule
determines when outcomes are recorded and sounded. With the current sound held
between readouts, the schedule specifies a rhythm. Born-kernel composition
determines which rhythms preserve the score's endpoint distribution.

\paragraph{Qubit rhythm laws.}
For an $L$-step score, choose a beat duration $\Delta>0$. A rhythm is specified
by an ordered tuple $\tau=(\ell_1,\ldots,\ell_m)$ of positive integers summing
to $L$, called an ordered composition of $L$. Its entries specify successive
hold durations $\ell_1\Delta,\ldots,\ell_m\Delta$, in chronological order from
the entrance cue to the endpoint. The initial sound, specified separately from
the quantum input state, is held from the entrance cue until the first readout.
It may continue from the preceding passage even when the quantum state is
freshly prepared. Put $t_0=0$ and
$t_j=\sum_{r=1}^{j}\ell_r$; the schedule reads out at boundaries
$t_1,\ldots,t_m=L$.
Its induced endpoint kernel is
\begin{equation}
 \mathsf K_\tau(U_1,\ldots,U_L)
 :=\Born_{\C}(U_{t_m:t_{m-1}+1})\circ\cdots\circ
   \Born_{\C}(U_{t_1:1}).
 \label{eq:born-rhythm-schedule-kernel-main}
\end{equation}

A rhythm $\tau$ is admissible when $\mathsf K_\tau=\Born_{\C}(U_{L:1})$.
We denote the set of admissible rhythms by $\mathscr R(U_1,\ldots,U_L)$.

\Needspace{0.62\textheight}
For the single-qubit passage in
Eqs.~\eqref{eq:constant-infinity-notes}--\eqref{eq:constant-infinity-cue-failure},
the admissible rhythms are:
\begin{center}
\resizebox{\textwidth}{!}{\begin{tikzpicture}[x=1cm,y=1cm,font=\small]
\node[anchor=west,font=\bfseries\large] at (-1.0,3.45)
  {Admissible rhythms of the qubit passage};
\node[anchor=west] at (-1.0,2.86)
  {$U_1=R_{\sigma_y}(\pi/3),\qquad
    U_k=R_{\sigma_z}(\alpha_k)\circ R_{\sigma_x}(\pi/3),\quad k=2,3,$};
\node[anchor=west] at (-1.0,2.34)
  {$\alpha_2=\arctan(1/2),\qquad
    \alpha_3=\arctan\!\bigl(1/(2\sqrt5)\bigr).$};

\draw[thin,->] (0,1.38)--(12.5,1.38);
\node[anchor=east,font=\footnotesize] at (-0.15,1.38) {Time};
\foreach \x/\label in {0/0,4/\Delta,8/2\Delta,12/3\Delta}{
  \draw[thin] (\x,1.31)--(\x,1.45);
  \node[above,font=\footnotesize] at (\x,1.46) {$\label$};
}

\foreach \y/\rhythm in {0/{(3)},-2.65/{(2,1)},-5.3/{(1,1,1)}}{
  \draw[thin] (0,\y) rectangle (12,{\y+0.7});
  \node[anchor=east,font=\bfseries] at (-0.3,{\y+0.35}) {$\rhythm$};
  \foreach \x/\k in {2/1,6/2,10/3}{
    \node at (\x,{\y+0.35}) {$U_{\k}$};
  }
}

\node at (6,0.95) {$3\Delta$};
\fill (12,0) circle (2pt);
\node[below=3pt] at (12,0) {$\mathcal M_{\mathcal C}$};
\node[anchor=west] at (0,-1.0)
  {$\mathsf K_{(3)}=\mathsf B_{\mathcal C}(U_3\circ U_2\circ U_1)$};

\draw[thin] (8,-2.65)--(8,-1.95);
\node at (4,-1.70) {$2\Delta$};
\node at (10,-1.70) {$\Delta$};
\foreach \x in {8,12}{
  \fill (\x,-2.65) circle (2pt);
  \node[below=3pt] at (\x,-2.65) {$\mathcal M_{\mathcal C}$};
}
\node[anchor=west] at (0,-3.65)
  {$\mathsf K_{(2,1)}=\mathsf B_{\mathcal C}(U_3)\circ
    \mathsf B_{\mathcal C}(U_2\circ U_1)$};

\foreach \x in {4,8}{\draw[thin] (\x,-5.3)--(\x,-4.6);}
\foreach \x in {2,6,10}{\node at (\x,-4.35) {$\Delta$};}
\foreach \x in {4,8,12}{
  \fill (\x,-5.3) circle (2pt);
  \node[below=3pt] at (\x,-5.3) {$\mathcal M_{\mathcal C}$};
}
\node[anchor=west] at (0,-6.3)
  {$\mathsf K_{(1,1,1)}=\mathsf B_{\mathcal C}(U_3)\circ
    \mathsf B_{\mathcal C}(U_2)\circ\mathsf B_{\mathcal C}(U_1)$};

\node[anchor=west,font=\footnotesize] at (-1.0,-7.15)
  {A solid dot marks a readout $\mathcal M_{\mathcal C}$; block widths give the hold durations.};
\node[anchor=west,font=\footnotesize] at (-1.0,-7.64)
  {All three kernels equal $\mathsf B_{\mathcal C}(U_3\circ U_2\circ U_1)$,
  meaning that all three schedules give the same final sound distribution.};
\end{tikzpicture}
}
\end{center}
Evaluating Eq.~\eqref{eq:born-rhythm-schedule-kernel-main} gives the following
kernel comparison:
\begin{equation}
 \begin{aligned}
  \text{\bfseries Endpoint-faithful schedules}\quad&\\[-2pt]
  \underbrace{\Born_{\C}(U_3\circ U_2\circ U_1)}_{(3)}
  &=
  \underbrace{\Born_{\C}(U_3)\circ
  \Born_{\C}(U_2\circ U_1)}_{(2,1)}
  \\
  &=
  \underbrace{\Born_{\C}(U_3)\circ
  \Born_{\C}(U_2)\circ
  \Born_{\C}(U_1)}_{(1,1,1)},
  \\[2pt]
  \text{\bfseries Excluded schedule}\quad&\\[-2pt]
  \Born_{\C}(U_3\circ U_2\circ U_1)
  &\ne
  \underbrace{\Born_{\C}(U_3\circ U_2)\circ
  \Born_{\C}(U_1)}_{(1,2)}.
 \end{aligned}
\label{eq:music-composition-main}
\end{equation}
Every tuple in Eq.~\eqref{eq:music-composition-main} is already in
chronological musical order; the Born factors appear in reverse textual order
because operator composition acts from right to left.
Therefore, for the first three steps of the single-qubit passage in
Eqs.~\eqref{eq:constant-infinity-notes} through
\eqref{eq:constant-infinity-cue-failure}, the set of admissible rhythms for
this sequence is
\begin{equation}
 \mathscr R(U_1,U_2,U_3)=
 \{(3),(2,1),(1,1,1)\}.
 \label{eq:qubit-born-rhythm}
\end{equation}
  This is not a restriction on
all qubit scores.  Membership in
this set is controlled by relative phases across composite blocks.
The admissible rhythm $(2,1)$ reads out after $U_2$ and then $U_3$, whereas
the excluded schedule $(1,2)$ would read out after $U_1$ and then $U_3$.
The latter leaves $U_3\circ U_2$ as a coherent block beginning at an internal
cue, where Born-kernel composition fails
[Eq.~\eqref{eq:constant-infinity-cue-failure}]. Removing the readout after
$U_2$ from $(1,1,1)$ therefore gives the inadmissible schedule $(1,2)$;
removing the remaining intermediate readout restores admissibility by giving
$(3)$. For this score, reading out after every operation preserves the endpoint
distribution of every prefix from the entrance cue. Restarting before $U_2$
means freshly preparing a context-basis input there and applying $U_2$
followed by $U_3$. A readout after $U_2$ then changes the final readout
distribution after $U_3$. This is the quantum music paradox.

To sound the single-qubit passage, assign readout outcome $0$ to the pitch
$\mathrm E_2$ and outcome $1$ to the neighboring pitch $\mathrm F_2$.
This two-pitch realization, called the \emph{Qubit Ostinato}, is described in
Appendix~\ref{app:qubit-ostinato}.
The three faithful schedules
articulate the same total span as one three-beat hold, a two-beat hold followed
by a one-beat hold, or three one-beat holds;
each enabled intermediate readout starts the assigned sound anew, even when
its pitch is unchanged.  Schedules outside $\mathscr R$ are
physically possible but fail the Born-kernel composition condition.

Born-kernel composition therefore
determines the admissible rhythms of the quantum passage.
Intermediate outcomes may be retained and sounded:
consistency concerns the terminal marginal, and marginalizing a record does
not undo the measurement that created it.  This supplies a concrete
realization of Xenakis's ``\textit{minimum of logical constraints necessary
for the construction of a musical process}''~\cite[p.~16]{Xenakis1992}.
As demonstrated below, one measurement of a selector qutrit chooses among
exactly these three faithful schedules, so the rhythmic subdivision becomes a
quantum-circuit output. Allowing arbitrary polynomial-size selector circuits
with two musical output labels gives the prediction problem in
Definition~\ref{def:qmp-main}, which is \PromiseBQP-complete by
Theorem~\ref{thm:qmp-promisebqp-main}.

\paragraph{A compositional entangling score.}
Boundary stability means that every readout schedule is admissible on every
contiguous passage of the score, for every context-basis input freshly
prepared at its entrance (Proposition~\ref{prop:schedule-consistency}).
Appendix~\ref{app:walsh} gives an analytic three-step two-qubit family
$U_1,U_2,U_3$ for which, at every temporal cut,
\begin{equation}
 \begin{aligned}
  \Born_{\C}(U_{b:a})
  &=\Born_{\C}(U_b)\circ\cdots\circ\Born_{\C}(U_a),\\[-1pt]
  &\hspace{1em}(1\leq a\leq b\leq3).
 \end{aligned}
 \label{eq:walsh-identities-main}
\end{equation}
At $\xi=\eta=1/6$ and $\zeta=0$, every entry of every step is nonzero,
all three steps are nonproduct perfect entanglers, and several nonzero paths
reach every endpoint.  Their interference terms do not vanish separately:
Eq.~\eqref{eq:walsh-identities-main} holds by aggregate cancellation.

\paragraph{A realization of the qubit-defined rhythms.}
Fix $\lambda=2^{-1/3}$, set $\theta=\arccos\lambda$, and define
\[
 \begin{aligned}
  u_1&=R_{\sigma_y}(\theta),\\
  u_k&=R_{\sigma_z}(\alpha_k)\circ R_{\sigma_x}(\theta),
       &&k=2,3,\\
  \tan\alpha_k
  &=\frac{\lambda^{k-1}\sqrt{1-\lambda^2}}
          {\sqrt{1-\lambda^{2(k-1)}}},
       &&k=2,3.
 \end{aligned}
\]
Here $0<\alpha_k<\pi/2$ fixes the phase branch. These are the first three
single-qubit operations in Corollary~\ref{cor:constant-kernel-infinity}.
Put
$A_k=u_k\otimes u_k$, let
$C_Z:=\operatorname{diag}(1,1,1,-1)$, and define the displayed two-qubit data
steps by
\[
 U_1=C_Z\circ A_1,
 \qquad U_2=A_2\circ C_Z,
 \qquad U_3=C_Z\circ A_3.
\]
The $C_Z$ factors either cancel between adjacent steps or remain as
computational-basis phases at a segment boundary, so they do not change the
relevant Born kernels.  Each $U_k$ has full amplitude support and is locally
equivalent to $C_Z$, hence is a perfect entangler.  Prepare $|00\rangle$ on
each run and use the computational-basis instrument whose non-selective action
is
$\Mc(\rho)=\sum_{x=0}^{3}|x\rangle\!\langle x|\rho|x\rangle\!\langle x|$.
The quantum state after outcome $x$ is $|x\rangle$.
The classical output map assigns a pitch using both the outcome and the
readout boundary; rows $Y_1,Y_2,Y_{\rm end}$ refer to readouts after
$U_1,U_2,U_3$, respectively:
\[
\begin{array}{c|cccc}
 &00&01&10&11\\ \hline
Y_1&\mathrm E_4&\mathrm G_4&\mathrm C_4&\mathrm C_5\\
Y_2&\mathrm G_4&\mathrm C_4&\mathrm E_4&\mathrm C_5\\
Y_{\rm end}&\mathrm C_4&\mathrm E_4&\mathrm G_4&\mathrm C_5
\end{array}
\]
For the fully checked rhythm $(1,1,1)$, the most probable pitches at the
successive readouts are $\mathrm E_4,\mathrm G_4,\mathrm C_4$.
All possible output pitches belong to the C-major triad, consisting of
C, E, and G.
\begin{equation}
 \begin{aligned}
  \Born_{\C}(U_3\circ U_2\circ U_1)
  &=\Born_{\C}(U_3)\circ
    \Born_{\C}(U_2\circ U_1)\\
  &=\Born_{\C}(U_3)\circ\Born_{\C}(U_2)
    \circ\Born_{\C}(U_1)
    =[K(1/2)]^{\otimes2},\\
  \Pr(Y_{\rm end}\mid X=00)
  &=\frac1{16}(9,3,3,1).
 \end{aligned}
 \label{eq:audible-internal-law-main}
\end{equation}

Thus the three selected readout rhythms have the same visibly biased
endpoint law.
Figure~\ref{fig:emulated-quantum-score}(c) also plots the excluded
$(1,2)$ endpoint law from $|00\rangle$, approximately
$(0.4032,0.2318,0.2318,0.1332)$.  For a recorded uniform entrance and a fair
bit selecting the coherent or $(1,2)$ protocol, the corresponding
schedule-information/current-size pair is
$(\mathcal I(S;Y_{\rm end}\mid X),
\nu_{\C}(U_3\circ U_2,U_1))\simeq(0.0223,0.2139)$, with information in bits
per run.

\Needspace{14\baselineskip}

Born-kernel composition determines the three admissible rhythms of this
passage. A separate qutrit controller selects among them, with selection
probabilities given by its own Born kernel.
Before each data
pass, apply
$V^{\rm sel}$ to the qutrit prepared in its preceding observed state and
measure it once, beginning with input $S=0$.  More generally,
$V^{\rm sel}(m)$ may be compiled from the prior audible history $m$, allowing
previously observed sounds to control the selector between passages. In the
selector basis $|0\rangle,|1\rangle,|2\rangle$, let $J_3$ be the all-ones
matrix and define the displayed selector by
\[
 \begin{gathered}
  P_3|s\rangle=|(s+1)\bmod3\rangle,\qquad
  a=\frac{-1+\ii\sqrt{55}}8,\\
  V^{\rm sel}=aP_3+\frac14(J_3-P_3).
 \end{gathered}
\]
Its Born kernel is
\begin{equation}
 \begin{gathered}
  \Born_{\C_s}(V^{\rm sel})
  =\frac1{16}\begin{pmatrix}1&1&14\\14&1&1\\1&14&1\end{pmatrix},
  \qquad \Pr\!\bigl(S_{r+1}=(S_r+1)\bmod 3\bigr)=\frac78.
 \end{gathered}
 \label{eq:score-selector-law-main}
\end{equation}
All three selected rhythms have the same endpoint kernel. Changing their
selection probabilities therefore changes the distribution of rhythms while
preserving the endpoint distribution for every context-basis input. This
remains true when the selector depends on earlier sound history, provided it
selects a rhythm before the freshly prepared data pass.

For a proposition $P$, let $\Mc^{[P]}$ denote $\Mc$ when $P$ is true
and $\openone$ otherwise.  The measured value $S\in\{0,1,2\}$ controls the data
pass
\begin{equation}
 \resizebox{0.92\linewidth}{!}{\begingroup
\color{black}
\newsavebox{\selectorMicBox}
\sbox{\selectorMicBox}{%
 \begin{tikzpicture}[x=1.65ex,y=1.65ex,baseline=-0.40ex,
   line width=0.6pt,line cap=round,line join=round]
  \draw[rounded corners=0.22ex] (-0.24,0.10) rectangle (0.24,0.92);
  \draw (-0.43,0.48) arc[start angle=180,end angle=360,
    x radius=0.43,y radius=0.46];
  \draw (0,0.02)--(0,-0.28);
  \draw (-0.27,-0.28)--(0.27,-0.28);
 \end{tikzpicture}%
}
\newsavebox{\selectorSpeakerBox}
\sbox{\selectorSpeakerBox}{%
 \begin{tikzpicture}[x=1.2ex,y=1.2ex,baseline=-0.45ex,
   line width=0.5pt,line cap=round,line join=round]
  \fill (0,0.22)--(0.28,0.22)--(0.68,0.55)--(0.68,-0.55)
    --(0.28,-0.22)--(0,-0.22)--cycle;
  \draw (0.86,0.34) to[out=-35,in=35] (0.86,-0.34);
  \draw (1.05,0.56) to[out=-35,in=35] (1.05,-0.56);
 \end{tikzpicture}%
}

\begin{tikzpicture}[
 line cap=round,line join=round,
 gate/.style={draw,fill=white,minimum width=0.76cm,minimum height=0.68cm,
              inner sep=1pt},
 readout/.style={draw,fill=white,minimum width=0.72cm,minimum height=0.68cm,
                 inner sep=0pt},
 branch/.style={draw,fill=white,minimum width=1.34cm,minimum height=0.78cm,
                align=center,inner sep=2pt}
]
  \def\selY{1.68}
  \def\dataY{0}

  \node[anchor=east] at (-0.10,\selY) {$s$};
  \draw (-0.02,\selY) -- (0.55,\selY);
  \draw[line width=0.52pt] (0.12,{\selY-0.08}) -- (0.23,{\selY+0.08});
  \node[font=\scriptsize,anchor=south] at (0.18,{\selY+0.10}) {$3$};
  \node[gate] (selector) at (1.10,\selY) {$V^{\rm sel}$};
  \node[inner sep=0pt] (historymic) at (1.46,2.88)
    {\usebox{\selectorMicBox}};
  \draw[->,double distance=0.65pt,line width=0.25pt]
    (historymic.south west) -- (selector.north);
  \draw (selector.east) -- (1.69,\selY);
  \node[readout] (sread) at (2.05,\selY) {};
  \draw ([xshift=-0.22cm,yshift=-0.10cm]sread.center)
    .. controls ([yshift=0.17cm]sread.center) ..
    ([xshift=0.22cm,yshift=-0.10cm]sread.center);
  \draw (sread.center) -- ([xshift=0.16cm,yshift=0.15cm]sread.center);
  \node[font=\scriptsize,anchor=south] at ([yshift=0.04cm]sread.north) {$S$};
  \node[font=\scriptsize,align=center] at (1.10,0.94)
    {$0\to1\to2\to0$\\$\Pr=7/8$};

  \node[anchor=east] at (-0.10,\dataY) {$D:|00\rangle$};
  \draw[line width=0.52pt] (-0.02,\dataY) -- (10.08,\dataY);
  \draw[line width=0.52pt] (0.16,-0.08) -- (0.27,0.08);
  \node[font=\scriptsize,anchor=south] at (0.22,0.10) {$2$};
  \node[gate] (u1) at (2.78,\dataY) {$U_1$};
  \node[branch] (choiceone) at (4.08,\dataY) {$\Mc^{[S=2]}$};
  \node[gate] (u2) at (5.42,\dataY) {$U_2$};
  \node[branch] (choicetwo) at (6.78,\dataY) {$\Mc^{[S\ne1]}$};
  \node[gate] (u3) at (8.17,\dataY) {$U_3$};
  \node[readout] (yread) at (9.48,\dataY) {};
  \draw ([xshift=-0.22cm,yshift=-0.10cm]yread.center)
    .. controls ([yshift=0.17cm]yread.center) ..
    ([xshift=0.22cm,yshift=-0.10cm]yread.center);
  \draw (yread.center) -- ([xshift=0.16cm,yshift=0.15cm]yread.center);
  \node[font=\scriptsize,anchor=south] at ([yshift=0.04cm]yread.north) {$\Mc$};

  \draw[double distance=0.65pt,line width=0.25pt]
    (sread.east) -- (6.78,\selY);
  \fill (4.08,\selY) circle (1.35pt);
  \fill (6.78,\selY) circle (1.35pt);
  \draw[->,double distance=0.65pt,line width=0.25pt]
    (4.08,\selY) -- (choiceone.north);
  \draw[->,double distance=0.65pt,line width=0.25pt]
    (6.78,\selY) -- (choicetwo.north);
  \draw[->,double distance=0.65pt,line width=0.25pt]
    (choiceone.south) -- (4.08,-0.72) -- (4.32,-0.72);
  \node[font=\scriptsize,anchor=east] at (3.85,-0.72) {$Y_1$};
  \node[anchor=west,inner sep=0pt] at (4.48,-0.72)
    {\usebox{\selectorSpeakerBox}};
  \draw[->,double distance=0.65pt,line width=0.25pt]
    (choicetwo.south) -- (6.78,-0.72) -- (7.02,-0.72);
  \node[font=\scriptsize,anchor=east] at (6.55,-0.72) {$Y_2$};
  \node[anchor=west,inner sep=0pt] at (7.18,-0.72)
    {\usebox{\selectorSpeakerBox}};
  \draw[->,double distance=0.65pt,line width=0.25pt]
    (yread.east) -- (10.08,\dataY);
  \node[anchor=west,inner sep=0pt] at (10.25,\dataY)
    {\usebox{\selectorSpeakerBox}};
  \node[font=\scriptsize,anchor=north,align=center] at (9.48,-0.50)
    {$Y_{\rm end}$};
\end{tikzpicture}
\endgroup
}
 \label{eq:qutrit-schedule-control-main}
\end{equation}
\begingroup
\color{black}
\definecolor{scorelatest}{HTML}{176345}
\definecolor{scoretrail}{HTML}{4C806D}
\gdef\scoreEnsembleMinEndBeats{8}
\gdef\scoreEnsembleMaxEndBeats{16}
\gdef\scoreFrameBeats{17}
\gdef\scoreFrameHalfBeats{34}
\gdef\scoreBarlineBeats{17}
\gdef\scoreBarlineHalfBeats{34}
\gdef\scoreMainEndpointBeats{15}
\gdef\scoreMainEndpointHalfBeats{30}
\gdef\scoreDisplayedEndHalfBeats{32}
\gdef\scoreDisplayedEndBeats{16}
\def\drawdisplayedscore{%
  \drawscorenote{3}{5.10}{3}{$\mathrm C_4$}%
  \drawscorenote{7}{5.10}{1}{$\mathrm C_4$}%
  \drawscorenote{9}{5.30}{1}{$\mathrm E_4$}%
  \drawscorenote{11}{5.50}{1}{$\mathrm G_4$}%
  \drawscorenote{14}{5.10}{2}{$\mathrm C_4$}%
  \drawscorenote{17}{5.50}{1}{$\mathrm G_4$}%
  \drawscorenote{21}{5.10}{3}{$\mathrm C_4$}%
  \drawscorenote{25}{5.10}{1}{$\mathrm C_4$}%
  \drawscorenote{27}{5.30}{1}{$\mathrm E_4$}%
  \drawscorenote{29}{5.50}{1}{$\mathrm G_4$}%
  \drawscoreendpoint{30}{5.10}{$\mathrm C_4$}{mandatory $Y_{\rm end}$}%
  \drawscorenote{31}{5.10}{1}{$\mathrm C_4$}%
  \drawscorerest{33}%
}
\def\drawdisplayedpath{%
  \draw[draw=scorelatest,line width=1.12pt,
      preaction={draw=white,opacity=0.90,line width=2.30pt}]
    (0,0) -- ({6-\scoreRamp},0) -- ({6+\scoreRamp},0) -- ({8-\scoreRamp},0) -- ({8+\scoreRamp},1) -- ({10-\scoreRamp},1) -- ({10+\scoreRamp},2) -- ({12-\scoreRamp},2) -- ({12+\scoreRamp},0) -- ({16-\scoreRamp},0) -- ({16+\scoreRamp},2) -- ({18-\scoreRamp},2) -- ({18+\scoreRamp},0) -- ({24-\scoreRamp},0) -- ({24+\scoreRamp},0) -- ({26-\scoreRamp},0) -- ({26+\scoreRamp},1) -- ({28-\scoreRamp},1) -- ({28+\scoreRamp},2) -- ({30-\scoreRamp},2) -- (30,0) -- (32,0);%
  \filldraw[draw=white,fill=scorelatest,line width=0.40pt] (30,0) circle (0.075);%
}
\def\drawscoreensemble{%
  \pgfmathsetmacro{\trailopacity}{0.014+0.00013*1}\drawensemble{9}{3}{6}{1}{6}{8}{2}{0}%
  \pgfmathsetmacro{\trailopacity}{0.014+0.00013*2}\drawensemble{0}{3}{3}{7}{6}{0}{3}{6}%
  \pgfmathsetmacro{\trailopacity}{0.014+0.00013*3}\drawensemble{3}{0}{3}{7}{6}{2}{9}{3}%
  \pgfmathsetmacro{\trailopacity}{0.014+0.00013*4}\drawensemble{7}{3}{11}{3}{3}{0}{4}{6}%
  \pgfmathsetmacro{\trailopacity}{0.014+0.00013*5}\drawensemble{3}{3}{6}{1}{6}{5}{0}{3}%
  \pgfmathsetmacro{\trailopacity}{0.014+0.00013*6}\drawensemble{0}{0}{3}{1}{0}{5}{0}{3}%
  \pgfmathsetmacro{\trailopacity}{0.014+0.00013*7}\drawensemble{0}{3}{6}{1}{0}{5}{6}{9}%
  \pgfmathsetmacro{\trailopacity}{0.014+0.00013*8}\drawensemble{0}{3}{6}{1}{3}{8}{0}{3}%
  \pgfmathsetmacro{\trailopacity}{0.014+0.00013*9}\drawensemble{6}{3}{6}{7}{3}{11}{0}{3}%
  \pgfmathsetmacro{\trailopacity}{0.014+0.00013*10}\drawensemble{0}{6}{0}{1}{0}{5}{9}{3}%
  \pgfmathsetmacro{\trailopacity}{0.014+0.00013*11}\drawensemble{2}{0}{3}{6}{1}{6}{2}{11}%
  \pgfmathsetmacro{\trailopacity}{0.014+0.00013*12}\drawensemble{8}{0}{0}{3}{6}{3}{6}{1}%
  \pgfmathsetmacro{\trailopacity}{0.014+0.00013*13}\drawensemble{0}{3}{6}{1}{6}{5}{0}{3}%
  \pgfmathsetmacro{\trailopacity}{0.014+0.00013*14}\drawensemble{6}{0}{9}{10}{9}{11}{0}{3}%
  \pgfmathsetmacro{\trailopacity}{0.014+0.00013*15}\drawensemble{3}{3}{6}{1}{3}{5}{0}{3}%
  \pgfmathsetmacro{\trailopacity}{0.014+0.00013*16}\drawensemble{6}{0}{3}{10}{3}{8}{6}{3}%
  \pgfmathsetmacro{\trailopacity}{0.014+0.00013*17}\drawensemble{0}{0}{9}{10}{6}{1}{6}{1}%
  \pgfmathsetmacro{\trailopacity}{0.014+0.00013*18}\drawensemble{0}{3}{6}{1}{6}{2}{3}{3}%
  \pgfmathsetmacro{\trailopacity}{0.014+0.00013*19}\drawensemble{11}{0}{0}{3}{7}{6}{2}{3}%
  \pgfmathsetmacro{\trailopacity}{0.014+0.00013*20}\drawensemble{0}{3}{3}{6}{3}{3}{8}{0}%
  \pgfmathsetmacro{\trailopacity}{0.014+0.00013*21}\drawensemble{3}{3}{6}{9}{3}{6}{1}{6}%
  \pgfmathsetmacro{\trailopacity}{0.014+0.00013*22}\drawensemble{1}{0}{5}{0}{3}{3}{7}{6}%
  \pgfmathsetmacro{\trailopacity}{0.014+0.00013*23}\drawensemble{0}{0}{6}{1}{6}{2}{0}{3}%
  \pgfmathsetmacro{\trailopacity}{0.014+0.00013*24}\drawensemble{9}{3}{6}{7}{3}{8}{0}{3}%
  \pgfmathsetmacro{\trailopacity}{0.014+0.00013*25}\drawensemble{0}{3}{3}{7}{3}{8}{0}{3}%
  \pgfmathsetmacro{\trailopacity}{0.014+0.00013*26}\drawensemble{6}{3}{6}{0}{3}{6}{0}{3}%
  \pgfmathsetmacro{\trailopacity}{0.014+0.00013*27}\drawensemble{0}{3}{6}{1}{6}{2}{0}{6}%
  \pgfmathsetmacro{\trailopacity}{0.014+0.00013*28}\drawensemble{6}{3}{6}{1}{6}{2}{0}{0}%
  \pgfmathsetmacro{\trailopacity}{0.014+0.00013*29}\drawensemble{0}{3}{6}{1}{6}{1}{6}{2}%
  \pgfmathsetmacro{\trailopacity}{0.014+0.00013*30}\drawensemble{0}{3}{6}{1}{6}{2}{0}{3}%
  \pgfmathsetmacro{\trailopacity}{0.014+0.00013*31}\drawensemble{9}{3}{3}{7}{6}{2}{3}{3}%
  \pgfmathsetmacro{\trailopacity}{0.014+0.00013*32}\drawensemble{0}{3}{6}{1}{6}{5}{3}{3}%
  \pgfmathsetmacro{\trailopacity}{0.014+0.00013*33}\drawensemble{0}{3}{6}{1}{6}{2}{6}{3}%
  \pgfmathsetmacro{\trailopacity}{0.014+0.00013*34}\drawensemble{0}{6}{6}{1}{6}{2}{0}{3}%
  \pgfmathsetmacro{\trailopacity}{0.014+0.00013*35}\drawensemble{0}{3}{0}{1}{3}{8}{0}{3}%
  \pgfmathsetmacro{\trailopacity}{0.014+0.00013*36}\drawensemble{0}{3}{3}{7}{3}{8}{3}{3}%
  \pgfmathsetmacro{\trailopacity}{0.014+0.00013*37}\drawensemble{0}{3}{0}{5}{3}{3}{6}{1}%
  \pgfmathsetmacro{\trailopacity}{0.014+0.00013*38}\drawensemble{0}{3}{0}{4}{6}{2}{0}{3}%
  \pgfmathsetmacro{\trailopacity}{0.014+0.00013*39}\drawensemble{0}{3}{6}{1}{3}{8}{6}{3}%
  \pgfmathsetmacro{\trailopacity}{0.014+0.00013*40}\drawensemble{9}{3}{6}{1}{0}{5}{10}{6}%
  \pgfmathsetmacro{\trailopacity}{0.014+0.00013*41}\drawensemble{0}{3}{6}{1}{6}{0}{6}{0}%
  \pgfmathsetmacro{\trailopacity}{0.014+0.00013*42}\drawensemble{0}{3}{3}{1}{6}{2}{3}{3}%
  \pgfmathsetmacro{\trailopacity}{0.014+0.00013*43}\drawensemble{0}{3}{6}{0}{3}{6}{1}{6}%
  \pgfmathsetmacro{\trailopacity}{0.014+0.00013*44}\drawensemble{0}{3}{6}{1}{6}{2}{3}{3}%
  \pgfmathsetmacro{\trailopacity}{0.014+0.00013*45}\drawensemble{3}{3}{6}{7}{6}{2}{0}{3}%
  \pgfmathsetmacro{\trailopacity}{0.014+0.00013*46}\drawensemble{7}{9}{11}{0}{3}{6}{1}{0}%
  \pgfmathsetmacro{\trailopacity}{0.014+0.00013*47}\drawensemble{0}{3}{6}{1}{0}{0}{3}{6}%
  \pgfmathsetmacro{\trailopacity}{0.014+0.00013*48}\drawensemble{0}{3}{6}{0}{3}{6}{1}{6}%
  \pgfmathsetmacro{\trailopacity}{0.014+0.00013*49}\drawensemble{0}{3}{6}{1}{6}{8}{3}{0}%
  \pgfmathsetmacro{\trailopacity}{0.014+0.00013*50}\drawensemble{9}{3}{6}{7}{6}{2}{0}{3}%
  \pgfmathsetmacro{\trailopacity}{0.014+0.00013*51}\drawensemble{2}{6}{3}{6}{1}{0}{2}{0}%
  \pgfmathsetmacro{\trailopacity}{0.014+0.00013*52}\drawensemble{6}{3}{6}{1}{6}{2}{0}{3}%
  \pgfmathsetmacro{\trailopacity}{0.014+0.00013*53}\drawensemble{0}{6}{0}{5}{9}{3}{6}{1}%
  \pgfmathsetmacro{\trailopacity}{0.014+0.00013*54}\drawensemble{6}{9}{9}{10}{6}{2}{0}{3}%
  \pgfmathsetmacro{\trailopacity}{0.014+0.00013*55}\drawensemble{3}{0}{3}{10}{6}{2}{8}{3}%
  \pgfmathsetmacro{\trailopacity}{0.014+0.00013*56}\drawensemble{6}{3}{6}{7}{3}{8}{3}{3}%
  \pgfmathsetmacro{\trailopacity}{0.014+0.00013*57}\drawensemble{0}{3}{6}{1}{3}{6}{3}{0}%
  \pgfmathsetmacro{\trailopacity}{0.014+0.00013*58}\drawensemble{0}{3}{0}{4}{0}{5}{0}{3}%
  \pgfmathsetmacro{\trailopacity}{0.014+0.00013*59}\drawensemble{1}{0}{5}{6}{3}{6}{1}{6}%
  \pgfmathsetmacro{\trailopacity}{0.014+0.00013*60}\drawensemble{6}{3}{6}{1}{0}{5}{0}{3}%
  \pgfmathsetmacro{\trailopacity}{0.014+0.00013*61}\drawensemble{2}{0}{3}{6}{1}{6}{2}{0}%
  \pgfmathsetmacro{\trailopacity}{0.014+0.00013*62}\drawensemble{0}{3}{6}{7}{6}{2}{0}{3}%
  \pgfmathsetmacro{\trailopacity}{0.014+0.00013*63}\drawensemble{9}{3}{6}{1}{6}{8}{6}{3}%
  \pgfmathsetmacro{\trailopacity}{0.014+0.00013*64}\drawensemble{8}{6}{3}{6}{0}{3}{6}{1}%
  \pgfmathsetmacro{\trailopacity}{0.014+0.00013*65}\drawensemble{3}{3}{3}{7}{9}{11}{3}{3}%
  \pgfmathsetmacro{\trailopacity}{0.014+0.00013*66}\drawensemble{1}{6}{2}{1}{0}{5}{0}{3}%
  \pgfmathsetmacro{\trailopacity}{0.014+0.00013*67}\drawensemble{0}{3}{6}{7}{3}{8}{6}{3}%
  \pgfmathsetmacro{\trailopacity}{0.014+0.00013*68}\drawensemble{6}{3}{6}{1}{6}{8}{0}{0}%
  \pgfmathsetmacro{\trailopacity}{0.014+0.00013*69}\drawensemble{6}{3}{0}{4}{6}{8}{3}{3}%
  \pgfmathsetmacro{\trailopacity}{0.014+0.00013*70}\drawensemble{3}{3}{6}{7}{6}{0}{6}{0}%
  \pgfmathsetmacro{\trailopacity}{0.014+0.00013*71}\drawensemble{9}{6}{0}{10}{0}{5}{3}{3}%
  \pgfmathsetmacro{\trailopacity}{0.014+0.00013*72}\drawensemble{9}{3}{6}{0}{3}{6}{4}{6}%
  \pgfmathsetmacro{\trailopacity}{0.014+0.00013*73}\drawensemble{0}{3}{6}{4}{0}{5}{8}{3}%
  \pgfmathsetmacro{\trailopacity}{0.014+0.00013*74}\drawensemble{6}{3}{6}{1}{6}{2}{6}{3}%
  \pgfmathsetmacro{\trailopacity}{0.014+0.00013*75}\drawensemble{3}{3}{6}{1}{0}{2}{0}{3}%
  \pgfmathsetmacro{\trailopacity}{0.014+0.00013*76}\drawensemble{0}{3}{6}{1}{3}{8}{0}{3}%
  \pgfmathsetmacro{\trailopacity}{0.014+0.00013*77}\drawensemble{9}{3}{0}{4}{6}{2}{0}{3}%
  \pgfmathsetmacro{\trailopacity}{0.014+0.00013*78}\drawensemble{0}{3}{6}{1}{6}{2}{0}{3}%
  \pgfmathsetmacro{\trailopacity}{0.014+0.00013*79}\drawensemble{1}{3}{8}{0}{3}{0}{4}{3}%
  \pgfmathsetmacro{\trailopacity}{0.014+0.00013*80}\drawensemble{3}{3}{0}{4}{6}{2}{6}{6}%
  \pgfmathsetmacro{\trailopacity}{0.014+0.00013*81}\drawensemble{2}{3}{3}{0}{4}{0}{5}{0}%
  \pgfmathsetmacro{\trailopacity}{0.014+0.00013*82}\drawensemble{9}{3}{3}{7}{9}{11}{0}{0}%
  \pgfmathsetmacro{\trailopacity}{0.014+0.00013*83}\drawensemble{2}{0}{3}{6}{1}{6}{2}{3}%
  \pgfmathsetmacro{\trailopacity}{0.014+0.00013*84}\drawensemble{2}{0}{3}{6}{7}{6}{2}{9}%
  \pgfmathsetmacro{\trailopacity}{0.014+0.00013*85}\drawensemble{6}{0}{3}{6}{0}{3}{7}{6}%
  \pgfmathsetmacro{\trailopacity}{0.014+0.00013*86}\drawensemble{2}{0}{3}{0}{1}{0}{5}{6}%
  \pgfmathsetmacro{\trailopacity}{0.014+0.00013*87}\drawensemble{0}{3}{6}{4}{6}{8}{3}{3}%
  \pgfmathsetmacro{\trailopacity}{0.014+0.00013*88}\drawensemble{9}{9}{9}{5}{3}{3}{6}{1}%
  \pgfmathsetmacro{\trailopacity}{0.014+0.00013*89}\drawensemble{5}{0}{6}{9}{10}{0}{11}{4}%
  \pgfmathsetmacro{\trailopacity}{0.014+0.00013*90}\drawensemble{0}{3}{3}{7}{9}{11}{3}{3}%
  \pgfmathsetmacro{\trailopacity}{0.014+0.00013*91}\drawensemble{3}{3}{6}{1}{6}{2}{0}{0}%
  \pgfmathsetmacro{\trailopacity}{0.014+0.00013*92}\drawensemble{6}{3}{6}{4}{0}{5}{6}{0}%
  \pgfmathsetmacro{\trailopacity}{0.014+0.00013*93}\drawensemble{4}{6}{2}{0}{3}{6}{7}{0}%
  \pgfmathsetmacro{\trailopacity}{0.014+0.00013*94}\drawensemble{2}{0}{3}{6}{0}{6}{6}{2}%
  \pgfmathsetmacro{\trailopacity}{0.014+0.00013*95}\drawensemble{6}{0}{3}{7}{6}{2}{1}{6}%
  \pgfmathsetmacro{\trailopacity}{0.014+0.00013*96}\drawensemble{0}{3}{6}{1}{6}{2}{0}{3}%
  \pgfmathsetmacro{\trailopacity}{0.014+0.00013*97}\drawensemble{0}{3}{6}{0}{3}{6}{1}{6}%
  \pgfmathsetmacro{\trailopacity}{0.014+0.00013*98}\drawensemble{0}{3}{3}{10}{6}{2}{0}{6}%
  \pgfmathsetmacro{\trailopacity}{0.014+0.00013*99}\drawensemble{3}{3}{6}{1}{3}{8}{0}{6}%
  \pgfmathsetmacro{\trailopacity}{0.014+0.00013*100}\drawensemble{6}{3}{6}{1}{6}{7}{6}{2}%
  \pgfmathsetmacro{\trailopacity}{0.014+0.00013*101}\drawensemble{2}{6}{3}{6}{1}{3}{8}{0}%
  \pgfmathsetmacro{\trailopacity}{0.014+0.00013*102}\drawensemble{3}{3}{3}{7}{6}{2}{6}{0}%
  \pgfmathsetmacro{\trailopacity}{0.014+0.00013*103}\drawensemble{0}{3}{6}{1}{3}{8}{1}{0}%
  \pgfmathsetmacro{\trailopacity}{0.014+0.00013*104}\drawensemble{0}{3}{6}{4}{6}{2}{0}{3}%
  \pgfmathsetmacro{\trailopacity}{0.014+0.00013*105}\drawensemble{9}{3}{6}{1}{9}{5}{2}{3}%
  \pgfmathsetmacro{\trailopacity}{0.014+0.00013*106}\drawensemble{0}{3}{6}{2}{9}{3}{6}{2}%
  \pgfmathsetmacro{\trailopacity}{0.014+0.00013*107}\drawensemble{0}{3}{6}{1}{3}{8}{3}{3}%
  \pgfmathsetmacro{\trailopacity}{0.014+0.00013*108}\drawensemble{6}{0}{3}{7}{3}{8}{0}{3}%
  \pgfmathsetmacro{\trailopacity}{0.014+0.00013*109}\drawensemble{9}{3}{3}{10}{3}{8}{2}{0}%
  \pgfmathsetmacro{\trailopacity}{0.014+0.00013*110}\drawensemble{1}{3}{5}{0}{3}{6}{0}{3}%
  \pgfmathsetmacro{\trailopacity}{0.014+0.00013*111}\drawensemble{11}{0}{3}{6}{1}{6}{2}{0}%
  \pgfmathsetmacro{\trailopacity}{0.014+0.00013*112}\drawensemble{6}{9}{9}{9}{3}{3}{7}{6}%
  \pgfmathsetmacro{\trailopacity}{0.014+0.00013*113}\drawensemble{0}{3}{6}{1}{6}{2}{0}{3}%
  \pgfmathsetmacro{\trailopacity}{0.014+0.00013*114}\drawensemble{3}{6}{0}{1}{3}{8}{6}{3}%
  \pgfmathsetmacro{\trailopacity}{0.014+0.00013*115}\drawensemble{0}{3}{6}{0}{3}{6}{1}{6}%
  \pgfmathsetmacro{\trailopacity}{0.014+0.00013*116}\drawensemble{3}{3}{6}{1}{6}{1}{0}{5}%
  \pgfmathsetmacro{\trailopacity}{0.014+0.00013*117}\drawensemble{6}{6}{6}{7}{6}{2}{0}{3}%
  \pgfmathsetmacro{\trailopacity}{0.014+0.00013*118}\drawensemble{0}{0}{6}{1}{0}{5}{6}{3}%
  \pgfmathsetmacro{\trailopacity}{0.014+0.00013*119}\drawensemble{0}{3}{0}{4}{6}{2}{3}{6}%
  \pgfmathsetmacro{\trailopacity}{0.014+0.00013*120}\drawensemble{1}{0}{11}{0}{3}{6}{1}{9}%
  \pgfmathsetmacro{\trailopacity}{0.014+0.00013*121}\drawensemble{2}{0}{3}{6}{2}{3}{3}{6}%
  \pgfmathsetmacro{\trailopacity}{0.014+0.00013*122}\drawensemble{0}{3}{6}{1}{6}{2}{0}{3}%
  \pgfmathsetmacro{\trailopacity}{0.014+0.00013*123}\drawensemble{0}{3}{6}{1}{6}{2}{3}{3}%
  \pgfmathsetmacro{\trailopacity}{0.014+0.00013*124}\drawensemble{6}{3}{6}{1}{3}{8}{11}{0}%
  \pgfmathsetmacro{\trailopacity}{0.014+0.00013*125}\drawensemble{0}{9}{9}{7}{0}{4}{6}{2}%
  \pgfmathsetmacro{\trailopacity}{0.014+0.00013*126}\drawensemble{6}{6}{0}{10}{6}{2}{3}{0}%
  \pgfmathsetmacro{\trailopacity}{0.014+0.00013*127}\drawensemble{2}{8}{3}{0}{3}{7}{0}{5}%
  \pgfmathsetmacro{\trailopacity}{0.014+0.00013*128}\drawensemble{0}{6}{0}{10}{3}{2}{6}{3}%
  \pgfmathsetmacro{\trailopacity}{0.014+0.00013*129}\drawensemble{3}{3}{6}{1}{6}{2}{0}{6}%
  \pgfmathsetmacro{\trailopacity}{0.014+0.00013*130}\drawensemble{3}{3}{6}{7}{3}{8}{0}{3}%
  \pgfmathsetmacro{\trailopacity}{0.014+0.00013*131}\drawensemble{1}{0}{5}{9}{6}{0}{4}{6}%
  \pgfmathsetmacro{\trailopacity}{0.014+0.00013*132}\drawensemble{3}{3}{6}{1}{6}{2}{7}{6}%
  \pgfmathsetmacro{\trailopacity}{0.014+0.00013*133}\drawensemble{6}{3}{6}{1}{6}{4}{6}{2}%
  \pgfmathsetmacro{\trailopacity}{0.014+0.00013*134}\drawensemble{1}{0}{5}{6}{3}{3}{7}{9}%
  \pgfmathsetmacro{\trailopacity}{0.014+0.00013*135}\drawensemble{9}{3}{6}{2}{0}{3}{6}{10}%
  \pgfmathsetmacro{\trailopacity}{0.014+0.00013*136}\drawensemble{9}{6}{0}{4}{6}{2}{2}{7}%
  \pgfmathsetmacro{\trailopacity}{0.014+0.00013*137}\drawensemble{6}{3}{3}{7}{6}{2}{0}{6}%
  \pgfmathsetmacro{\trailopacity}{0.014+0.00013*138}\drawensemble{1}{6}{2}{0}{3}{6}{4}{3}%
  \pgfmathsetmacro{\trailopacity}{0.014+0.00013*139}\drawensemble{0}{6}{0}{4}{6}{2}{6}{3}%
  \pgfmathsetmacro{\trailopacity}{0.014+0.00013*140}\drawensemble{2}{0}{3}{6}{1}{6}{2}{0}%
  \pgfmathsetmacro{\trailopacity}{0.014+0.00013*141}\drawensemble{0}{3}{6}{1}{3}{8}{0}{3}%
  \pgfmathsetmacro{\trailopacity}{0.014+0.00013*142}\drawensemble{6}{3}{0}{4}{6}{2}{4}{6}%
  \pgfmathsetmacro{\trailopacity}{0.014+0.00013*143}\drawensemble{0}{3}{6}{4}{9}{11}{0}{3}%
  \pgfmathsetmacro{\trailopacity}{0.014+0.00013*144}\drawensemble{3}{3}{6}{1}{6}{2}{3}{3}%
  \pgfmathsetmacro{\trailopacity}{0.014+0.00013*145}\drawensemble{6}{6}{0}{1}{9}{11}{3}{3}%
  \pgfmathsetmacro{\trailopacity}{0.014+0.00013*146}\drawensemble{9}{3}{3}{7}{0}{5}{6}{3}%
  \pgfmathsetmacro{\trailopacity}{0.014+0.00013*147}\drawensemble{5}{0}{3}{6}{1}{3}{8}{3}%
  \pgfmathsetmacro{\trailopacity}{0.014+0.00013*148}\drawensemble{0}{3}{6}{1}{3}{8}{0}{3}%
  \pgfmathsetmacro{\trailopacity}{0.014+0.00013*149}\drawensemble{8}{3}{0}{3}{7}{6}{2}{1}%
  \pgfmathsetmacro{\trailopacity}{0.014+0.00013*150}\drawensemble{0}{0}{3}{7}{3}{2}{1}{3}%
  \pgfmathsetmacro{\trailopacity}{0.014+0.00013*151}\drawensemble{0}{3}{6}{1}{9}{9}{3}{6}%
  \pgfmathsetmacro{\trailopacity}{0.014+0.00013*152}\drawensemble{0}{3}{0}{4}{0}{5}{3}{3}%
  \pgfmathsetmacro{\trailopacity}{0.014+0.00013*153}\drawensemble{0}{6}{0}{3}{3}{6}{4}{3}%
  \pgfmathsetmacro{\trailopacity}{0.014+0.00013*154}\drawensemble{1}{6}{0}{3}{6}{1}{3}{11}%
  \pgfmathsetmacro{\trailopacity}{0.014+0.00013*155}\drawensemble{0}{3}{6}{4}{6}{8}{0}{3}%
  \pgfmathsetmacro{\trailopacity}{0.014+0.00013*156}\drawensemble{1}{3}{8}{3}{3}{3}{7}{6}%
  \pgfmathsetmacro{\trailopacity}{0.014+0.00013*157}\drawensemble{9}{3}{3}{8}{3}{3}{6}{7}%
  \pgfmathsetmacro{\trailopacity}{0.014+0.00013*158}\drawensemble{0}{3}{6}{1}{6}{2}{0}{3}%
  \pgfmathsetmacro{\trailopacity}{0.014+0.00013*159}\drawensemble{4}{6}{2}{1}{6}{2}{9}{3}%
  \pgfmathsetmacro{\trailopacity}{0.014+0.00013*160}\drawensemble{6}{3}{6}{1}{6}{5}{5}{3}%
  \pgfmathsetmacro{\trailopacity}{0.014+0.00013*161}\drawensemble{3}{3}{3}{1}{6}{8}{0}{3}%
  \pgfmathsetmacro{\trailopacity}{0.014+0.00013*162}\drawensemble{7}{6}{2}{0}{3}{6}{3}{6}%
  \pgfmathsetmacro{\trailopacity}{0.014+0.00013*163}\drawensemble{0}{3}{6}{7}{6}{0}{3}{0}%
  \pgfmathsetmacro{\trailopacity}{0.014+0.00013*164}\drawensemble{0}{6}{0}{4}{0}{11}{0}{3}%
  \pgfmathsetmacro{\trailopacity}{0.014+0.00013*165}\drawensemble{6}{3}{6}{1}{3}{6}{3}{6}%
  \pgfmathsetmacro{\trailopacity}{0.014+0.00013*166}\drawensemble{0}{3}{6}{1}{6}{5}{0}{3}%
  \pgfmathsetmacro{\trailopacity}{0.014+0.00013*167}\drawensemble{7}{6}{2}{0}{3}{6}{1}{6}%
  \pgfmathsetmacro{\trailopacity}{0.014+0.00013*168}\drawensemble{2}{0}{3}{6}{1}{0}{5}{6}%
  \pgfmathsetmacro{\trailopacity}{0.014+0.00013*169}\drawensemble{4}{6}{1}{6}{2}{3}{6}{9}%
  \pgfmathsetmacro{\trailopacity}{0.014+0.00013*170}\drawensemble{3}{3}{6}{7}{0}{5}{2}{0}%
  \pgfmathsetmacro{\trailopacity}{0.014+0.00013*171}\drawensemble{3}{0}{3}{7}{0}{11}{6}{3}%
  \pgfmathsetmacro{\trailopacity}{0.014+0.00013*172}\drawensemble{0}{3}{6}{1}{6}{3}{3}{6}%
  \pgfmathsetmacro{\trailopacity}{0.014+0.00013*173}\drawensemble{7}{6}{5}{0}{3}{3}{7}{3}%
  \pgfmathsetmacro{\trailopacity}{0.014+0.00013*174}\drawensemble{0}{3}{3}{7}{6}{2}{0}{3}%
  \pgfmathsetmacro{\trailopacity}{0.014+0.00013*175}\drawensemble{3}{6}{0}{4}{9}{11}{0}{3}%
  \pgfmathsetmacro{\trailopacity}{0.014+0.00013*176}\drawensemble{6}{3}{6}{1}{6}{2}{0}{3}%
  \pgfmathsetmacro{\trailopacity}{0.014+0.00013*177}\drawensemble{0}{3}{6}{1}{6}{2}{6}{3}%
  \pgfmathsetmacro{\trailopacity}{0.014+0.00013*178}\drawensemble{0}{0}{3}{10}{0}{5}{1}{6}%
  \pgfmathsetmacro{\trailopacity}{0.014+0.00013*179}\drawensemble{8}{3}{3}{0}{4}{6}{2}{0}%
  \pgfmathsetmacro{\trailopacity}{0.014+0.00013*180}\drawensemble{3}{6}{0}{4}{0}{11}{0}{0}%
  \pgfmathsetmacro{\trailopacity}{0.014+0.00013*181}\drawensemble{0}{0}{3}{1}{6}{2}{3}{6}%
  \pgfmathsetmacro{\trailopacity}{0.014+0.00013*182}\drawensemble{1}{3}{8}{0}{0}{3}{6}{3}%
  \pgfmathsetmacro{\trailopacity}{0.014+0.00013*183}\drawensemble{0}{3}{6}{1}{6}{2}{0}{3}%
  \pgfmathsetmacro{\trailopacity}{0.014+0.00013*184}\drawensemble{5}{6}{3}{6}{1}{3}{1}{6}%
  \pgfmathsetmacro{\trailopacity}{0.014+0.00013*185}\drawensemble{2}{4}{6}{2}{6}{3}{6}{1}%
  \pgfmathsetmacro{\trailopacity}{0.014+0.00013*186}\drawensemble{1}{0}{5}{3}{3}{6}{7}{6}%
  \pgfmathsetmacro{\trailopacity}{0.014+0.00013*187}\drawensemble{11}{4}{0}{5}{0}{3}{6}{7}%
  \pgfmathsetmacro{\trailopacity}{0.014+0.00013*188}\drawensemble{0}{3}{6}{1}{3}{8}{0}{3}%
  \pgfmathsetmacro{\trailopacity}{0.014+0.00013*189}\drawensemble{0}{3}{6}{1}{6}{2}{2}{6}%
  \pgfmathsetmacro{\trailopacity}{0.014+0.00013*190}\drawensemble{0}{3}{6}{4}{6}{2}{0}{3}%
  \pgfmathsetmacro{\trailopacity}{0.014+0.00013*191}\drawensemble{0}{3}{6}{1}{9}{11}{6}{3}%
  \pgfmathsetmacro{\trailopacity}{0.014+0.00013*192}\drawensemble{7}{6}{2}{0}{3}{6}{4}{0}%
  \pgfmathsetmacro{\trailopacity}{0.014+0.00013*193}\drawensemble{0}{3}{6}{4}{6}{1}{3}{8}%
  \pgfmathsetmacro{\trailopacity}{0.014+0.00013*194}\drawensemble{0}{0}{3}{7}{6}{0}{3}{6}%
  \pgfmathsetmacro{\trailopacity}{0.014+0.00013*195}\drawensemble{0}{0}{3}{6}{3}{6}{4}{6}%
  \pgfmathsetmacro{\trailopacity}{0.014+0.00013*196}\drawensemble{3}{3}{6}{1}{6}{2}{2}{7}%
  \pgfmathsetmacro{\trailopacity}{0.014+0.00013*197}\drawensemble{3}{3}{6}{1}{0}{5}{0}{3}%
  \pgfmathsetmacro{\trailopacity}{0.014+0.00013*198}\drawensemble{0}{6}{0}{4}{6}{7}{0}{4}%
  \pgfmathsetmacro{\trailopacity}{0.014+0.00013*199}\drawensemble{0}{3}{6}{1}{6}{8}{0}{3}%
  \pgfmathsetmacro{\trailopacity}{0.014+0.00013*200}\drawensemble{3}{3}{6}{1}{6}{2}{0}{0}%
  \pgfmathsetmacro{\trailopacity}{0.014+0.00013*201}\drawensemble{0}{3}{6}{1}{6}{0}{0}{6}%
  \pgfmathsetmacro{\trailopacity}{0.014+0.00013*202}\drawensemble{0}{3}{6}{1}{6}{2}{0}{6}%
  \pgfmathsetmacro{\trailopacity}{0.014+0.00013*203}\drawensemble{0}{3}{6}{1}{0}{8}{0}{3}%
  \pgfmathsetmacro{\trailopacity}{0.014+0.00013*204}\drawensemble{9}{3}{6}{1}{3}{8}{6}{3}%
  \pgfmathsetmacro{\trailopacity}{0.014+0.00013*205}\drawensemble{2}{6}{0}{9}{10}{6}{2}{9}%
  \pgfmathsetmacro{\trailopacity}{0.014+0.00013*206}\drawensemble{0}{3}{6}{1}{0}{5}{9}{3}%
  \pgfmathsetmacro{\trailopacity}{0.014+0.00013*207}\drawensemble{9}{3}{6}{7}{6}{2}{0}{3}%
  \pgfmathsetmacro{\trailopacity}{0.014+0.00013*208}\drawensemble{0}{3}{6}{1}{6}{2}{3}{6}%
  \pgfmathsetmacro{\trailopacity}{0.014+0.00013*209}\drawensemble{6}{3}{3}{7}{6}{2}{9}{3}%
  \pgfmathsetmacro{\trailopacity}{0.014+0.00013*210}\drawensemble{0}{3}{6}{1}{6}{1}{6}{2}%
  \pgfmathsetmacro{\trailopacity}{0.014+0.00013*211}\drawensemble{9}{3}{6}{1}{6}{2}{4}{6}%
  \pgfmathsetmacro{\trailopacity}{0.014+0.00013*212}\drawensemble{0}{3}{6}{1}{0}{5}{0}{3}%
  \pgfmathsetmacro{\trailopacity}{0.014+0.00013*213}\drawensemble{0}{3}{6}{1}{6}{2}{6}{3}%
  \pgfmathsetmacro{\trailopacity}{0.014+0.00013*214}\drawensemble{6}{3}{6}{1}{6}{5}{6}{3}%
  \pgfmathsetmacro{\trailopacity}{0.014+0.00013*215}\drawensemble{6}{3}{6}{1}{3}{8}{0}{3}%
  \pgfmathsetmacro{\trailopacity}{0.014+0.00013*216}\drawensemble{2}{0}{3}{6}{1}{6}{2}{0}%
  \pgfmathsetmacro{\trailopacity}{0.014+0.00013*217}\drawensemble{0}{3}{6}{7}{6}{0}{3}{3}%
  \pgfmathsetmacro{\trailopacity}{0.014+0.00013*218}\drawensemble{6}{3}{6}{1}{3}{8}{9}{6}%
  \pgfmathsetmacro{\trailopacity}{0.014+0.00013*219}\drawensemble{7}{6}{2}{1}{6}{2}{5}{6}%
  \pgfmathsetmacro{\trailopacity}{0.014+0.00013*220}\drawensemble{7}{6}{2}{0}{0}{3}{7}{6}%
  \pgfmathsetmacro{\trailopacity}{0.014+0.00013*221}\drawensemble{6}{3}{6}{1}{3}{8}{0}{9}%
  \pgfmathsetmacro{\trailopacity}{0.014+0.00013*222}\drawensemble{1}{6}{8}{0}{3}{6}{1}{6}%
  \pgfmathsetmacro{\trailopacity}{0.014+0.00013*223}\drawensemble{2}{0}{9}{9}{11}{0}{3}{6}%
  \pgfmathsetmacro{\trailopacity}{0.014+0.00013*224}\drawensemble{9}{3}{6}{2}{0}{3}{6}{1}%
  \pgfmathsetmacro{\trailopacity}{0.014+0.00013*225}\drawensemble{6}{3}{6}{1}{6}{2}{2}{1}%
  \pgfmathsetmacro{\trailopacity}{0.014+0.00013*226}\drawensemble{3}{3}{6}{7}{6}{2}{1}{0}%
  \pgfmathsetmacro{\trailopacity}{0.014+0.00013*227}\drawensemble{0}{3}{0}{4}{6}{2}{0}{3}%
  \pgfmathsetmacro{\trailopacity}{0.014+0.00013*228}\drawensemble{5}{0}{3}{6}{4}{6}{2}{3}%
  \pgfmathsetmacro{\trailopacity}{0.014+0.00013*229}\drawensemble{0}{3}{6}{1}{6}{2}{2}{6}%
  \pgfmathsetmacro{\trailopacity}{0.014+0.00013*230}\drawensemble{9}{3}{6}{7}{9}{11}{6}{3}%
  \pgfmathsetmacro{\trailopacity}{0.014+0.00013*231}\drawensemble{0}{3}{6}{1}{6}{2}{9}{3}%
  \pgfmathsetmacro{\trailopacity}{0.014+0.00013*232}\drawensemble{0}{9}{9}{10}{6}{2}{3}{6}%
  \pgfmathsetmacro{\trailopacity}{0.014+0.00013*233}\drawensemble{2}{0}{3}{6}{1}{3}{8}{3}%
  \pgfmathsetmacro{\trailopacity}{0.014+0.00013*234}\drawensemble{0}{0}{3}{6}{3}{6}{1}{0}%
  \pgfmathsetmacro{\trailopacity}{0.014+0.00013*235}\drawensemble{6}{3}{3}{7}{9}{11}{6}{3}%
  \pgfmathsetmacro{\trailopacity}{0.014+0.00013*236}\drawensemble{7}{6}{2}{3}{3}{6}{1}{6}%
  \pgfmathsetmacro{\trailopacity}{0.014+0.00013*237}\drawensemble{1}{3}{8}{0}{3}{6}{1}{6}%
  \pgfmathsetmacro{\trailopacity}{0.014+0.00013*238}\drawensemble{11}{0}{3}{0}{4}{6}{2}{6}%
  \pgfmathsetmacro{\trailopacity}{0.014+0.00013*239}\drawensemble{2}{6}{6}{0}{4}{3}{8}{6}%
  \pgfmathsetmacro{\trailopacity}{0.014+0.00013*240}\drawensemble{0}{6}{0}{4}{6}{0}{3}{6}%
  \pgfmathsetmacro{\trailopacity}{0.014+0.00013*241}\drawensemble{2}{3}{3}{6}{1}{6}{2}{6}%
  \pgfmathsetmacro{\trailopacity}{0.014+0.00013*242}\drawensemble{0}{3}{0}{4}{6}{2}{0}{3}%
  \pgfmathsetmacro{\trailopacity}{0.014+0.00013*243}\drawensemble{6}{6}{9}{10}{6}{2}{9}{3}%
  \pgfmathsetmacro{\trailopacity}{0.014+0.00013*244}\drawensemble{5}{3}{6}{0}{4}{6}{8}{0}%
  \pgfmathsetmacro{\trailopacity}{0.014+0.00013*245}\drawensemble{6}{3}{6}{1}{3}{6}{6}{0}%
  \pgfmathsetmacro{\trailopacity}{0.014+0.00013*246}\drawensemble{0}{3}{6}{1}{6}{2}{5}{0}%
  \pgfmathsetmacro{\trailopacity}{0.014+0.00013*247}\drawensemble{0}{3}{6}{1}{6}{2}{0}{3}%
  \pgfmathsetmacro{\trailopacity}{0.014+0.00013*248}\drawensemble{1}{6}{2}{3}{3}{6}{1}{0}%
  \pgfmathsetmacro{\trailopacity}{0.014+0.00013*249}\drawensemble{11}{1}{6}{1}{6}{2}{0}{3}%
  \pgfmathsetmacro{\trailopacity}{0.014+0.00013*250}\drawensemble{0}{3}{6}{10}{6}{0}{3}{3}%
  \pgfmathsetmacro{\trailopacity}{0.014+0.00013*251}\drawensemble{1}{9}{10}{3}{8}{3}{0}{3}%
  \pgfmathsetmacro{\trailopacity}{0.014+0.00013*252}\drawensemble{2}{0}{3}{6}{1}{6}{2}{3}%
  \pgfmathsetmacro{\trailopacity}{0.014+0.00013*253}\drawensemble{0}{3}{6}{1}{3}{8}{6}{3}%
  \pgfmathsetmacro{\trailopacity}{0.014+0.00013*254}\drawensemble{2}{0}{3}{6}{1}{6}{8}{3}%
  \pgfmathsetmacro{\trailopacity}{0.014+0.00013*255}\drawensemble{0}{0}{3}{7}{6}{8}{6}{3}%
  \pgfmathsetmacro{\trailopacity}{0.014+0.00013*256}\drawensemble{2}{0}{3}{6}{1}{6}{2}{6}%
  \ifcsname drawscorebundle\endcsname
  \drawscorebundle{0}{0}{143}%
  \drawscorebundle{0}{1}{36}%
  \drawscorebundle{0}{2}{50}%
  \drawscorebundle{1}{0}{51}%
  \drawscorebundle{1}{1}{30}%
  \drawscorebundle{2}{0}{59}%
  \drawscorebundle{2}{1}{156}%
  \drawscorebundle{2}{2}{31}%
  \drawscorebundle{4}{0}{56}%
  \drawscorebundle{4}{1}{45}%
  \drawscorebundle{4}{2}{140}%
  \drawscorebundle{6}{0}{139}%
  \drawscorebundle{6}{1}{43}%
  \drawscorebundle{6}{2}{54}%
  \drawscorebundle{7}{0}{112}%
  \drawscorebundle{7}{1}{41}%
  \drawscorebundle{7}{2}{45}%
  \drawscorebundle{8}{0}{115}%
  \drawscorebundle{8}{1}{75}%
  \drawscorebundle{8}{2}{47}%
  \drawscorebundle{10}{0}{45}%
  \drawscorebundle{10}{1}{50}%
  \drawscorebundle{10}{2}{143}%
  \drawscorebundle{12}{0}{136}%
  \drawscorebundle{12}{1}{47}%
  \drawscorebundle{12}{2}{52}%
  \drawscorebundle{13}{0}{109}%
  \drawscorebundle{13}{1}{46}%
  \drawscorebundle{13}{2}{47}%
  \drawscorebundle{14}{0}{110}%
  \drawscorebundle{14}{1}{78}%
  \drawscorebundle{14}{2}{51}%
  \drawscorebundle{15}{0}{79}%
  \drawscorebundle{15}{1}{35}%
  \drawscorebundle{15}{2}{46}%
  \drawscorebundle{16}{0}{90}%
  \drawscorebundle{16}{1}{43}%
  \drawscorebundle{16}{2}{107}%
  \drawscorebundle{17}{0}{49}%
  \drawscorebundle{17}{2}{30}%
  \drawscorebundle{18}{0}{129}%
  \drawscorebundle{18}{1}{42}%
  \drawscorebundle{18}{2}{46}%
  \drawscorebundle{19}{0}{64}%
  \drawscorebundle{19}{1}{33}%
  \drawscorebundle{20}{0}{68}%
  \drawscorebundle{20}{1}{120}%
  \drawscorebundle{20}{2}{33}%
  \drawscorebundle{21}{0}{57}%
  \drawscorebundle{21}{1}{111}%
  \drawscorebundle{22}{0}{43}%
  \drawscorebundle{22}{2}{31}%
  \fi
}

\begin{figure}[p]
\centering
\begin{minipage}[t]{0.61\linewidth}
\centering
\textbf{(a)}\quad Rhythm selection

\medskip
{\small
\renewcommand{\arraystretch}{1.22}
\begin{tabular}{ccl}
\toprule
$S$ & Rhythm & Readouts after\\
\midrule
0 & $(2,1)$ & $U_2$, then $U_3$\\
1 & $(3)$ & $U_3$\\
2 & $(1,1,1)$ & $U_1$, then $U_2$, then $U_3$\\
\bottomrule
\end{tabular}
}

\medskip
{\small $0\longrightarrow1\longrightarrow2\longrightarrow0$\\
Each forward selector transition has probability $7/8$.}
\end{minipage}
\hfill
\begin{minipage}[t]{0.35\linewidth}
\centering
\textbf{(c)}\\[-0.10em]
Endpoint-law comparison

\vspace{0.15em}
\resizebox{0.98\linewidth}{!}{%
\begin{tikzpicture}[x=0.86cm,y=0.52cm,line cap=round,line join=round]
  \node[font=\tiny,anchor=east] at (0.36,3.55) {$\Pr$};
  \draw[black!45,line width=0.38pt] (0.45,3.55) -- (4.00,3.55);
  \foreach \pp/\lab in {0/0,0.2/20,0.4/40,0.6/60,0.8/80,1/100}
    {\pgfmathsetmacro{\xx}{0.45+3.55*\pp}
     \draw[black!45,line width=0.32pt] (\xx,3.42) -- (\xx,3.68);
     \node[font=\tiny,anchor=south] at (\xx,3.68) {\lab};}
  \node[font=\tiny,anchor=west] at (4.08,3.68) {\%};
  \foreach \yy/\lab in {2.75/$\mathrm C_4$,1.90/$\mathrm E_4$,
                           1.05/$\mathrm G_4$,0.20/$\mathrm C_5$}
    {\draw[black!14,line width=0.28pt] (0.45,\yy) -- (4.00,\yy);
     \node[font=\scriptsize,anchor=east] at (0.34,\yy) {\lab};}

  \def\drawendpointcomparison#1#2#3{%
    \pgfmathsetmacro{\xfaithful}{0.45+3.55*(#1)}%
    \pgfmathsetmacro{\xexcluded}{0.45+3.55*(#2)}%
    \draw[black!28,line width=0.52pt]
      (\xfaithful,#3) -- (\xexcluded,#3);
    \draw[black!68,line width=0.62pt]
      (\xexcluded,#3) ++(-2.35pt,-2.35pt) -- ++(4.70pt,4.70pt)
      (\xexcluded,#3) ++(-2.35pt,2.35pt) -- ++(4.70pt,-4.70pt);
    \draw[black!62,densely dashed,line width=0.60pt]
      (\xfaithful,#3) circle (3.20pt);
    \fill[scorelatest] (\xfaithful,#3) circle (1.70pt);%
  }
  \drawendpointcomparison{0.5625}{0.403195287768852}{2.75}
  \drawendpointcomparison{0.1875}{0.231781316342553}{1.90}
  \drawendpointcomparison{0.1875}{0.231781316342553}{1.05}
  \drawendpointcomparison{0.0625}{0.133242079546042}{0.20}

  \draw[black!62,densely dashed,line width=0.60pt]
    (0.52,-0.42) circle (3.20pt);
  \fill[scorelatest] (0.52,-0.42) circle (1.70pt);
  \node[font=\tiny,anchor=west] at (0.70,-0.42)
    {common: $(3),(2,1),(1,1,1)$};
  \draw[black!68,line width=0.62pt]
    (0.52,-0.92) ++(-2.35pt,-2.35pt) -- ++(4.70pt,4.70pt)
    (0.52,-0.92) ++(-2.35pt,2.35pt) -- ++(4.70pt,-4.70pt);
  \node[font=\tiny,anchor=west] at (0.70,-0.92)
    {excluded: $(1,2)$};
\end{tikzpicture}%
}
\end{minipage}

\vspace{0.05em}
\begin{minipage}[t]{0.985\linewidth}
\centering
\textbf{(b)}\quad
Five complete passes, final sound, and one silent beat

{\scriptsize filled: one beat\qquad open: two beats\qquad
dotted open: three beats}

\vspace{0.10em}
\resizebox{0.995\linewidth}{!}{%
\begin{tikzpicture}[x=0.34cm,y=0.73cm,line cap=round,line join=round]
  \font\scoreclef=musix20 at 25pt
  \foreach \j in {0,...,4}
    {\draw[black!60,line width=0.30pt]
       (0,{5.30+0.20*\j}) -- (\scoreFrameHalfBeats,{5.30+0.20*\j});}
  \fill[black,opacity=0.035]
    (\scoreDisplayedEndHalfBeats,5.20) rectangle (\scoreBarlineHalfBeats,6.18);
  \node at (-1.55,5.70) {{\scoreclef\char71}};
  \draw[black!60,line width=0.35pt] (0,5.30) -- (0,6.10);
  \draw[black!68,line width=0.55pt]
    (\scoreBarlineHalfBeats,5.30) -- (\scoreBarlineHalfBeats,6.10);
  \draw[black!60,double distance=1.1pt,line width=0.30pt]
    (\scoreFrameHalfBeats,5.30) -- (\scoreFrameHalfBeats,6.10);
  \def\drawscorenote#1#2#3#4{%
    \ifnum#3=1
      \draw[fill=black,rotate around={-20:(#1,#2)}]
        (#1,#2) ellipse (0.34 and 0.10);
    \else
      \draw[fill=white,rotate around={-20:(#1,#2)}]
        (#1,#2) ellipse (0.34 and 0.10);
    \fi
    \ifdim #2pt>5.65pt
      \draw[line width=0.48pt] ({#1-0.30},#2) -- ({#1-0.30},{#2-0.75});
    \else
      \draw[line width=0.48pt] ({#1+0.30},#2) -- ({#1+0.30},{#2+0.75});
    \fi
    \ifnum#3=3
      \ifdim #2pt<5.70pt
        \fill ({#1+0.58},{#2+0.10}) circle (0.075);
      \else
        \fill ({#1+0.58},#2) circle (0.075);
      \fi
    \fi
    \ifdim #2pt<5.20pt
      \draw[line width=0.35pt] ({#1-0.60},#2) -- ({#1+0.60},#2);
    \fi
    \node[font=\tiny] at (#1,4.30) {#4};%
  }
  \def\drawscoreendpoint#1#2#3#4{%
    \draw[scorelatest,line width=0.62pt] (#1,#2) circle (0.17);
    \fill[scorelatest] (#1,#2) circle (0.075);
    \node[font=\tiny,text=black!72,anchor=north] at (#1,4.90) {#4};
  }
  \def\drawscorerest#1{%
    \draw[black!62,line width=0.55pt]
      ({#1-0.20},5.88) -- ({#1+0.12},5.65)
      -- ({#1-0.10},5.42) -- ({#1+0.20},5.28);
  }

  \foreach \yy/\lab in {0/$\mathrm C_4$,1/$\mathrm E_4$,
                           2/$\mathrm G_4$,3/$\mathrm C_5$}
    {\draw[black!18,line width=0.28pt] (0,\yy) -- (\scoreFrameHalfBeats,\yy);
     \node[font=\scriptsize,anchor=east] at (-0.24,\yy) {\lab};}
  \foreach \xx in {0,...,\scoreFrameHalfBeats}
    {\draw[black!7,line width=0.20pt] (\xx,-0.12) -- (\xx,3.12);}
  \foreach \xx in {0,2,...,\scoreFrameHalfBeats}
    {\draw[black!13,line width=0.25pt] (\xx,-0.12) -- (\xx,3.12);}
  \foreach \xx/\beat in {0/0,8/4,16/8,24/12}
    {\draw[black!22,line width=0.32pt] (\xx,-0.12) -- (\xx,3.12);
     \node[font=\tiny,anchor=north] at (\xx,-0.17) {\beat};}
  \draw[black!22,line width=0.32pt] (30,-0.12) -- (30,3.12);
  \node[font=\tiny,anchor=north east] at (30,-0.17) {15};
  \draw[black!22,line width=0.32pt] (32,-0.12) -- (32,3.12);
  \node[font=\tiny,anchor=north east] at (32,-0.17) {16};
  \node[font=\tiny,anchor=north east] at (34,-0.17) {17};
  \draw[black!38,line width=0.55pt]
    (\scoreBarlineHalfBeats,-0.12) -- (\scoreBarlineHalfBeats,3.12);
  \node[font=\scriptsize,anchor=north]
    at ({0.5*\scoreFrameHalfBeats},-0.62)
    {final readout: 15; sound ends: 16; silence: 16--17};

  \def\scoreRamp{0.40}
  \def\trailopacity{0.07}
  \tikzset{ensemble/.style={draw=black!60,opacity=\trailopacity,line width=0.32pt}}
  \def\drawensemble#1#2#3#4#5#6#7#8{%
    \pgfmathtruncatemacro{\pA}{int(#1/3)}
    \pgfmathtruncatemacro{\pB}{int(#2/3)}
    \pgfmathtruncatemacro{\pC}{int(#3/3)}
    \pgfmathtruncatemacro{\pD}{int(#4/3)}
    \pgfmathtruncatemacro{\pE}{int(#5/3)}
    \pgfmathtruncatemacro{\pF}{int(#6/3)}
    \pgfmathtruncatemacro{\pG}{int(#7/3)}
    \pgfmathtruncatemacro{\pH}{int(#8/3)}
    \pgfmathtruncatemacro{\nA}{2*(mod(#1,3)+1)}
    \pgfmathtruncatemacro{\nB}{2*(mod(#2,3)+1)}
    \pgfmathtruncatemacro{\nC}{2*(mod(#3,3)+1)}
    \pgfmathtruncatemacro{\nD}{2*(mod(#4,3)+1)}
    \pgfmathtruncatemacro{\nE}{2*(mod(#5,3)+1)}
    \pgfmathtruncatemacro{\nF}{2*(mod(#6,3)+1)}
    \pgfmathtruncatemacro{\nG}{2*(mod(#7,3)+1)}
    \pgfmathtruncatemacro{\nH}{2*(mod(#8,3)+1)}
    \pgfmathtruncatemacro{\hA}{0}
    \pgfmathtruncatemacro{\hB}{\hA+\nA}
    \pgfmathtruncatemacro{\hC}{\hB+\nB}
    \pgfmathtruncatemacro{\hD}{\hC+\nC}
    \pgfmathtruncatemacro{\hE}{\hD+\nD}
    \pgfmathtruncatemacro{\hF}{\hE+\nE}
    \pgfmathtruncatemacro{\hG}{\hF+\nF}
    \pgfmathtruncatemacro{\hH}{\hG+\nG}
    \pgfmathtruncatemacro{\hI}{\hH+\nH}
    \draw[ensemble] (\hA,\pA)
      -- ({\hB-\scoreRamp},\pA) -- ({\hB+\scoreRamp},\pB)
      -- ({\hC-\scoreRamp},\pB) -- ({\hC+\scoreRamp},\pC)
      -- ({\hD-\scoreRamp},\pC) -- ({\hD+\scoreRamp},\pD)
      -- ({\hE-\scoreRamp},\pD) -- ({\hE+\scoreRamp},\pE)
      -- ({\hF-\scoreRamp},\pE) -- ({\hF+\scoreRamp},\pF)
      -- ({\hG-\scoreRamp},\pF) -- ({\hG+\scoreRamp},\pG)
      -- ({\hH-\scoreRamp},\pG) -- ({\hH+\scoreRamp},\pH)
      -- (\hI,\pH);%
  }
  \def\drawscorebundle#1#2#3{%
    \pgfmathsetmacro{\bundleweight}{max(0,min(1,(#3-30)/80))}%
    \pgfmathsetmacro{\bundleopacity}{0.035+0.105*\bundleweight}%
    \pgfmathsetmacro{\bundlewidth}{0.30+0.36*\bundleweight}%
    \draw[draw=black!55,opacity=\bundleopacity,line width=\bundlewidth pt]
      (#1,#2) -- ({#1+1},#2);%
  }
  \drawscoreensemble
  \drawdisplayedscore
  \drawdisplayedpath
  \foreach \xx in {0,6,12,18,24,30,32,34}
    {\draw[black,opacity=0.16,line width=0.28pt]
      (\xx,5.20) -- (\xx,6.28);}
  \foreach \xa/\xb in {0/6,6/12,12/18,18/24,24/30,30/32,32/34}
    {\draw[black,opacity=0.38,line width=0.35pt]
      (\xa,6.28) -- (\xb,6.28);}
  \node[font=\tiny,text=black!72] at (3,6.48) {$S=1:\ (3)$};
  \node[font=\tiny,text=black!72] at (9,6.48) {$S=2:\ (1,1,1)$};
  \node[font=\tiny,text=black!72] at (15,6.48) {$S=0:\ (2,1)$};
  \node[font=\tiny,text=black!72] at (21,6.48) {$S=1:\ (3)$};
  \node[font=\tiny,text=black!72] at (27,6.48) {$S=2:\ (1,1,1)$};
  \node[font=\tiny,text=black!58] at (31,6.76) {hold};
  \node[font=\tiny,text=black!58] at (33,6.76) {rest};
  \node[font=\scriptsize,anchor=west,text=black!55] at (0,3.40)
    {256 seeded reruns; span: $\scoreEnsembleMinEndBeats$--$\scoreEnsembleMaxEndBeats$ beats};
  \node[font=\scriptsize,anchor=east,text=black!55]
    at (\scoreFrameHalfBeats,3.40)
    {final hold $\mid$ silence};
\end{tikzpicture}%
}
\vspace{0em}
\textbf{(d)}\quad Circuit-sampled score realizations
\end{minipage}

\caption{\small\captionlead{Quantum rhythm selection and its sounded realizations.}
\textbf{(a)} The qutrit outcome in the main-text circuit,
Eq.~\eqref{eq:qutrit-schedule-control-main}, selects one of the three
admissible rhythms of the same $U_1,U_2,U_3$ passage. Every rhythm includes
the final readout; the excluded schedule $(1,2)$ is never selected.
\textbf{(b)} The seed-1766 performance begins at an observed endpoint.
Five complete passes occupy beats 0--15, with selector outcomes $1,2,0,1,2$.
Each pass begins in $|00\rangle$ while the preceding sound is held until the
next readout. Filled, open, and dotted-open noteheads denote one-, two-, and
three-beat holds, centered over their intervals. The circle at beat 15 marks
the mandatory final readout; its $\mathrm C_4$ is held until beat 16,
followed by one silent beat to beat 17. This terminal hold adds no quantum
operation or readout.
\textbf{(c)} Green points inside dashed circles show the common endpoint law
$(9,3,3,1)/16$ from $|00\rangle$ for $(3)$, $(2,1)$, and $(1,1,1)$.
Gray crosses show the excluded $(1,2)$ law,
approximately $(0.4032,0.2318,0.2318,0.1332)$; horizontal segments display
the probability displacement.
\textbf{(d)} Pale gray paths show 256 seeded simulator reruns, each retaining
eight readouts and ending at its actual terminal beat. Their spans range
from $\scoreEnsembleMinEndBeats$ to $\scoreEnsembleMaxEndBeats$ beats.
The highlighted performance has its final readout at 15, holds its final
sound through 16, and is silent until 17. Ramps soften only the visual joins
at readout boundaries; hold durations remain unchanged. Shared same-pitch
background edges darken with their rerun count. The quantum measurement
outcomes determine subsequent state preparation; the boundary-dependent
pitch map assigns the classical sounds.}
\label{fig:emulated-quantum-score}
\end{figure}
\endgroup

The upper wire carries the selector qutrit; after $V^{\rm sel}$, its
measurement supplies the classical value $S$. The double lines carry this
value to the two conditional readouts on the lower data wire. The first is
enabled only for $S=2$, the second for $S=0$ or $2$, and the endpoint is
always read out. Thus $S=0,1,2$ select $(2,1)$, $(3)$, and $(1,1,1)$,
respectively; the failing $(1,2)$ mask is never selected. The microphone
marks the optional prior sound history controlling $V^{\rm sel}$ between
passes, and the speakers mark sounds assigned to readout outcomes.
Every enabled readout starts its assigned sound anew, including when the
pitch repeats. Every enabled intermediate readout retains its classical
outcome, and the endpoint outcome
$Y_{\rm end}$ is always sounded.

\begin{samepage}
With adjacent circuit boundaries separated
by $\Delta$, the current sound is held until the next enabled readout,
producing one-, two-, or three-$\Delta$ holds.  The four pitches and three
durations therefore define twelve pitch--duration symbols.
\par
\end{samepage}

The displayed seed-1766 window in Figure~\ref{fig:emulated-quantum-score}
begins at an observed endpoint. The entrance sound carries over from that
endpoint while the data state is freshly prepared in $|00\rangle$ for each
pass. Five successive selector outcomes are \(1,2,0,1,2\), and the five
complete circuit passes occupy 15 beats. The mandatory final readout
\(Y_{\rm end}\) at \(t=15\Delta\) closes the fifth pass. Its assigned pitch
$\mathrm C_4$ is held until \(t=16\Delta\), followed by one silent beat ending
at \(t=17\Delta\). This terminal hold is a playback convention after the
measured passage and adds no quantum operation or readout.
The figure also shows 256 separate eight-readout comparison excerpts; their
sampled outcomes and durations are independent of this displayed terminal hold.

In the Walsh family, every readout schedule is admissible on every contiguous
passage. For the displayed two-qubit score, precisely $(3)$, $(2,1)$, and
$(1,1,1)$ are admissible on the full three-step passage, whereas $(1,2)$ fails
the Born-kernel composition condition.
A score is cue-stable when the Born kernel of every independently initialized
contiguous passage equals the ordered product of its step kernels, for every
context-basis entrance.
By Theorem~\ref{thm:full-support-capacity-main}, a cue-stable
$d$-outcome score contains at most $d$ full-support note
operations,
with equality attainable in every prime dimension.

\Needspace{14\baselineskip}
\paragraph{Transition prediction.}
Given a circuit that generates the next musical output, the prediction
question asks whether the probability of a designated output lies above an
upper threshold or below a lower threshold, promised that one case holds.
It concerns the output distribution across repeated preparations.

The computational model now ranges over arbitrary quantum circuits whose
numbers of gates and qubits grow polynomially with the input length. The
fixed qubit examples and the $3\times3$ qutrit selector illustrate the
musical mechanisms. The completeness theorem concerns this growing circuit
family, without imposing universal Born-kernel composition. Universally
composing reversible state machines and tests of boundary stability retain
the separate scopes established earlier.

\begin{definition}[Quantum musical-transition prediction
  (\textup{\textsc{QMTP}})]
\label{def:qmp-main}
Fix an integer $c\geq1$. An instance consists of an explicitly encoded
polynomial-size quantum circuit $Q$,
initialized in $|0^n\rangle$, with a distinguished output qubit whose outcomes
carry two fixed musical labels, and rational thresholds $0\leq a<b\leq1$.
Let $p_Q$ be the probability of the designated label and let
$\ell:=|\langle Q,a,b\rangle|$ be the total input bit length.
The thresholds have inverse-polynomial separation $b-a\geq\ell^{-c}$.
Decide whether
\begin{equation}
 p_Q\geq b
 \qquad\text{or}\qquad
 p_Q\leq a,
 \label{eq:qmp-promise-main}
\end{equation}
promised that one alternative holds.
\end{definition}
Replacing the two musical labels with ordinary machine output symbols leaves
the probabilities, promise gap, and prediction problem unchanged.

\Needspace{5\baselineskip}
\begin{theorem}[\PromiseBQP-completeness]
\label{thm:qmp-promisebqp-main}
\begin{sloppypar}

For every fixed $c\geq1$, quantum musical-transition prediction
(\textup{\textsc{QMTP}}) is \PromiseBQP-complete under deterministic
classical polynomial-time many-one reductions.
\end{sloppypar}
\end{theorem}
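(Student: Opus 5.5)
The proof has two halves: membership of \textsc{QMTP} in \PromiseBQP, and \PromiseBQP-hardness under deterministic polynomial-time many-one reductions. Both will go through the standard promise version of circuit acceptance-probability estimation. The musical labels play no role, as noted after Definition~\ref{def:qmp-main}. We may therefore identify the designated label with the output outcome $1$, relabelling by a final $X$ gate if necessary.

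\textbf{Membership.} Given an instance $\langle Q,a,b\rangle$, the uniform quantum algorithm proceeds in three steps:
\begin{enumerate}
 \item It parses the explicit encoding of $Q$. Since $Q$ has polynomial size in $\ell$, the encoded gate set is either a fixed finite universal set or can be compiled into one by Solovay--Kitaev with per-gate error at most $\ell^{-c}/(8|Q|)$.
 \item It runs $Q$ independently $N=O(\ell^{2c})$ times on $|0^n\rangle$ and measures the output qubit each time. Let $\hat p$ be the empirical frequency of the designated label.
 \item It accepts iff $\hat p\ge(a+b)/2$.
\end{enumerate}
The compilation error shifts $p_Q$ by at most $\ell^{-c}/8$. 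By Hoeffding, $|\hat p-p_Q|<\ell^{-c}/8$ except with probability $2e^{-N\ell^{-2c}/32}\le1/3$ for a suitable constant in $N$. Because $b-a\ge\ell^{-c}$, the combined deviation stays below $(b-a)/2$, so the promise cases are separated with bounded error. The only point requiring care is that the thresholds are rational with polynomial bit length, so $(a+b)/2$ is computable exactly.

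\textbf{Hardness.} Let $(\Pi_{\rm yes},\Pi_{\rm no})\in\PromiseBQP$, decided by a polynomial-time uniform family $\{C_x\}$ with acceptance probability $\ge2/3$ on yes-instances and $\le1/3$ on no-instances. The reduction maps $x$ to the triple $\langle C_x,1/3,2/3\rangle$, with the accepting output qubit designated and given the musical label. This map is deterministic polynomial time by uniformity.

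The gap is $b-a=1/3$, and we need $1/3\ge\ell^{-c}$, where $\ell=|\langle C_x,1/3,2/3\rangle|$. This holds whenever $\ell\ge3$, which we can guarantee by padding the encoding with identity gates. For fixed $c\ge1$ we have $\ell^{-c}\le\ell^{-1}\le1/3$.

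The promise is preserved exactly, since $p_{C_x}\ge2/3=b$ on yes-instances and $p_{C_x}\le1/3=a$ on no-instances. So $x\in\Pi_{\rm yes}$ iff the image instance is a yes-instance.

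\textbf{Main obstacle.} The main obstacle is bookkeeping rather than mathematics. The promise gap is measured against the total input length $\ell$, which includes the thresholds' encoding. We must therefore check two things.
\begin{itemize}
 \item In membership, $N$ stays polynomial in $\ell$. This is automatic, since $N=O(\ell^{2c})$ with $c$ fixed.
 \item In hardness, the constant gap dominates $\ell^{-c}$. This is the padding remark above.
\end{itemize}
We also need the gate-encoding convention to admit efficient approximate compilation. I would fix a finite universal gate set with algebraic entries, or allow rational rotation angles with Solovay--Kitaev, and state this convention explicitly before the argument.
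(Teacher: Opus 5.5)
Your proposal is correct and takes essentially the same route as the paper. Membership uses $O(\ell^{2c})$ independent runs with a Hoeffding comparison against the midpoint $(a+b)/2$, and hardness designates the verifier's acceptance qubit and pads with identity gate pairs so that $\ell^{-c}\le\ell^{-1}$ stays below the promise gap.

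The differences are cosmetic. The paper passes the verifier's own inverse-polynomially separated thresholds $\alpha_w,\beta_w$ rather than your amplified $1/3,2/3$, and it copies the acceptance bit to a fresh output qubit by a CNOT. It also fixes the gate set $\{H,T,C_{\rm NOT}\}$, so your Solovay--Kitaev compilation step is not needed there.
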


This completeness is inherited from standard circuit-output
prediction: sampling relative to the thresholds' midpoint gives membership,
while hardness maps a verifier's acceptance bit to the designated musical label
~\cite{JanzingWocjan2007,WatrousComplexity}. The complete gate-set conventions,
encoding, and reduction are given in
Appendix~\ref{app:quantum-musical-prediction}.

The double-Hadamard passage is a workhorse of consistent-histories
analysis~\cite{Griffiths,DowkerHalliwell,PazZurek,HalliwellReview}.  In the
present fixed-context formulation it maximally violates Born-kernel
composition, attaining the maximal normalized Born--Chapman--Kolmogorov
current $\nu_{\C}(H,H)=1$, and thereby connects this prediction problem
directly to readout timing.
Prepare $|0\rangle$,
apply $H$ followed by $H$, and always read out the endpoint. A separate
circuit enables the intermediate readout when its output is $S=1$, with
probability $p$. With $Y$ the final outcome,
\[
 \Pr(Y=1\mid S=0)=0,\qquad
 \Pr(Y=1\mid S=1)=\frac12,\qquad
 \Pr(Y=1)=\frac p2.
\]
Without the intermediate readout, $H^2=\openone$ returns the prepared state;
with it, the final outcome is uniform. The readout choice selects the rhythms
$(2)$ and $(1,1)$, whose endpoint kernels differ for this passage. Assigning
distinct sounds to $Y=0,1$ makes the endpoint probability an audible-output
probability. The input thresholds $a,b$ become $a/2,b/2$, so their
separation remains inverse polynomial. The resulting promise problem is
\PromiseBQP-complete
by Corollary~\ref{cor:qmp-endpoint-check-app}. Here the endpoint distribution
depends on rhythm selection; the three admissible rhythms in
Figure~\ref{fig:emulated-quantum-score} share one endpoint distribution.

If quantum musical-transition prediction were efficiently
classically predictable, then
$\mathrm{BQP}=\mathrm{BPP}$.

\Needspace{12\baselineskip}
\paragraph{Experimental outlook.}
The normalized magnitude $\nu_{\C}(V,U)$ of the composition defect is
operationally estimable.
For each context-basis input, one compares endpoint frequencies from two
executions of the same two-step passage, differing only by the insertion of an
unread context measurement at the intermediate boundary; the resulting
kernels determine $\nu_{\C}$ through their normalized Frobenius separation.
For the saturating Hadamard pair of Theorem~\ref{thm:bound-main}, a sequence of
two balanced beam splitters gives $K_{\rm end}=\openone$ and
$K_{\rm step}=J_2/2$, hence $\nu_{\C}(H,H)=1$: the endpoint-only protocol
returns either prepared alternative with certainty, whereas the stepwise
protocol is uniform~\cite{FeynmanLectures}.  More generally, for a fixed
context, a sequence is boundary-stable precisely when these endpoint kernels
agree for every context-basis input and every internal cut of every contiguous
subpassage.  Mid-circuit measurement has been demonstrated in superconducting
and trapped-ion processors~\cite{CorcolesDynamic,ZhuMidCircuit}, making direct
estimation of $\nu_{\C}$ experimentally feasible.  Theorem~\ref{thm:qmp-promisebqp-main}
addresses the distinct prediction problem: an efficient classical algorithm
for predicting the designated musical output of arbitrary quantum circuits
would imply $\mathrm{BQP}=\mathrm{BPP}$.

\Needspace{8\baselineskip}
\section{Conclusion}

The quantum--classical interface is rich: unread intermediate measurements can change the quantum state without changing the
endpoint statistics, while genuinely mixing sequences can reproduce
their stochastic transition law at every prefix for arbitrarily many steps.
That agreement can nevertheless fail when the state is freshly prepared at
an internal boundary and the unchanged remaining operations are applied.
Endpoint fidelity from the
preparation boundary need not supply a reusable stochastic law on
proper subintervals.

Boundary stability means that the
operation-supplied transition laws must
remain valid on every independently initialized contiguous passage.  This
demand has a sharp finite-dimensional cost: at most $d$ full-support steps in
dimension $d$, with the bound attained in every prime dimension.  The dense
two-qubit realization shows that exact composition can survive entangled
dynamics with active interference.

The Born--Chapman--Kolmogorov current records the composition defect, and its
normalized Frobenius norm makes the failure quantitative.
The associated conditional
schedule information measures, in bits per run and conditioned on the chosen
input, what the endpoint reveals about a fair choice
between coherent and checked protocols; it vanishes exactly when their
endpoint laws agree.

The state-machine connection provides a foundation for a quantum theory of
music: unitary operations are notes, temporal boundaries are cues, and a
classical output map assigns sounds to readout outcomes. Born-kernel
composition determines which readout rhythms preserve a score's endpoint
distribution. A score can preserve that distribution under stepwise readout
from its entrance cue yet fail the corresponding test after fresh preparation
at an internal cue. This is the quantum music paradox.
For the circuit-universal score model, promised designated-transition
prediction is \PromiseBQP-complete.

Our question remains
narrower than the general measurement problem, but identifies a common
boundary between measurement-induced consistency, stochastic processes,
and computation.  This framework introduces broad classes of quantum state machines.

\enlargethispage{\baselineskip}
\medskip
\noindent\textbf{Acknowledgments.}
I thank Ian Durham, Dimitry Guskov, George Musser Jr., Cai Waegell,
Guy-Philippe Nadon, and Michael McGuffin for helpful discussions, and all
participants in the \'ETS-organized \emph{Des quanta aux qualia} workshop, held
in Montr{\'e}al and Eastman, Qu{\'e}bec, 11--15 August 2026, where this work
was presented.  Audio rendering uses
\href{https://github.com/Tonejs/audio/tree/master/salamander}{Salamander Grand Piano recordings by Alexander Holm},
licensed \href{https://creativecommons.org/licenses/by/3.0/}{CC BY 3.0}.
The recordings render circuit-generated outcomes and do not determine the
probability law.  Generative AI tools assisted with literature searches and
the refinement of Python, Mathematica, and JavaScript code for algebraic
and numerical examples.  The author checked the cited sources and resulting
calculations and takes responsibility for the manuscript's claims and
conclusions.  This work was
supported by the Minist\`ere de l'\'Economie, de
l'Innovation et de l'\'Energie du Qu\'ebec (MEIE) through the MEIE Principal
Research Chair in Quantum Information Processing.

\clearpage
\begin{center}
\fbox{\begin{minipage}{0.94\linewidth}
\small
\raggedright
\textbf{Popular summary.}

\medskip
\textbf{The problem.}

Quantum probabilities for a multistep process generally cannot be
reconstructed by multiplying step-by-step transition probabilities: checking
the intermediate steps can change the final outcome statistics.

\textbf{The question.}

When can a sequence of unitary operations be measured at every intermediate
step without changing the final outcome distribution?  How badly can this agreement fail
otherwise?

\textbf{The answer.}

We classify when stepwise and endpoint statistics agree when measured on every contiguous
interval.  We then introduce a normalized measure of their disagreement, prove
its bound is sharp, and prove that deciding whether a designated musical
output probability lies above or below promised thresholds is
\PromiseBQP-complete. The prediction
result also applies to ordinary machine output symbols.

\textbf{The consequence.}

Applied to quantum music\footnote{Interactive project page with musical demonstrations: \url{https://institut-kets.github.io/quantum-composition/\#composition-heading}.}, unitary operations are notes and temporal
boundaries are cues. Recorded outcomes produce sounds, and the times between
readouts determine how long those sounds are held. Composition identifies
the rhythms that preserve the final outcome distribution. Reading out after
every operation can preserve that distribution from the opening cue, yet
change it when the unchanged remaining operations begin from a fresh
preparation at a later cue. This is the quantum music paradox.
\end{minipage}}
\end{center}

\clearpage
\pagestyle{appendixpages}
\section*{Appendix to ``The Quantum Composition Paradox''}
\par
{\noindent\textbf{Jacob Biamonte}\quad
\small {\'E}TS Montr{\'e}al, Universit{\'e} du Qu{\'e}bec, Montreal, QC, Canada\par}
\medskip

This appendix is organized around verification of the main paper's claims.
Compact conventions and a proof map are followed by the universal
classification and its state-machine consequence, prescribed-pair and
boundary results, quantitative bounds, and the analytic circuit. Musical
realizations and transition-prediction complexity come last, so their
mathematical foundations are available before the application.

\appendix
\startcontents[appendix]
\section*{\contentsname}
\begingroup
\small
\printcontents[appendix]{}{1}{}
\endgroup
\medskip
\clearpage

\section{Conventions and proof map}
\label{app:formal-opening}

Section~\ref{app:classification} proves the universal classification and
its reversible-state-machine consequence. Section~\ref{app:qubit} develops
the prescribed-pair criteria, prefix and boundary laws, and their relation
to dephasing and histories. Section~\ref{app:quantitative-bounds} gives the
schedule-information, current, and capacity bounds.
Section~\ref{app:walsh} verifies the analytic three-step circuit.
Section~\ref{app:quantum-music-complexity} supplies the musical realizations
and the prediction and sampling proofs.

\subsection{Fixed context and chronological conventions}
\label{app:standing-notation}

\begin{remark}[Standard notation]
\label{rem:standard-notation-app}
We write \(\ii:=\sqrt{-1}\), \(\openone\) for the identity, \(\Tr\) for the
trace, and \(\U(d)\) for the unitary group. The symbols \(A^\dagger\),
\(\overline A\), and \(A^{\mathsf T}\) denote conjugate transpose,
entrywise conjugation, and transpose. We use the norms
\[
 \|A\|_{\mathrm F}:=[\Tr(A^\dagger A)]^{1/2},\qquad
 \|A\|_{\mathrm{op}}:=\sup_{\|v\|_2=1}\|Av\|_2.
\]
\end{remark}

\begin{definition}[Finite fixed-context model]
\label{def:fixed-context-model-app}
Fix \(\mathsf H\cong\mathbb C^d\), \(d\geq2\), the outcome set
\(X=\{0,\ldots,d-1\}\), and the atomic context
\(\C=\{P_x=|x\rangle\langle x|:x\in X\}\) of an orthonormal basis%
, with \(\sum_xP_x=\openone\).
A \(\C\)-diagonal preparation is
\(\rho=\sum_xq(x)P_x\), where \(q\) is a probability distribution.
The Born kernel of \(U\in\U(d)\) and the unread context measurement are
\[
 [\Born_{\C}(U)]_{yx}:=\Tr(P_yUP_xU^\dagger)
 =|\langle y|U|x\rangle|^2,\qquad
 \Mc(\rho):=\sum_xP_x\rho P_x.
\]
We abbreviate \(\Born_{\C}\) to \(\Born\) when the context is unambiguous.
The Born kernel acts on outcome probability vectors.
\end{definition}

\begin{definition}[Stochastic kernels and chronological composition]
\label{def:kernel-conventions-app}
A stochastic kernel on \(X\) is a real matrix \(K=(K_{yx})\) with
\(K_{yx}\geq0\) for all \(x,y\) and \(\sum_yK_{yx}=1\) for every input \(x\).  Columns label
inputs, rows label outputs, and column probability vectors evolve as
\(q'=Kq\).  A kernel is \emph{doubly
stochastic} (also called \emph{bistochastic}) when its rows also sum to one and
\emph{unistochastic} when \(K=\Born_{\C}(U)\) for some unitary \(U\).
Throughout, \(V\circ U\) means that \(U\) acts
before \(V\); the same
chronological convention is used for unitary operations and stochastic
kernels.
\end{definition}

\begin{definition}[Passages, boundaries, and restarts]
\label{def:passage-boundary-app}
A prescribed passage is an ordered finite sequence
\((U_1,\ldots,U_L)\in\U(d)^{\times L}\), with \(U_1\) acting first.
For \(1\leq a\leq b\leq L\), write
\[
 U_{b:a}:=U_b\circ U_{b-1}\circ\cdots\circ U_a.
\]
for the contiguous passage containing the steps
\(U_a,\ldots,U_b\), with both endpoint steps included; in particular,
\(U_{k:k}=U_k\).  Set
\(K_{b:a}:=\Born_{\C}(U_{b:a})\).  Number the temporal boundaries
\(0,\ldots,L\), so that \(U_k\) carries boundary \(k-1\) to boundary \(k\)
and \(U_{b:a}\) begins at boundary \(a-1\) and ends at boundary \(b\).
A \emph{restart immediately before \(U_a\)}, at boundary \(a-1\),
means preparing a fresh \(\C\)-diagonal state there and applying
\(U_{b:a}\); it does not mean carrying forward the quantum state from an
earlier passage. In the musical semantics, every temporal boundary is a
\emph{cue}; a restart specifies a fresh preparation at the chosen cue.
\end{definition}

\begin{remark}[Step-index check]
\label{rem:boundary-index-check-app}
A three-step passage has boundaries \(0,1,2,3\):
\begin{equation*}
 U_{3:1}=U_3\circ U_2\circ U_1,\qquad
 U_{3:2}=U_3\circ U_2.
\end{equation*}
This inclusive step-index convention is used throughout the paper
and appendix.
\end{remark}

\subsection{Composition scopes}
\label{app:axioms}

\begin{definition}[Composition scopes and operation-supplied genealogy]
\label{def:composition-scopes-app}
\label{def:operation-supplied-genealogy-app}
\label{def:prefix-boundary-stability-app}
An interval family \(\mathscr I\) consists of inclusive step intervals
\([a,b]\). A cut after \(c\), with \(a\leq c<b\), is allowed when
\([a,b],[a,c],[c+1,b]\in\mathscr I\). The actual interval Born kernels
carry a \emph{Born-induced stochastic genealogy}, or
\emph{operation-supplied genealogy}, when
\begin{equation}
 K_{b:a}=K_{b:c+1}\circ K_{c:a}
 \qquad\text{at every allowed cut}.
 \label{eq:genealogy-criterion}
\end{equation}
A sequence is \emph{prefix-consistent} when
\begin{equation}
 \Born_{\C}(U_{k:1})
 =\Born_{\C}(U_k)\circ\cdots\circ\Born_{\C}(U_1)
 \qquad(2\le k\le L),
 \label{eq:appendix-prefix-law}
\end{equation}
and \emph{boundary-stable} when
\begin{equation}
 \begin{aligned}
  \Born_{\C}(U_{b:a})
  &=\Born_{\C}(U_b)\circ\cdots\circ\Born_{\C}(U_a),\\[-1pt]
  &\hspace{1em}(1\le a\le b\le L).
 \end{aligned}
 \label{eq:appendix-cue-law}
\end{equation}
Boundary stability is equivalent to a genealogy on all contiguous step
intervals; prefix consistency is weaker. In the musical semantics,
boundary stability is \emph{cue stability} under fresh preparation at any cue.
\end{definition}

\begin{remark}[Four composition scopes]
\label{rem:four-composition-scopes-app}
A pair identity concerns two adjacent blocks; prefix consistency concerns
every passage from boundary \(0\); boundary stability concerns every
contiguous passage. Universal composition quantifies over every ordered
pair in a composition-closed family (Section~\ref{app:classification}).
An identity only for the complete passage's endpoint kernel does not
impose prefix consistency.
\end{remark}

\begin{definition}[Support and qubit mixing]
\label{def:support-mixing-app}
A stochastic kernel has \emph{full support} when every entry is strictly
positive.  For qubit Born kernels only, \emph{genuinely mixing} means full
support.  We reserve \emph{strictly positive} for an entrance distribution
\(\pi\) satisfying \(\pi(x)>0\) for every \(x\).  A \emph{unitary monomial}
is a product \(\Delta\circ\Pi\) of a diagonal unitary \(\Delta\) and a
permutation matrix \(\Pi\);
phase-dressed permutation
is its descriptive
synonym.
\end{definition}

\begin{remark}[Register and qubit notation]
\label{rem:global-notation-app}
We reserve \(L\) for the number of passage steps and \(n\) for the number of
qubits. An \(n\)-qubit register has \(d=N:=2^n\) computational-basis outcomes
and wires \(q_0,\ldots,q_{n-1}\). For the Pauli matrices
\(\sigma_x,\sigma_y,\sigma_z\), we use
\(R_\sigma(\theta):=e^{-\ii\theta\sigma/2}\).
\end{remark}

\paragraph{Remark (context-diagonal preparations).}
\phantomsection\label{app:all-context-inputs}
For an entrance probability vector $p$ and the preparation
$\rho=\sum_xp_xP_x$, the context-outcome distribution after $U$ is
\[
 q_y=\Tr(P_yU\rho U^\dagger)
     =\sum_x[\Born_{\C}(U)]_{yx}p_x,
 \qquad q=\Born_{\C}(U)p.
\]
Thus two endpoint kernels agree if and only if they give the same output
distribution for every context-basis preparation, equivalently for every
$\C$-diagonal preparation.
The displayed equality follows from linearity in
$\rho$. Kernel equality
implies equality on every probability vector. Conversely, the preparations
$p_x=\delta_{xz}$ test each column $z$ separately. The same argument applies
to products of Born kernels induced by readout schedules.

\subsection{\texorpdfstring{%
Born kernels and the composition current}%
{Born kernels and the composition current}}

Classical path probabilities multiply, and marginalizing an intermediate
label composes the adjacent kernels by the Chapman--Kolmogorov
rule~\cite{app:Chapman,app:Kolmogorov}. The current measures the failure of
this rule for the Born kernels of prescribed unitary passages.

\begin{definition}[Born--Chapman--Kolmogorov current]
\label{def:bck-current-app}
For \(U\) followed by \(V\), the endpoint-checked and stepwise-checked
kernels are
\begin{equation}
 \begin{aligned}
 K_{\rm end}&=\Born(V\circ U),\\
 K_{\rm step}&=\Born(V)\circ\Born(U).
 \end{aligned}
 \label{eq:endpoint-stepwise}
\end{equation}
Here \emph{endpoint-checked} means that the passage remains coherent
until one final context readout; \emph{stepwise-checked} inserts the unread
measurement channel \(\Mc\) at each prescribed intermediate boundary.
Their signed difference
\begin{equation}
 \mathcal J_{\C}(V,U):=K_{\rm end}-K_{\rm step}
 \label{eq:defect}
\end{equation}
becomes a dimensionless signed Born-composition defect,
which we also call the Born--Chapman--Kolmogorov current.  Its columns sum to
zero.  It vanishes exactly when an unread context check
after \(U\) preserves every final context probability for every basis
preparation.
\end{definition}

\begin{definition}[Normalized Frobenius norm of the current]
\label{def:normalized-current-app}
Assume $d\geq2$.  For a $d\times d$ complex matrix
$M=(M_{yx})_{x,y\in X}$, the Frobenius norm is the Euclidean length of the
vector formed by its $d^2$ entries:
\begin{equation}
 \|M\|_{\mathrm F}
 :=\left(\sum_{x,y\in X}|M_{yx}|^2\right)^{1/2}
 =\left[\Tr(M^\dagger M)\right]^{1/2}.
 \label{eq:frobenius-norm-app}
\end{equation}
Applying
Eq.~\eqref{eq:frobenius-norm-app} to the current, we define its normalized
Frobenius norm by
\begin{equation}
 \nu_{\C}(V,U)
 :=\frac{1}{\sqrt{d-1}}\left\|\mathcal J_{\C}(V,U)\right\|_{\mathrm F}.
 \label{eq:metric}
\end{equation}

The factor $\sqrt{d-1}$ is the sharp maximum of the current's
Frobenius norm, proved in Theorem~\ref{thm:bound-main}.  Consequently
$0\leq\nu_{\C}(V,U)\leq1$, and $\nu_{\C}(V,U)=0$ exactly when the current
vanishes.
\end{definition}

\section{Universal classification and reversible state machines}
\label{app:classification}

\subsection{Universal composition}

\begin{definition}[Unitary grammar and universal composition]
\label{def:unitary-grammar-app}
\begin{sloppypar}
A \emph{unitary grammar}
$\Gamma$ is a
composition-closed subsemigroup of $\U(d)$ generated by the allowed
operations.  Universal composition means that the canonical Born map
is a homomorphism on every ordered pair:
\begin{equation}
 \Born_{\C}(V\circ U)=\Born_{\C}(V)\circ\Born_{\C}(U)
 \label{eq:strict-extra}
\end{equation}
for every $U,V\in\Gamma$.  This quantifier is stronger than the genealogy law
for one prescribed interval family in Eq.~\eqref{eq:genealogy-criterion}.
\end{sloppypar}
\end{definition}

\subsection{Classification theorem}
\label{app:strict-classification}

\begin{theorem}[%
Classification of Born-consistent unitary semigroups]
\label{thm:semigroup-app}
Let \(\Gamma\subseteq\U(d)\) be closed under composition.  Then
\[
 \Born_{\C}(V\circ U)=\Born_{\C}(V)\circ\Born_{\C}(U)
 \qquad\text{for all }U,V\in\Gamma
\]
if and only if every element of \(\Gamma\) is a unitary monomial in the
context basis.
\end{theorem}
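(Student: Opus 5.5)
The sufficiency direction is a direct computation. If $U=\Delta_U\circ\Pi_U$ and $V=\Delta_V\circ\Pi_V$ are unitary monomials, then each column of $U$ has exactly one nonzero entry, of modulus one. Consequently every entrance $x$ reaches exactly one intermediate outcome $z=\pi_U(x)$. In the path-sum formula \eqref{eq:main-path-comparison} only one path $\gamma$ contributes for each entrance--exit pair, so the cross terms vanish and $\mathcal J_{\C}(V,U)=0$. Equivalently, $\Born(V\circ U)=\Pi_V\Pi_U=\Born(V)\circ\Born(U)$, using the gauge identity $\Born(\Delta\circ\Pi)=\Pi$. Monomials are closed under composition, so this covers every ordered pair in $\Gamma$.

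For necessity, I will use only the composition law restricted to powers of a single element $U\in\Gamma$, together with compactness. Since $\Gamma$ is a semigroup, every power $U^n$ lies in $\Gamma$, and induction on \eqref{eq:semigroup-law} gives $\Born(U^n)=\Born(U)^n$ for all $n\geq1$. The key step is to pass to the inverse. The closure of $\{U^n\}$ in the compact group $\U(d)$ is a compact abelian group, so there is a subsequence $n_j\to\infty$ with $U^{n_j}\to\openone$; this is the standard recurrence argument, since the powers accumulate and $U^{n_j-n_i}$ then approaches $\openone$. Choosing such exponents with $m_j:=n_j-1\geq1$, we get $U^{m_j}\to U^\dagger$. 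The Born map is continuous, so $\Born(U)^{m_j}=\Born(U^{m_j})\to\Born(U^\dagger)=\Born(U)^{\top}$. Multiplying by $\Born(U)$ then yields $\Born(U)^\top\circ\Born(U)=\lim\Born(U)^{n_j}=\lim\Born(U^{n_j})=\Born(\openone)=\openone$, which is exactly \eqref{eq:story-inverse-composition}, with no adjoint closure assumed.

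It remains to show that a unistochastic $B=\Born(U)$ with $B^\top B=\openone$ forces $U$ to be monomial. The diagonal entry $(B^\top B)_{xx}=\sum_y B_{yx}^2$ equals $1$. Column $x$ is a probability vector, so $\sum_yB_{yx}=1$ and $\sum_yB_{yx}^2=1$ together force a single entry equal to $1$ and all others zero. Hence $|\langle y|U|x\rangle|^2\in\{0,1\}$ with exactly one nonzero entry in each column. Unitarity makes the pattern a permutation, so $U=\Delta\circ\Pi$.

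The main obstacle is the recurrence step that replaces the missing inverse. It needs care to ensure the exponents $m_j$ are positive and that convergence $U^{n_j}\to\openone$ is genuine. I would argue this via the eigenvalues $e^{\ii\phi_k}$ of $U$, using simultaneous Dirichlet approximation so that $n\phi_k$ is near $0$ mod $2\pi$ for all $k$, or alternatively by pigeonhole on the compact orbit. Everything else reduces to finite-dimensional bookkeeping.
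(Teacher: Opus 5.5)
Your proof is correct. It shares the paper's skeleton: powers stay in $\Gamma$, $\Born(U^m)=\Born(U)^m$, recurrence $U^{n_j}\to\openone$, and continuity of $\Born$. It differs in how you turn recurrence into the permutation conclusion.

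The paper feeds $A^{m_j}\to\openone$ into Lemma~\ref{lem:recurrent-bistochastic}. Every bistochastic matrix is a Euclidean contraction (Jensen, Lemma~\ref{lem:bistochastic-contraction}), so $\|v\|_2=\lim_j\|A^{m_j}v\|_2\le\|Av\|_2\le\|v\|_2$. Hence $A$ is an isometry, and therefore a permutation.

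You instead shift the recurrence by one step, $U^{n_j-1}\to U^\dagger$, and use $\Born(U^\dagger)=\Born(U)^{\mathsf T}$. This recovers the inverse-composition identity \eqref{eq:story-inverse-composition} even though $\Gamma$ need not contain $U^\dagger$. The relation $A^{\mathsf T}A=\openone$ then follows immediately, and your column-norm argument is exactly the final step of Lemma~\ref{lem:bistochastic-contraction}.

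What your route buys: it avoids the contraction inequality entirely. It also shows explicitly that semigroup closure plus compactness of $\U(d)$ supplies the missing adjoint, which ties the theorem directly to the motivating inverse identity in the main text.

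What the paper's route buys: it uses the unitary origin of $A$ only through recurrence of powers. Its Lemma~\ref{lem:recurrent-bistochastic} holds for any bistochastic matrix whose powers return arbitrarily close to the identity. So it would apply to composition-respecting maps into bistochastic kernels in general, not only when a transpose-adjoint relation like $\Born(U^\dagger)=\Born(U)^{\mathsf T}$ is available.

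Your sufficiency direction matches the paper's direct computation.
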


\subsection{Bistochastic contractions}

\begin{lemma}[Bistochastic contraction]
\label{lem:bistochastic-contraction}
Every real $d\times d$ bistochastic matrix $M$ is a contraction in Euclidean
norm:
\begin{equation}
 \|Mv\|_2\le\|v\|_2
 \qquad(v\in\R^{d}).
 \label{eq:bistochastic-contraction}
\end{equation}
If $M$ is an isometry, then $M$ is a permutation matrix.
\end{lemma}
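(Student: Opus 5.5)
The plan is to prove the contraction bound first and then analyze the equality case. For the contraction, I would write $(Mv)_y=\sum_x M_{yx}v_x$ and apply the Cauchy--Schwarz inequality with weights $M_{yx}\ge0$:
\[
 (Mv)_y^2\le\Bigl(\sum_x M_{yx}\Bigr)\Bigl(\sum_x M_{yx}v_x^2\Bigr)=\sum_x M_{yx}v_x^2,
\]
where the equality uses that the rows of $M$ sum to one. Summing over $y$ and using that the columns sum to one gives
\[
 \|Mv\|_2^2\le\sum_x\Bigl(\sum_y M_{yx}\Bigr)v_x^2=\|v\|_2^2,
\]
which is Eq.~\eqref{eq:bistochastic-contraction}. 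Equivalently, one could note that $M^{\mathsf T}M$ is again bistochastic with nonnegative entries, so Gershgorin places its spectrum in $[-1,1]$. Positive semidefiniteness then gives $\|M\|_{\mathrm{op}}\le1$. I would present the Cauchy--Schwarz route because it also controls the equality case.

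For the isometry statement, assume $\|Mv\|_2=\|v\|_2$ for all $v$. Then equality must hold in every Cauchy--Schwarz step above for every $v$. The cleanest route is to use $M^{\mathsf T}M=\openone$ directly, which follows from polarization for a real isometry. Its diagonal entries give $\sum_y M_{yx}^2=1$ for each column $x$. Since $0\le M_{yx}\le1$ and $\sum_y M_{yx}=1$, we have
\[
 \sum_y M_{yx}^2\le\sum_y M_{yx}=1,
\]
with equality only if every $M_{yx}\in\{0,1\}$. So each column contains exactly one entry equal to $1$ and all others are $0$. Double stochasticity, since the rows also sum to one, then forces exactly one $1$ in each row, so $M$ is a permutation matrix.

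The only slightly delicate step is justifying $M^{\mathsf T}M=\openone$ from the isometry property. This follows by polarization, $\langle Mu,Mv\rangle=\tfrac14(\|M(u+v)\|^2-\|M(u-v)\|^2)=\langle u,v\rangle$, so it is not a real obstacle. Alternatively, applying the isometry to the basis vectors $e_x$ gives $\sum_y M_{yx}^2=1$ immediately, and that is all the equality argument needs. I would use this direct choice $v=e_x$ to avoid polarization entirely.
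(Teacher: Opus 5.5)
Your proposal is correct and follows essentially the same route as the paper. Your weighted Cauchy--Schwarz step is the paper's row-wise Jensen inequality, and your equality-case argument matches the paper's: unit squared column norms force each probability-vector column to be a point mass, and the row sums then place the unit entries in distinct rows. The only difference is that testing on the basis vectors $e_x$ gives $\sum_y M_{yx}^2=1$ directly, so you do not need the full identity $M^{\mathsf T}M=\openone$; this is a harmless streamlining.
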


\begin{proof}
Jensen's inequality applied to each row gives
\begin{align}
 \|Mv\|_2^2
 &=\sum_y\left(\sum_x M_{yx}v_x\right)^2
 \le\sum_y\sum_x M_{yx}v_x^2\\
 &=\sum_x v_x^2\sum_y M_{yx}
 =\|v\|_2^2.
\end{align}
If equality holds for every $v$, then $M^{\mathsf T}M=\openone$.  Each column of
$M$ is nonnegative, has sum one, and has squared Euclidean norm one.  A
probability vector has squared norm one only when one entry is one and all
others vanish.  The row sums place these unit entries in distinct rows, so
$M$ is a permutation matrix.
\end{proof}

\subsection{Lemmas for the semigroup theorem}

\begin{lemma}[Unitary recurrence]
\label{lem:unitary-recurrence}
For every finite-dimensional unitary $U$, there is a sequence
$m_j\to\infty$ for which $U^{m_j}\to\openone$.
\end{lemma}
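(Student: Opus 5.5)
The plan is the standard Dirichlet simultaneous-approximation argument applied to the eigenphases of $U$. By the spectral theorem, $U=W\,\operatorname{diag}(e^{\ii\phi_1},\ldots,e^{\ii\phi_d})\,W^\dagger$ with $W$ unitary and $\phi_1,\ldots,\phi_d\in\R$. Hence $U^m=W\,\operatorname{diag}(e^{\ii m\phi_1},\ldots,e^{\ii m\phi_d})\,W^\dagger$, and unitary conjugation is an isometry in $\|\cdot\|_{\mathrm{op}}$. It therefore suffices to find $m_j\to\infty$ with $e^{\ii m_j\phi_k}\to1$ for every $k$ simultaneously, since then $\|U^{m_j}-\openone\|_{\mathrm{op}}=\max_k|e^{\ii m_j\phi_k}-1|\to0$.

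For the simultaneous approximation, fix $\varepsilon>0$ and an integer $M\geq1$, and consider the points $(m\phi_1,\ldots,m\phi_d)$ modulo $2\pi$ for $m=0,1,\ldots,N$ on the torus $(\R/2\pi\mathbb Z)^d$. Partition the torus into $n^d$ cubes of side $2\pi/n$ with $2\pi\sqrt d/n<\varepsilon$ (a crude bound suffices). Take $N\geq M\,n^d$ and group the indices into $n^d+1$ classes, say $m\in\{0,M,2M,\ldots\}$, so that any two points in the same cube have index difference at least $M$. By pigeonhole, two indices $m'<m''$ with $m''-m'\geq M$ land in the same cube. Put $m=m''-m'$. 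Then each $m\phi_k$ lies within $2\pi/n$ of $2\pi\mathbb Z$, so $|e^{\ii m\phi_k}-1|\le2\pi/n<\varepsilon$.

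To build the sequence, choose $\varepsilon_j=1/j$ and $M_j=j$, and obtain $m_j\geq j$ with $\|U^{m_j}-\openone\|_{\mathrm{op}}<1/j$. Then $m_j\to\infty$ and $U^{m_j}\to\openone$, in every norm since the dimension is finite.

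An alternative route is compactness: $\U(d)$ is compact, so $(U^m)$ has a convergent subsequence $U^{n_i}\to A$. Then $U^{n_{i+1}-n_i}=U^{n_{i+1}}(U^{n_i})^\dagger\to AA^\dagger=\openone$. Passing to a subsequence with $n_{i+1}-n_i\to\infty$ gives the claim.

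The only delicate point is ensuring the exponents actually diverge rather than collapsing to $m=0$ or staying bounded. The spacing trick with $M_j$, or the gap-growth subsequence in the compactness route, handles this. I expect to present the compactness version as the main proof, because it is shortest and avoids the cube bookkeeping.
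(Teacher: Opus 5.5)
Your proof is correct by either route. The pigeonhole argument you write out in detail is essentially the paper's own proof. The paper takes the eigenvalues $e^{2\pi\ii\theta_r}$ of $U$, invokes simultaneous Diophantine approximation to obtain $m_j\to\infty$ with every $\|m_j\theta_r\|_{\mathbb R/\mathbb Z}\to0$, and concludes $U^{m_j}\to\openone$ in operator norm. Your Dirichlet box argument proves that approximation step. The spacing parameter $M_j$ also makes explicit why the exponents can be taken to diverge, which the paper's one-line appeal does not spell out: bounded exponents would force $U^m=\openone$ for some $m\ge1$, after which multiples of $m$ work.

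The compactness argument you say you would present instead is a genuinely different proof. It never diagonalizes $U$ and uses only sequential compactness of $\U(d)$ together with continuity of multiplication and adjoint. It therefore applies verbatim to any element of a compact metrizable group. The price is that it is nonconstructive, whereas your Diophantine version gives an explicit return-time bound $m\le Mn^d$.

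The lemma is used only to feed Lemma~\ref{lem:recurrent-bistochastic}, which needs existence and nothing quantitative, so either version serves the paper equally well. In the compactness write-up, state that the gap-growing subsequence of $(n_i)$ still satisfies $U^{n_i}\to A$, so that $U^{n_{i+1}-n_i}=U^{n_{i+1}}(U^{n_i})^\dagger\to AA^\dagger=\openone$ survives the thinning.
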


\begin{proof}

Write the eigenvalues of \(U\) as
\(e^{2\pi\ii\theta_1},\ldots,e^{2\pi\ii\theta_d}\).  Simultaneous
Diophantine approximation supplies integers \(m_j\to\infty\) for which every
distance \(\|m_j\theta_r\|_{\mathbb R/\mathbb Z}\) tends to zero.  Hence each
eigenvalue of \(U^{m_j}\) tends to one, and therefore
\(U^{m_j}\to\openone\) in operator norm.
\end{proof}

\begin{lemma}[Recurrent bistochastic powers]
\label{lem:recurrent-bistochastic}
Let $A$ be a real bistochastic matrix.  If
$A^{m_j}\to\openone$ along a sequence $m_j\to\infty$, then $A$ is a
permutation matrix.
\end{lemma}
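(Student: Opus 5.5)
The plan is to show that $A$ is an isometry in the Euclidean norm and then invoke Lemma~\ref{lem:bistochastic-contraction}. By that lemma, every power $A^m$ is a contraction, since products of bistochastic matrices are bistochastic. So for each $v\in\R^d$ the sequence $\|A^mv\|_2$ is nonincreasing in $m$.

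Fix $v$ and set $c:=\lim_{m\to\infty}\|A^mv\|_2$. Monotonicity guarantees that this limit exists and satisfies $c\leq\|Av\|_2\leq\|v\|_2$. Along the subsequence $m_j$ we have $A^{m_j}v\to v$, so $\|A^{m_j}v\|_2\to\|v\|_2$. Comparing the two limits gives $c=\|v\|_2$. Because the whole sequence is nonincreasing and starts at $\|v\|_2$, every term must equal $\|v\|_2$; in particular $\|Av\|_2=\|v\|_2$.

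Since $v$ was arbitrary, $A$ is a Euclidean isometry, and the second clause of Lemma~\ref{lem:bistochastic-contraction} concludes that $A$ is a permutation matrix.

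The only point that needs care is the monotonicity, which is exactly what lets information from a sparse subsequence control the first step. It follows from $\|A^{m+1}v\|_2=\|A(A^mv)\|_2\leq\|A^mv\|_2$. There is no other real obstacle. Convergence in any norm suffices, since all norms are equivalent in finite dimensions.
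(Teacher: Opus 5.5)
Your proposal is correct and follows the paper's proof: the paper also squeezes $\|v\|_2=\lim_j\|A^{m_j}v\|_2\le\|Av\|_2\le\|v\|_2$ using the contraction property, concludes that $A$ is a Euclidean isometry, and then applies the final clause of Lemma~\ref{lem:bistochastic-contraction}. Your monotone-sequence argument is just a more explicit version of that same inequality.
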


\begin{proof}
For every $v\in\R^d$, Lemma~\ref{lem:bistochastic-contraction} gives
\[
 \|v\|_2
 =\lim_j\|A^{m_j}v\|_2
 \leq\|Av\|_2
 \leq\|v\|_2.
\]
Thus $A$ is an isometry, and the final assertion of
Lemma~\ref{lem:bistochastic-contraction} makes it a permutation matrix.
\end{proof}

\subsection{\texorpdfstring{%
Proof of the %
classification of Born-consistent unitary semigroups}%
{Proof of the classification of Born-consistent unitary semigroups}}
\label{app:semigroup-proof}

\begin{proof}[Proof of
Theorem~\ref{thm:semigroup-app}]
Assume the universal
composition identity in Theorem~\ref{thm:semigroup-app} and choose $U\in\Gamma$.  Put
$A=\Born_{\C}(U)$.  Closure and induction give
\begin{equation}
 \Born_{\C}(U^m)=A^m
 \qquad(m\ge1).
 \label{eq:powers}
\end{equation}
By Lemma~\ref{lem:unitary-recurrence}, choose $m_j\to\infty$ with
$U^{m_j}\to\openone$.  Continuity of the Born map and
Eq.~\eqref{eq:powers} give $A^{m_j}\to\openone$, so
Lemma~\ref{lem:recurrent-bistochastic} makes $A$ a permutation matrix.  Since
$|U_{yx}|^2=A_{yx}$, the unitary $U$ has one nonzero entry of unit modulus in
each row and column.  Thus $U$ is a
 unitary monomial, that is, a
phase-dressed permutation.

Conversely, write $U=\Delta_U\circ\Pi_U$ and
$V=\Delta_V\circ\Pi_V$, where the $\Delta$ matrices are diagonal unitary and
the $\Pi$ matrices are permutations.  Then
\begin{equation}
 \Born_{\C}(U)=\Pi_U,
 \qquad
 \Born_{\C}(V)=\Pi_V,
 \qquad
 \Born_{\C}(V\circ U)=\Pi_V\circ\Pi_U,
\end{equation}
which proves the converse
direction of Theorem~\ref{thm:semigroup-app}.
\end{proof}

Equivalently,
these are precisely the unitaries that preserve incoherence relative to
\(\C\): the normalizer
of the diagonal algebra~\cite{app:BaumgratzCramerPlenio}.

\subsection{Universal composition forces reversible state machines}
\label{app:state-machine-corollary}

\begin{definition}[Moore machine and fixed-context unitary completion]
\label{def:moore-completion-app}
A Moore machine is a tuple
\(\mathfrak M=(X,\Sigma,\delta,\lambda_{\rm out},x_0)\), where \(X\) is a finite state
set, \(\Sigma\) is a finite input alphabet,
\(\delta:X\times\Sigma\to X\) is the transition function,
\(\lambda_{\rm out}:X\to\mathsf A\) is an output map to a finite alphabet, and
\(x_0\in X\) is the initial state.  A fixed-context unitary completion assigns
\(U_\sigma\in\U(d)\) to each \(\sigma\in\Sigma\) so that
\([\Born_{\C}(U_\sigma)]_{yx}=1\) exactly when
\(y=\delta(x,\sigma)\).  The completion is universal when the generated
composition-closed family obeys universal Born-kernel composition.
\end{definition}

\begin{corollary}[Universal completion is exactly reversible]
\label{cor:moore-reversible-app}
A finite Moore machine admits a universal fixed-context unitary
completion if and only if every map
\(\delta_\sigma:x\mapsto\delta(x,\sigma)\) is bijective.  Equivalently,
universal completion uniquely determines reversible deterministic
transitions, although their diagonal unitary phases remain arbitrary.
\end{corollary}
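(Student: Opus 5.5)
The corollary follows from Theorem~\ref{thm:semigroup-app}, proved above, together with the definition of a fixed-context unitary completion. I will prove the two directions separately and then record the uniqueness statement.

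\emph{Necessity.} Suppose the machine admits a universal completion \(\{U_\sigma\}\). The family \(\Gamma\) generated by the \(U_\sigma\) under composition is closed by construction, and it obeys universal Born-kernel composition by hypothesis. Theorem~\ref{thm:semigroup-app} therefore makes every element of \(\Gamma\) a unitary monomial. In particular each generator has the form \(U_\sigma=\Delta_\sigma\circ\Pi_\sigma\), and by Eq.~\eqref{eq:born-permutation-machine} its kernel is \(\Born_{\C}(U_\sigma)=\Pi_\sigma\), a permutation matrix. The completion condition says \([\Born_{\C}(U_\sigma)]_{yx}=1\) exactly when \(y=\delta(x,\sigma)\). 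So \(\delta_\sigma\) is the permutation encoded by \(\Pi_\sigma\), and in particular it is bijective.

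\emph{Sufficiency.} Suppose every \(\delta_\sigma\) is bijective. Let \(\Pi_\sigma\) be the permutation matrix with \(\Pi_\sigma|x\rangle=|\delta_\sigma(x)\rangle\), and set \(U_\sigma=\Delta_\sigma\circ\Pi_\sigma\) for an arbitrary diagonal unitary \(\Delta_\sigma\). The kernel \(\Born_{\C}(U_\sigma)=\Pi_\sigma\) has a \(1\) in entry \((y,x)\) exactly when \(y=\delta(x,\sigma)\), so this is a completion. Products of unitary monomials are again unitary monomials, so the generated semigroup consists entirely of monomials. The converse direction of Theorem~\ref{thm:semigroup-app} then gives universal composition.

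\emph{Uniqueness.} The necessity argument also shows that in any universal completion \(\Born_{\C}(U_\sigma)=\Pi_\sigma\) is forced by \(\delta_\sigma\). Suppose \(U_\sigma\) and \(U'_\sigma\) are two monomials with the same permutation kernel. Then \(U_\sigma'\circ\Pi_\sigma^{-1}\) and \(U_\sigma\circ\Pi_\sigma^{-1}\) are diagonal unitaries, so the two completions differ only by diagonal phases. Any diagonal phases are admissible, since \(\Born_{\C}(\Delta)=\openone\).

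\emph{Main obstacle.} The only delicate point is reading the completion condition correctly. It says the kernel entry equals \(1\) exactly at \(y=\delta(x,\sigma)\); it does not require the kernel to be deterministic. Determinism comes from column stochasticity: a column containing an entry equal to \(1\) must have all other entries zero. In the necessity direction, however, Theorem~\ref{thm:semigroup-app} already supplies the permutation structure, so this observation is not needed there. No other step is delicate.
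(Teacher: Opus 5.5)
Your proposal is correct and follows essentially the same route as the paper: Theorem~\ref{thm:semigroup-app} forces each $\Born_{\C}(U_\sigma)$ to be a permutation matrix, which the completion condition identifies with $\delta_\sigma$, and conversely the choice $U_\sigma=\Delta_\sigma\circ\Pi_\sigma$ generates a semigroup of unitary monomials to which the theorem's converse applies. Your explicit check that two completions with the same permutation kernel differ only by diagonal unitaries makes precise the phase nonuniqueness that the paper only asserts.
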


\begin{proof}
Universal Born-kernel composition uniquely determines a reversible classical
transition for every input symbol.  Let $X$ label the outcomes of the atomic
context $\C$, let $\Sigma$ be a finite input alphabet, assign a unitary
$U_\sigma$ to each $\sigma\in\Sigma$, and let
\begin{equation}
 \Gamma=\langle U_\sigma:\sigma\in\Sigma\rangle
 \label{eq:moore-generated-semigroup-app}
\end{equation}
be the composition-closed family generated by those operations.  If
\begin{equation}
 \Born_{\C}(V\circ U)=\Born_{\C}(V)\circ\Born_{\C}(U)
 \qquad\text{for all }U,V\in\Gamma
 \label{eq:moore-universal-composition-app}
\end{equation}
holds, Theorem~\ref{thm:semigroup-app} gives, for every $\sigma\in\Sigma$, a
unique permutation $\pi_\sigma$ of $X$ with
\begin{equation}
 \Born_{\C}(U_\sigma)=\Pi_\sigma,
 \qquad
 (\Pi_\sigma)_{yx}=\begin{cases}
  1,&y=\pi_\sigma(x),\\
  0,&y\ne\pi_\sigma(x),
 \end{cases}
 \label{eq:moore-permutation-app}
\end{equation}
Define $\delta(x,\sigma):=\pi_\sigma(x)$ and write
$\delta_\sigma(x):=\delta(x,\sigma)$.  For any finite output alphabet
$\mathsf A$, output map $\lambda_{\rm out}:X\to\mathsf A$, and initial state $x_0\in X$,
the induced Moore machine
$\mathfrak M=(X,\Sigma,\delta,\lambda_{\rm out},x_0)$ is therefore deterministic and
reversible: every $\delta_\sigma:X\to X$ is a bijection.

\begingroup
\widowpenalty=10000
Conversely, a preassigned Moore machine admits a universal fixed-context
unitary completion, meaning
$[\Born_{\C}(U_\sigma)]_{yx}=1$ exactly when
$y=\delta(x,\sigma)$, if and only if every $\delta_\sigma$ is bijective.
For a reversible machine, take the permutation matrix $\Pi_\sigma$ of
$\delta_\sigma$ and choose
$U_\sigma=\Delta_\sigma\circ\Pi_\sigma$ with arbitrary diagonal unitaries
$\Delta_\sigma$.  The output map $\lambda_{\rm out}$ and initial state $x_0$ play no role
in this classification.  The phases make the unitary completion nonunique but
remain invisible to its uniquely induced fixed-context transition.
\par\endgroup
\end{proof}

\section{Prescribed pairs and boundary results}
\label{app:qubit}

We first express pair composition as a compression identity and then give
the qubit and qutrit criteria. Prefix recurrence and the path formulation
lead to the boundary and schedule conditions, followed by the distinction
from consistent fine-grained histories.

\subsection{The double-operator Born-kernel linearization}
\label{app:compression}

\begin{definition}[Double-operator representation and sector projections]
\label{def:double-operator-app}
Let \(\mathsf H_d\cong\mathbb C^{d}\) be a separate Hilbert space
over the classical labels, and
let \(\HS(\mathsf H):=\operatorname{End}(\mathsf H)\) denote the
complex vector space of linear operators on \(\mathsf H\), equipped with the
Hilbert--Schmidt inner product
\(\langle A,B\rangle_{\HS}=\Tr(A^\dagger B)\).  Define
\begin{equation}
 \iota_{\C}
 :\mathsf H_d\longrightarrow\HS(\mathsf H),
 \qquad
 \iota_{\C}|x\rangle_{\mathrm{cl}}:=P_x.
 \label{eq:delta-clear}
\end{equation}
The projectors \(P_x\) are orthonormal, so
\(\iota_{\C}^\dagger\iota_{\C}=\openone\).  The adjoint action of a
unitary is the double-operator action
\begin{equation}
 L_U:\HS(\mathsf H)\longrightarrow\HS(\mathsf H),
 \qquad
 L_U(A)=UAU^\dagger.
 \label{eq:liouville-action-app}
\end{equation}
With column-stacking vectorization in the context basis, \(L_U\) is represented
by \(\overline U\otimes U\).  The Born kernel is its
contraction to the diagonal sector:
\begin{equation}
 \Born_{\C}(U)=\iota_{\C}^\dagger\circ L_U\circ\iota_{\C}.
 \label{eq:born-compression-app}
\end{equation}

Let
\begin{equation}
 P_{\C}:=\iota_{\C}\circ\iota_{\C}^\dagger,
 \qquad
 Q_{\C}:=\openone-P_{\C}.
\end{equation}
The two projections select the diagonal and off-diagonal
Hilbert--Schmidt sectors.
\end{definition}

\begin{proposition}[Double-operator composition identity]
For all \(U,V\in\U(d)\),
\begin{equation}
 \Born_{\C}(V\circ U)
 =\Born_{\C}(V)\circ\Born_{\C}(U)
 +\iota_{\C}^\dagger\circ L_V\circ Q_{\C}\circ L_U\circ\iota_{\C}.
 \label{eq:compression-identity}
\end{equation}
\end{proposition}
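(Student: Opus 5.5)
The identity follows from inserting the resolution $\openone=P_{\C}+Q_{\C}$ of the Hilbert--Schmidt space between the two double-operator actions. First, functoriality of the adjoint action gives $L_{V\circ U}=L_V\circ L_U$, since
\[
 (VU)A(VU)^\dagger=V\bigl(UAU^\dagger\bigr)V^\dagger .
\]
Combining this with the compression formula of Eq.~\eqref{eq:born-compression-app}, we obtain
\[
 \Born_{\C}(V\circ U)=\iota_{\C}^\dagger\circ L_V\circ L_U\circ\iota_{\C}.
\]

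Next, we check that $P_{\C}=\iota_{\C}\circ\iota_{\C}^\dagger$ and $Q_{\C}=\openone-P_{\C}$ are complementary orthogonal projections on $\HS(\mathsf H)$. This holds because $\iota_{\C}^\dagger\iota_{\C}=\openone$, which follows from $\Tr(P_xP_{x'})=\delta_{xx'}$. Inserting $\openone=P_{\C}+Q_{\C}$ between $L_V$ and $L_U$ then splits the composite into two terms. The $P_{\C}$ term factors as
\[
 \bigl(\iota_{\C}^\dagger L_V\iota_{\C}\bigr)\circ\bigl(\iota_{\C}^\dagger L_U\iota_{\C}\bigr)=\Born_{\C}(V)\circ\Born_{\C}(U),
\]
again by Eq.~\eqref{eq:born-compression-app}. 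The $Q_{\C}$ term is exactly the remainder displayed in Eq.~\eqref{eq:compression-identity}, and the identity follows.

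The only point needing care is that the chronological product of Born kernels matches the operator composition of the compressed maps. Concretely, $\iota_{\C}^\dagger$ must extract the coefficient $\Tr(P_y\,\cdot\,)$, so that column $x$ maps to row $y$ in the same convention as Definition~\ref{def:kernel-conventions-app}. This is a direct entrywise check:
\[
 \iota_{\C}^\dagger L_U\iota_{\C}|x\rangle=\sum_y\Tr(P_yUP_xU^\dagger)|y\rangle .
\]
As a consistency check, the remainder term's matrix element equals $\sum_{z\ne z'}\langle y|V|z\rangle\langle z|U|x\rangle\overline{\langle y|V|z'\rangle\langle z'|U|x\rangle}$. This recovers the path-sum form of Eq.~\eqref{eq:main-path-comparison}: $Q_{\C}$ keeps only the off-diagonal operators $|z\rangle\langle z'|$ with $z\ne z'$.
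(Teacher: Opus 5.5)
Your proposal is correct and follows the paper's proof: the paper also uses $L_{V\circ U}=L_V\circ L_U$, inserts $\openone=P_{\C}+Q_{\C}$ between the lifted operations, and identifies the $P_{\C}=\iota_{\C}\circ\iota_{\C}^\dagger$ term as the product of the two compressed Born kernels. Your convention check on $\iota_{\C}^\dagger$ and the path-sum expansion of the $Q_{\C}$ remainder are accurate additions that the paper does not spell out here.
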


\begin{proof}
Use \(L_{V\circ U}=L_V\circ L_U\) and insert
\(\openone=P_{\C}+Q_{\C}\) between the lifted operations.  Since
\(P_{\C}=\iota_{\C}\circ\iota_{\C}^\dagger\), the diagonal term is the product
of the two contracted Born kernels, and the off-diagonal term is the
displayed correction.
\end{proof}

The defect is therefore the coherence-mediated correction that leaves the
diagonal population sector under \(U\) and returns to that sector under \(V\).
Quantum composition itself is unchanged;
the mismatch is created by contracting to that sector between operations.
Equivalently, on Hilbert--Schmidt space,
\begin{equation}
 P_{\C}\circ L_V\circ L_U\circ P_{\C}
 =P_{\C}\circ L_V\circ P_{\C}\circ L_U\circ P_{\C}
 +P_{\C}\circ L_V\circ Q_{\C}\circ L_U\circ P_{\C}.
 \label{eq:projection-memory-identity}
\end{equation}
The correction
is coherence created by \(U\) and read back into populations by \(V\).  It is a
fixed-context projection effect, not an open-system memory kernel; compare
to the projection-operator literature~\cite{app:Nakajima,app:Zwanzig}.

\subsection{The scalar form of every qubit Born kernel}

\begin{definition}[Qubit Born-kernel coordinate]
\label{def:qubit-kernel-coordinate-app}
For \(-1\leq\lambda\leq1\), define
\[
 K(\lambda):=\frac{1+\lambda}{2}\openone
 +\frac{1-\lambda}{2}\sigma_x.
\]
Every \(2\times2\) bistochastic matrix has this form, and
\(K(\mu)\circ K(\lambda)=K(\mu\lambda)\).  For a qubit unitary \(U\), its
population polarization is
\[
 \beta(U):=\langle0|U^\dagger\sigma_zU|0\rangle,
 \qquad \Born_{\C}(U)=K(\beta(U)).
\]
The kernel has full support, and is therefore genuinely mixing in the
terminology of Definition~\ref{def:support-mixing-app}, exactly when
\(|\lambda|<1\).
\end{definition}

Write
\begin{equation}
 \begin{aligned}
 \sigma_x&=
 |0\rangle\langle1|+|1\rangle\langle0|,\\
 \sigma_y&=
 -\ii|0\rangle\langle1|+\ii|1\rangle\langle0|,\\
 \sigma_z&=
 |0\rangle\langle0|-|1\rangle\langle1|.
 \end{aligned}
 \label{eq:pauli-matrices-app}
\end{equation}
In the computational context, Dirac notation gives
\begin{equation}
 \Born(U)=
 \begin{pmatrix}
  |\langle0|U|0\rangle|^2&|\langle0|U|1\rangle|^2\\
  |\langle1|U|0\rangle|^2&|\langle1|U|1\rangle|^2
 \end{pmatrix}
 =K(\beta(U)),
 \label{eq:qubit-dirac-kernel}
\end{equation}
where
\begin{equation}
 \beta(U)
 :=\langle0|U^\dagger \sigma_zU|0\rangle
 =|\langle0|U|0\rangle|^2-|\langle1|U|0\rangle|^2.
 \label{eq:qubit-dirac-beta}
\end{equation}
Thus
\begin{equation}
 K(\lambda):=\frac{1+\lambda}{2}\openone
 +\frac{1-\lambda}{2}\sigma_x.
 \label{eq:K-lambda}
\end{equation}
Every $2\times2$ bistochastic matrix has this form, and
\begin{equation}
 K(\mu)\circ K(\lambda)=K(\mu\cdot\lambda).
 \label{eq:K-product}
\end{equation}
Here $\lambda$ is a generic kernel parameter, whereas $\beta(U)$ is the
unitary-specific polarization.  The former is the nontrivial eigenvalue of
$K(\lambda)$ on the signed population mode $(1,-1)^{\mathsf T}$.  The kernel
has full support exactly when $|\lambda|<1$;

Definition~\ref{def:support-mixing-app} calls this genuine mixing
and also fixes the equivalent unitary-monomial terminology.
Therefore
\begin{equation}
 \Born(V\circ U)=\Born(V)\circ\Born(U)
 \quad\Longleftrightarrow\quad
 \beta(V\circ U)=\beta(V)\beta(U).
 \label{eq:scalar-iff}
\end{equation}

\begin{definition}[Qubit interference scalar]
\label{def:qubit-interference-scalar-app}
For two paths through the intermediate context, define the real interference
scalar
\begin{equation}
 \chi(V,U):=4\operatorname{Re}\!\left[
 \langle0|V|0\rangle\langle0|U|0\rangle
 \overline{\langle0|V|1\rangle\langle1|U|0\rangle}
 \right].
 \label{eq:qubit-dirac-current}
\end{equation}
\end{definition}

\begin{theorem}[%
Classification of pairs of one-qubit unitary steps]
For arbitrary qubit unitaries $U,V$,
\begin{equation}
 \Born(V\circ U)=\Born(V)\circ\Born(U)
 \quad\Longleftrightarrow\quad
 \chi(V,U)=0,
 \label{eq:qubit-pair-iff-dirac}
\end{equation}
and
\begin{equation}
 \Born(V\circ U)-\Born(V)\circ\Born(U)
 =\frac{\chi(V,U)}2(\openone-\sigma_x).
 \label{eq:qubit-full-defect-dirac}
\end{equation}
\end{theorem}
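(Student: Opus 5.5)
We will prove the explicit defect formula \eqref{eq:qubit-full-defect-dirac} first; the equivalence \eqref{eq:qubit-pair-iff-dirac} then follows at once, because $\openone-\sigma_x\neq0$. The route is the path-sum identity \eqref{eq:main-path-comparison}. For each entrance $x$ and exit $y$ there are exactly two paths, through $z=0$ and $z=1$. Hence
\[
 \langle y|\mathcal J_{\C}(V,U)|x\rangle
 =2\operatorname{Re}\!\left[\langle y|V|0\rangle\langle0|U|x\rangle
 \overline{\langle y|V|1\rangle\langle1|U|x\rangle}\right].
\]
Call this quantity $c_{yx}$. Then $\chi(V,U)=2c_{00}$ by Definition~\ref{def:qubit-interference-scalar-app}.

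The current is a difference of two bistochastic $2\times2$ matrices. Each such matrix has the form $K(\lambda)$, so every column and every row of $\mathcal J$ sums to zero. A real $2\times2$ matrix with zero row and column sums is a scalar multiple of $\openone-\sigma_x=\begin{pmatrix}1&-1\\-1&1\end{pmatrix}$. That multiple is its $(0,0)$ entry, $c_{00}=\chi/2$, which gives \eqref{eq:qubit-full-defect-dirac}.

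As an independent check, and to avoid relying only on the stochastic structure, we would verify the relations among the entries directly from unitarity of $U$ and $V$. Orthogonality of the columns of $U$ gives $\langle0|U|1\rangle\overline{\langle1|U|1\rangle}=-\langle0|U|0\rangle\overline{\langle1|U|0\rangle}$. The corresponding relation for the rows of $V$ is $\langle1|V|0\rangle\overline{\langle1|V|1\rangle}=-\langle0|V|0\rangle\overline{\langle0|V|1\rangle}$. Substituting these into $c_{01}$, $c_{10}$ and $c_{11}$ shows $c_{01}=c_{10}=-c_{00}$ and $c_{11}=c_{00}$.

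The only delicate point is keeping track of conjugations in the cross term, so that the sign relations come out correctly. The factor $4$ in $\chi$ matches $2\cdot 2\operatorname{Re}$ against the $\tfrac12$ in the formula. Finally, \eqref{eq:scalar-iff} provides a consistency check: the formula implies $\beta(V\circ U)-\beta(V)\beta(U)=\chi$, since the coefficient of $\openone$ in $K(\lambda)$ is $\tfrac{1+\lambda}{2}$.
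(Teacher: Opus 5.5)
Your proposal is correct and follows the paper's own proof. Both arguments identify the $(0,0)$ entry of the current as the two-path cross term $\chi(V,U)/2$, then note that a difference of two $2\times2$ bistochastic matrices has zero row and column sums and is therefore that scalar times $\openone-\sigma_x$. Your extra unitarity check of the other entries is redundant but valid; the two displayed relations are correct, though the first is row orthogonality of $U$ and the second is column orthogonality of $V$, the reverse of your labels.
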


Because \(\|\openone-\sigma_x\|_{\mathrm F}=2\), the normalized current
introduced in Eq.~\eqref{eq:metric} reduces for a qubit to
\(\nu_{\C}(V,U)=|\chi(V,U)|\).  Thus the scalar classification is exactly the
zero set of the current used in the main text.

\begin{proof}
For entrance and endpoint $0$, the coherent amplitude is the sum of the two
context-path amplitudes
\begin{equation}
 \langle0|V\circ U|0\rangle
 =\langle0|V|0\rangle\langle0|U|0\rangle
 +\langle0|V|1\rangle\langle1|U|0\rangle.
\end{equation}
The difference between the squared modulus of this sum and the sum of the two
squared moduli is $\chi(V,U)/2$.  Both kernels are $2\times2$ bistochastic, so
their signed difference has zero row and column sums and is consequently this
single scalar times $\openone-\sigma_x$.  This proves both claims.
\end{proof}

The same scalar identity has a direct path-amplitude form.  Resolve the
intermediate context identity and define
\begin{equation}
 a_j(V,U):=\langle0|V|j\rangle\langle j|U|0\rangle,
 \qquad j\in\{0,1\}.
 \label{eq:qubit-path-amplitudes}
\end{equation}
Then
\begin{equation}
 \langle0|V\circ U|0\rangle
 =\sum_{j=0}^{1}\langle0|V|j\rangle\langle j|U|0\rangle
 =a_0(V,U)+a_1(V,U),
 \label{eq:qubit-path-amplitude-sum}
\end{equation}
and
\begin{equation}
 \chi(V,U)=4\operatorname{Re}\!\left[a_0(V,U)\overline{a_1(V,U)}\right].
 \label{eq:qubit-path-current}
\end{equation}

Equation~\eqref{eq:qubit-path-current} already shows that the
composition law holds precisely when the two path amplitudes have vanishing
real cross term.

\begingroup
\setlength{\emergencystretch}{1em}
The body witness follows immediately.
 For $U=R_{\sigma_y}(\pi/3)$ and
$V=R_{\sigma_x}(\phi)$ with $\cos\phi=1/3$, the two path amplitudes are
\par\endgroup
\[
 a_0(V,U)=\frac{\sqrt2}{2},
 \qquad
 a_1(V,U)=-\frac{\ii}{2\sqrt3}.
\]
Both are nonzero and have relative phase $-\pi/2$, so their interference
term has zero real part.  Consequently,
\begin{equation}
 \Born(U)=K(1/2),
 \qquad
 \Born(V)=\;K(1/3),
 \qquad
 \Born(V\circ U)=\;K(1/6).
 \label{eq:body-witness-from-classification}
\end{equation}

More generally, for
$U=R_{\sigma_y}(\theta)$ and $V=R_{\sigma_x}(\phi)$, the two path
amplitudes are respectively real and purely imaginary.  Therefore
\begin{equation}
 \Born(V\circ U)
 =K(\cos\phi\cos\theta)
 =\Born(V)\circ\Born(U).
 \label{eq:cross-axis-qubit-family}
\end{equation}
This continuous family includes pairs with unequal, genuinely mixing local
kernels, so equality of the two individual Born kernels is neither required
nor generic.

The defect therefore vanishes in two ways.  One path amplitude may be zero,
in which case at least one of the qubit unitaries is
 a unitary monomial.  Otherwise the
two path amplitudes are both nonzero but differ in phase by
$\pm\pi/2$, so their conjugate cross terms cancel exactly.

\subsection[Qutrit two-step analysis]{%
\texorpdfstring{%
Qutrit two-step analysis}{Qutrit two-step analysis}}
\label{app:qutrit}

The scalar kernel $K(\lambda)$ is special to a qubit.  A generic qutrit
Born kernel lies in the four-dimensional set of $3\times3$ unistochastic
matrices, and phases invisible to an individual Born kernel can reappear
when two unitaries are composed.  Thus there is no single-parameter qutrit
law analogous to Eq.~\eqref{eq:K-product}.  There is, however, a
necessary-and-sufficient geometric criterion for every prescribed
ordered pair of qutrit unitary
steps $(U,V)$.

In this subsection, $(U,V)$ denotes an ordered pair of qutrit
unitary steps, with $U$ acting before $V$.

\begin{sloppypar}
\begin{definition}[Qutrit population--coherence blocks and frames]
\label{def:qutrit-blocks-app}
Let \(\mathsf{Herm}_0(3)\) be the real Hilbert space of traceless Hermitian
$3\times3$ matrices with the Hilbert--Schmidt inner product.  The fixed
context gives the orthogonal decomposition
\begin{equation}
 \mathsf{Herm}_0(3)
 =\mathsf D\oplus\mathsf C_{\rm coh},
 \qquad
 \dim_{\R}\mathsf D=2,
 \qquad
 \dim_{\R}\mathsf C_{\rm coh}=6,
 \label{eq:qutrit-pop-coh-split}
\end{equation}
where $\mathsf D$ is the diagonal traceless population plane and
$\mathsf C_{\rm coh}$ is the off-diagonal coherence space.  Let
$P_{\rm pop}$ and $P_{\rm coh}=\openone-P_{\rm pop}$ be the corresponding
orthogonal projections.  The adjoint action
$\operatorname{Ad}_U(H)=UHU^\dagger$ has the block form
\begin{equation}
 \operatorname{Ad}_U=
 \begin{pmatrix}
  M_U&B_U\\
  G_U&C_U
 \end{pmatrix}
 :=
 \begin{pmatrix}
  P_{\rm pop}\operatorname{Ad}_UP_{\rm pop}
   &P_{\rm pop}\operatorname{Ad}_UP_{\rm coh}\\
  P_{\rm coh}\operatorname{Ad}_UP_{\rm pop}
   &P_{\rm coh}\operatorname{Ad}_UP_{\rm coh}
 \end{pmatrix}.
 \label{eq:qutrit-blocks}
\end{equation}
Thus
$M_U:\mathsf D\to\mathsf D$,
$G_U:\mathsf D\to\mathsf C_{\rm coh}$,
$B_U:\mathsf C_{\rm coh}\to\mathsf D$, and
$C_U:\mathsf C_{\rm coh}\to\mathsf C_{\rm coh}$.
The first is the centered-population action of $\Born(U)$; $G_U$ generates
coherence from a population contrast, $B_U$ reads coherence back into a
population contrast, and $C_U$ transports coherence without reading it.
The symbol \({}^\star\) below denotes the adjoint of a block map with
respect to the real Hilbert--Schmidt inner products.
Orthogonality of the adjoint representation gives
\begin{equation}
 B_U=G_{U^\dagger}^{\star}.
 \label{eq:qutrit-read-create-adjoint}
\end{equation}
The subspaces \(\operatorname{ran}G_U\) and
\(\operatorname{ran}G_{V^\dagger}\) are respectively the coherence frame
created by \(U\) and the coherence frame readable by \(V\).
\end{definition}
\end{sloppypar}

Block multiplication isolates the entire qutrit composition defect:
\begin{equation}
 M_{V\circ U}=M_VM_U+B_VG_U.
 \label{eq:qutrit-block-product}
\end{equation}

\noindent\par
\begin{theorem}[%
Qutrit two-step criterion]
\label{thm:qutrit-pair-app}
For arbitrary $U,V\in\U(3)$,
\begin{equation}
 \Born(V\circ U)=\Born(V)\circ\Born(U)
 \quad\Longleftrightarrow\quad
 B_VG_U=0
 \quad\Longleftrightarrow\quad
 G_{V^\dagger}^{\star}G_U=0.
 \label{eq:qutrit-pair-iff}
\end{equation}
Equivalently, the coherence frame created by $U$ is orthogonal inside
$\mathsf C_{\rm coh}$ to the coherence frame readable by $V$.
\end{theorem}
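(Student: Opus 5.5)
The plan is to transport the kernel identity into the traceless Hermitian sector, where the block decomposition of Definition~\ref{def:qutrit-blocks-app} turns composition into block multiplication. There are three steps: show that Born-kernel equality is equivalent to equality of the population blocks $M$; apply the block product formula Eq.~\eqref{eq:qutrit-block-product}; and rewrite the defect block using the adjoint relation Eq.~\eqref{eq:qutrit-read-create-adjoint}.

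\emph{Step 1 (kernels versus population blocks).} Every Born kernel is bistochastic, so it fixes the uniform vector $\openone/3$ and preserves the sum-zero hyperplane. Under the identification $P_x\leftrightarrow|x\rangle_{\mathrm{cl}}$ (the compression $\iota_{\C}$ of Eq.~\eqref{eq:born-compression-app}), the diagonal traceless plane $\mathsf D$ corresponds to sum-zero population vectors. Moreover, $M_U=P_{\rm pop}\operatorname{Ad}_UP_{\rm pop}$ restricted to $\mathsf D$ is exactly $\Born(U)$ acting on that hyperplane. A bistochastic matrix is determined by its action on $\openone$ and on the sum-zero complement. Hence
\[
 \Born(V\circ U)=\Born(V)\circ\Born(U)\iff M_{V\circ U}=M_VM_U .
\]
The left-hand side does not need $\Born(V)\circ\Born(U)$ to be unistochastic; it only needs it to be bistochastic, which it is.

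\emph{Step 2 (block product).} Since $\operatorname{Ad}_{V\circ U}=\operatorname{Ad}_V\operatorname{Ad}_U$ and $P_{\rm pop}+P_{\rm coh}=\openone$, the upper-left block of the product is $M_VM_U+B_VG_U$. This is Eq.~\eqref{eq:qutrit-block-product}, the qutrit analogue of the compression identity Eq.~\eqref{eq:compression-identity}, restricted to the traceless sector. The trace part contributes nothing, because $\operatorname{Ad}$ fixes $\openone$. Combined with Step~1, this gives the first equivalence: composition holds iff $B_VG_U=0$.

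\emph{Step 3 (adjoint form and geometric reading).} The map $\operatorname{Ad}_U$ is orthogonal on the real space $\mathsf{Herm}_0(3)$ with $\operatorname{Ad}_U^{\star}=\operatorname{Ad}_{U^\dagger}$. Taking the transpose of the block matrix swaps the off-diagonal blocks, so $B_V=(G_{V^\dagger})^{\star}$. Therefore $B_VG_U=G_{V^\dagger}^{\star}G_U$.

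For the geometric statement, note that $G_{V^\dagger}^{\star}G_U=0$ as a map $\mathsf D\to\mathsf D$ means $\langle G_{V^\dagger}d',G_Ud\rangle_{\HS}=0$ for all $d,d'\in\mathsf D$. This says precisely that $\operatorname{ran}G_U\perp\operatorname{ran}G_{V^\dagger}$ inside $\mathsf C_{\rm coh}$.

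The main point needing care is Step~1. One must check that the real Hilbert--Schmidt identification of $\mathsf D$ with sum-zero vectors intertwines $M_U$ with $\Born(U)$. This holds because $\langle P_y,UP_xU^\dagger\rangle_{\HS}=|\langle y|U|x\rangle|^2$, and projecting $P_x-\openone/3$ loses only the $\openone$ component, which is preserved. The remaining steps are formal block algebra.
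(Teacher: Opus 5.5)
Your proposal is correct and follows the paper's proof essentially step for step. You use bistochasticity to reduce kernel equality to equality on the centered population plane $\mathsf D$, read off the defect $B_VG_U$ from the block product Eq.~\eqref{eq:qutrit-block-product}, and convert it via $B_V=G_{V^\dagger}^{\star}$; you also spell out the intertwining of $M_U$ with $\Born(U)$ on $\mathsf D$ and the range-orthogonality reading, both of which the paper leaves implicit.
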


\begin{proof}
Both sides are bistochastic and therefore agree on the uniform
population mode.  Hence the Born kernels agree if and only if their actions
agree on the centered population plane $\mathsf D$.
Equation~\eqref{eq:qutrit-block-product}
shows that their difference on this plane is $B_VG_U$.  Equation
\eqref{eq:qutrit-read-create-adjoint} gives the final equivalence.
\end{proof}

The criterion can be written as four scalar equations.  Put
\begin{equation}
 h_x:=P_x-\frac{\openone_3}{3},
 \qquad
 c_x(U):=G_Uh_x,
 \qquad
 d_y(V):=G_{V^\dagger}h_y.
 \label{eq:qutrit-created-readable-vectors}
\end{equation}
Since $\sum_xc_x(U)=\sum_yd_y(V)=0$, two vectors from each triple span its
frame.  The entrywise defect is the cross-Gram matrix
\begin{equation}
 \Born(V\circ U)_{yx}
 -[\Born(V)\circ\Born(U)]_{yx}
 =\langle d_y(V),c_x(U)\rangle_{\HS}.
 \label{eq:qutrit-cross-gram}
\end{equation}
Indeed, expansion in the context projectors gives
\begin{equation}
 \langle d_y(V),c_x(U)\rangle_{\HS}
 =|\langle y|V\circ U|x\rangle|^2
  -\sum_z|\langle y|V|z\rangle|^2|\langle z|U|x\rangle|^2,
 \label{eq:qutrit-cross-gram-expanded}
\end{equation}
which is exactly the displayed Born-kernel defect.
Consequently Eq.~\eqref{eq:qutrit-pair-iff} is equivalent to the vanishing
of the $2\times2$ block with $x,y\in\{0,1\}$.  This is the qutrit
replacement for the single qubit dot product in
Eq.~\eqref{eq:qubit-pair-iff-dirac}.

\begin{definition}[Qutrit coherence rank]
\label{def:qutrit-coherence-rank-app}
The coherence rank of a qutrit
unitary \(U\) is
\begin{equation}
 r(U):=\rank G_U\in\{0,1,2\}.
 \label{eq:qutrit-coherence-rank}
\end{equation}
\end{definition}

The necessary-and-sufficient cross-Gram criterion above proves the
qutrit claim used in the main text.

\subsection{Prefixes and their faithful recurrence}

We return to qubit steps and their scalar Born-kernel coordinate.

\begin{definition}[Prefix interference and prefix defect]
\label{def:prefix-defect-app}
For $k\geq1$, put
\begin{equation}
 \beta_k:=\beta(U_k),
\end{equation}
and
\begin{equation}
 D_k^{\rm pref}:=\beta(U_{k:1})
 -\prod_{r=1}^k\beta_r.
\end{equation}
For $k\geq2$, define the one-step interference increment by
\begin{equation}
 \varepsilon_k
 :=\beta(U_{k:1})-\beta(U_k)\beta(U_{k-1:1}).
\end{equation}
The scalar \(\varepsilon_k\) is the interference created at the new prefix
cut, while \(D_k^{\rm pref}\) is the cumulative prefix-composition defect.
\end{definition}
For $k\ge2$,
\begin{equation}
 D_k^{\rm pref}=\beta_kD_{k-1}^{\rm pref}+\varepsilon_k,
 \label{eq:defect-recurrence}
\end{equation}
Indeed, substitute
$\beta(U_{k:1})
=\beta_k\beta(U_{k-1:1})+\varepsilon_k$
and add and subtract
$\beta_k\prod_{r=1}^{k-1}\beta_r$.  Iteration gives
\begin{equation}
 D_L^{\rm pref}=\sum_{k=2}^L
 \varepsilon_k\prod_{m=k+1}^L\beta_m.
 \label{eq:defect-expanded}
\end{equation}
Prefix consistency at every step is equivalent to
$\varepsilon_k=0$ for every $k\ge2$.  By the pair classification,
\begin{equation}
 \varepsilon_k
 =\chi(U_k,U_{k-1:1})
 =4\operatorname{Re}\!\left[
 a_0(U_k,U_{k-1:1})
 \overline{a_1(U_k,U_{k-1:1})}
 \right].
 \label{eq:prefix-orthogonality}
\end{equation}

\begin{proposition}[Infinite qubit sequences]
There exists an infinite sequence of genuinely mixing qubit steps for which
every finite prefix satisfies the identity in
Eq.~\eqref{eq:appendix-prefix-law}.
\end{proposition}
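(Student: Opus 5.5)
The plan is to make the construction of Eqs.~\eqref{eq:constant-infinity-notes}--\eqref{eq:constant-infinity-phase} rigorous for a general polarization. Fix $\lambda\in(-1,1)$ with $\lambda\neq0$ (for instance $\lambda=1/2$) and set $\theta=\arccos\lambda\in(0,\pi)$. Define
\[
 U_1=R_{\sigma_y}(\theta),\qquad U_k=R_{\sigma_z}(\alpha_k)\circ R_{\sigma_x}(\theta)\quad(k\ge2),
\]
with phases $\alpha_k$ to be chosen inductively. Every step then has $\beta(U_k)=\lambda$. For $U_1$ this is immediate. For $k\ge2$, the diagonal factor $R_{\sigma_z}(\alpha_k)$ does not change $|\langle y|\cdot|x\rangle|^2$, so $\beta(U_k)=\beta(R_{\sigma_x}(\theta))=\cos\theta$. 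Since $|\lambda|<1$, each $\Born(U_k)=K(\lambda)$ has full support, so every step is genuinely mixing. By the recurrence \eqref{eq:defect-recurrence}--\eqref{eq:defect-expanded}, prefix consistency at every finite length is equivalent to $\varepsilon_k=0$ for all $k\ge2$. It therefore suffices to choose the $\alpha_k$ so that each increment vanishes.

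I would work in the Bloch picture. Let $r^{(k)}=(r_x,r_y,r_z)$ be the Bloch vector of $U_{k:1}|0\rangle$. Then $\beta(U_{k:1})=r^{(k)}_z$, and, by unitary invariance, the Bloch vector of $U_{k-1:1}|1\rangle$ is $-r^{(k-1)}$, which gives the second column of the kernel consistently. For a qubit unitary $V$ with $\mathrm{SO}(3)$ image $O_V$, the endpoint polarization is the $z$-row of $O_V$ applied to $r$:
\[
 \beta(V\circ W)=O_{zx}r_x+O_{zy}r_y+O_{zz}r_z,\qquad O_{zz}=\beta(V).
\]
Hence the increment equals $\varepsilon=O_{zx}r_x+O_{zy}r_y$, namely the transverse part of $r$ paired with the transverse part of the $z$-row of $O_V$. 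This is the Bloch form of $\chi$ in \eqref{eq:prefix-orthogonality}. For $V=R_{\sigma_z}(\alpha)\circ R_{\sigma_x}(\theta)$, the $z$-row is that of $R_{\sigma_x}(\theta)$, which is $(0,\pm\sin\theta,\cos\theta)$. So $\varepsilon_k=0$ exactly when $r^{(k-1)}_y=0$.

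The induction hypothesis will be
\[
 r^{(k)}=\bigl(\sqrt{1-\lambda^{2k}},\,0,\,\lambda^k\bigr).
\]
The base case is $r^{(1)}=(\sin\theta,0,\cos\theta)$. For the inductive step, $R_{\sigma_x}(\theta)$ fixes $r_x$ and rotates the $(y,z)$ plane. Because $r_y=0$, the new components are $r_z'=\lambda^{k}\cdot\lambda=\lambda^{k+1}$ and $r_y'=\pm\sin\theta\,\lambda^k$. The step $R_{\sigma_z}(\alpha_{k+1})$ then rotates the transverse vector $(\sqrt{1-\lambda^{2k}},\pm\sin\theta\,\lambda^k)$ back onto the positive $x$-axis. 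This requires
\[
 \tan\alpha_{k+1}=\frac{\lambda^{k}\sqrt{1-\lambda^2}}{\sqrt{1-\lambda^{2k}}},
\]
which reproduces \eqref{eq:constant-infinity-phase} and the formula in the realization paragraph. The transverse length $\sqrt{1-\lambda^{2(k+1)}}$ is forced by purity, so the hypothesis propagates. Consequently $\varepsilon_{k+1}=0$ and $\beta(U_{k:1})=\lambda^k$ for all $k$, which is exactly \eqref{eq:constant-infinity-prefix} when $\lambda=1/2$.

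The main obstacle is bookkeeping: the sign and orientation conventions relating $R_\sigma(\theta)=e^{-\ii\theta\sigma/2}$ to the $\mathrm{SO}(3)$ rotation. These determine the sign of $r_y'$ and hence the branch of $\alpha_k$ in $(0,\pi/2)$ for $\lambda>0$. I would fix that sign once by computing $R_{\sigma_x}(\theta)^\dagger\sigma_zR_{\sigma_x}(\theta)$ directly. If the sign comes out opposite, replacing $\alpha_k$ by $-\alpha_k$ (or using $R_{\sigma_z}(\alpha_k)^\dagger$) repairs the construction without changing any Born kernel.
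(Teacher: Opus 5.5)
Your argument is correct. It is not the route the paper takes for this proposition, however: it is essentially the Bloch-vector induction the paper uses immediately afterwards to prove Corollary~\ref{cor:constant-kernel-infinity}, which implies the proposition.

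The paper's own proof works directly with path amplitudes and chooses each step adaptively. Given the prefix amplitudes $\gamma_j=\langle j|U_{k-1:1}|0\rangle=|\gamma_j|e^{\ii\theta_j}$, both nonzero because $|\beta(U_{k-1:1})|=|\prod_r\beta_r|<1$, it takes an arbitrary $\beta_k\in(-1,1)$. It then sets the first row of $U_k$ to $(c_k,\ \ii s_ke^{\ii(\theta_0-\theta_1)})$. The two path amplitudes then share the phase $e^{\ii\theta_0}$ and differ by a factor of $\ii$, which forces $\chi(U_k,U_{k-1:1})=0$.

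That route gives more generality: any genuinely mixing first step and any sequence of full-support kernels $K(\beta_k)$ can be realized prefix-consistently. It also needs no $\mathrm{SO}(3)$ orientation bookkeeping.

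Your route gives explicitness: closed-form steps, the cumulative Bloch vector $(\sqrt{1-\lambda^{2k}},0,\lambda^k)$, a constant kernel $K(\lambda)$, and the phases of Eq.~\eqref{eq:constant-infinity-phase}. That is exactly what the main-text example and the restart-failure computation require. Your criterion $\varepsilon=O_{zx}r_x+O_{zy}r_y$ is the Bloch-space form of the paper's $\chi$ condition.

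One structural difference is worth noticing. In your scheme $\alpha_{k+1}$ has no effect on $\varepsilon_{k+1}$; it only restores $r_y=0$ so that the next step composes. In the paper, the phase of $U_k$ is tuned to make step $k$ itself compose.

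The sign convention you flag is harmless, as you say; with $R_\sigma(\theta)=e^{-\ii\theta\sigma/2}$ one gets $r_y'=-\lambda^k\sin\theta$ and $\alpha_{k+1}$ of the sign of $\lambda^k$. Your restriction $\lambda\neq0$ is unnecessary: at $\lambda=0$ every $\alpha_k$ vanishes and the induction still goes through.
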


\begin{proof}
Choose a genuinely mixing first step.  Suppose $U_1,\ldots,U_{k-1}$ have
been chosen prefix-consistently, with every
$|\beta_r|<1$.  Then
\begin{equation}
 |\beta(U_{k-1:1})|
 =\left|\prod_{r=1}^{k-1}\beta_r\right|<1,
\end{equation}
so both amplitudes
\begin{equation}
 \gamma_0:=\langle0|U_{k-1:1}|0\rangle,
 \qquad
 \gamma_1:=\langle1|U_{k-1:1}|0\rangle
\end{equation}
are nonzero.  Write $\gamma_j=|\gamma_j|e^{\ii\theta_j}$, choose any
$\beta_k\in(-1,1)$, and put
\begin{equation}
 c_k:=\sqrt{\frac{1+\beta_k}{2}},
 \qquad
 s_k:=\sqrt{\frac{1-\beta_k}{2}}.
\end{equation}
Choose the first row of $U_k$ to satisfy
\begin{equation}
 \langle0|U_k|0\rangle=c_k,
 \qquad
 \langle0|U_k|1\rangle
 =\ii s_k e^{\ii(\theta_0-\theta_1)},
 \label{eq:dirac-prefix-row-choice}
\end{equation}
and choose the second row as
\begin{equation}
 \langle1|U_k|0\rangle=\ii s_ke^{-\ii(\theta_0-\theta_1)},
 \qquad
 \langle1|U_k|1\rangle=c_k.
\end{equation}
The resulting two rows are orthonormal.  The two path amplitudes then
have the common phase $e^{\ii\theta_0}$ and a relative factor of $\ii$.
Hence $\chi(U_k,U_{k-1:1})=0$, while
$\beta(U_k)=c_k^2-s_k^2=\beta_k$.  The new prefix is consistent and
$|\beta_k|<1$ makes $U_k$ genuinely mixing.  Induction continues for all
$k$; truncation gives a sequence of every prescribed finite length.
\end{proof}

\begin{corollary}[Constant-kernel infinite sequences]
\label{cor:constant-kernel-infinity}
Fix $-1<\lambda<1$, set $\theta=\arccos\lambda$, and define
\begin{equation}
 U_1=R_{\sigma_y}(\theta),
 \qquad
 U_k=R_{\sigma_z}(\alpha_k)\circ R_{\sigma_x}(\theta),
 \qquad k\ge2,
 \label{eq:signed-infinity-notes}
\end{equation}
where $\alpha_k\in(-\pi/2,\pi/2)$ is determined by
\begin{equation}
 \tan\alpha_k
 =\frac{\lambda^{k-1}\sqrt{1-\lambda^2}}
        {\sqrt{1-\lambda^{2(k-1)}}}.
 \label{eq:signed-infinity-phase}
\end{equation}
Then every step has Born kernel $K(\lambda)$ and every finite prefix obeys
\begin{equation}
 \Born(U_\ell\circ\cdots\circ U_1)
 =K(\lambda^\ell)
 =\Born(U_\ell)\circ\cdots\circ\Born(U_1).
 \label{eq:signed-infinity-prefix}
\end{equation}

For every restart just before $U_r$, $r\ge2$, however,
\begin{equation}
 \Born(U_{r+1}\circ U_r)
 =K\!\left(\lambda^2-(1-\lambda^2)\cos\alpha_r\right)
 \ne K(\lambda^2)
 =\Born(U_{r+1})\circ\Born(U_r).
 \label{eq:signed-infinity-cue}
\end{equation}
\end{corollary}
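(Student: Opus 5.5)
The plan is to reduce everything to the qubit pair classification, Eq.~\eqref{eq:qubit-pair-iff-dirac}, together with the prefix recurrence Eq.~\eqref{eq:prefix-orthogonality}. The first step is the one-step kernels. $R_{\sigma_z}(\alpha)$ is diagonal unitary, so it is a gauge phase, and $\Born(R_{\sigma_z}(\alpha)\circ W)=\Born(W)$. Both $R_{\sigma_y}(\theta)$ and $R_{\sigma_x}(\theta)$ have diagonal entries of modulus $\cos(\theta/2)$ and off-diagonal entries of modulus $\sin(\theta/2)$. Their polarization is therefore $\cos\theta=\lambda$, so every step has $\Born(U_k)=K(\lambda)$. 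Writing $c=\cos(\theta/2)$ and $s=\sin(\theta/2)$, with $cs=\sqrt{1-\lambda^2}/2$, I will use the explicit form
\[
 U_k=\begin{pmatrix} e^{-\ii\alpha_k/2}c & -\ii e^{-\ii\alpha_k/2}s\\ -\ii e^{\ii\alpha_k/2}s & e^{\ii\alpha_k/2}c\end{pmatrix}.
\]

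For the prefix law, I will prove by induction the following invariant. Up to a global phase, the prefix state $U_{k:1}|0\rangle=(\gamma_0,\gamma_1)$ has real amplitudes, so $\gamma_0\overline{\gamma_1}\in\R$. The polarization is $\gamma_0^2-\gamma_1^2=\lambda^k$.

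The base case is immediate, because $R_{\sigma_y}(\theta)$ is a real matrix.

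For the inductive step, the two paths into endpoint $0$ have amplitudes $a_0=e^{-\ii\alpha/2}c\gamma_0$ and $a_1=-\ii e^{-\ii\alpha/2}s\gamma_1$. Their product $a_0\overline{a_1}$ is purely imaginary, so $\chi(U_k,U_{k-1:1})=0$ whatever the value of $\alpha_k$. Equation~\eqref{eq:prefix-orthogonality} then gives $\varepsilon_k=0$. The recurrence Eq.~\eqref{eq:defect-recurrence} gives the prefix identity, with kernel $K(\lambda^k)$ by Eq.~\eqref{eq:K-product}.

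The phase $\alpha_k$ is needed only to restore the invariant for the next step. A direct expansion gives
\[
 \gamma_0'\overline{\gamma_1'}=e^{-\ii\alpha_k}\bigl(\gamma_0\gamma_1+\ii cs(\gamma_0^2-\gamma_1^2)\bigr).
\]
This is real exactly when $\tan\alpha_k=cs\,\lambda^{k-1}/(\gamma_0\gamma_1)$. Substituting $|\gamma_0\gamma_1|=\sqrt{1-\lambda^{2(k-1)}}/2$ reproduces Eq.~\eqref{eq:signed-infinity-phase}.

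I expect the main bookkeeping obstacle to be the sign of $\gamma_0\gamma_1$, and hence the branch of $\alpha_k\in(-\pi/2,\pi/2)$. I would handle it by fixing the global phase at each stage so that $\gamma_0\gamma_1\ge0$, which is possible because the relative sign can be absorbed. I would also check that $\gamma_0\gamma_1\neq0$, which holds because $|\lambda^{k-1}|<1$.

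For the restart claim, I start from $|0\rangle$ and apply $U_r$. This produces amplitudes $(e^{-\ii\alpha_r/2}c,\,-\ii e^{\ii\alpha_r/2}s)$, whose relative phase is no longer real. Applying $U_{r+1}$ and computing $\beta(U_{r+1}\circ U_r)=|\langle0|\cdot|0\rangle|^2-|\langle1|\cdot|0\rangle|^2$ gives $\lambda^2+\chi$. The interference scalar is $\chi=4\operatorname{Re}[a_0\overline{a_1}]=-4c^2s^2\cos\alpha_r=-(1-\lambda^2)\cos\alpha_r$, which yields Eq.~\eqref{eq:signed-infinity-cue}. The inequality follows because $1-\lambda^2>0$ and $|\alpha_r|<\pi/2$ force $\cos\alpha_r>0$, so $\chi\neq0$.
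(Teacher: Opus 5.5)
Your route is the amplitude counterpart of the paper's Bloch-vector induction, and most of it is correct. The following all check out:
\begin{itemize}
\item the explicit matrix for $U_k$;
\item the identity $a_0\overline{a_1}=\ii cs\,\gamma_0\overline{\gamma_1}$, which gives $\chi(U_k,U_{k-1:1})=0$ for every $\alpha_k$;
\item the restoration formula $\gamma_0'\overline{\gamma_1'}=e^{-\ii\alpha_k}(\gamma_0\gamma_1+\ii cs\lambda^{k-1})$;
\item the restart value $\chi(U_{r+1},U_r)=-(1-\lambda^2)\cos\alpha_r$.
\end{itemize}

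The step that fails is your treatment of the sign. A global phase cannot change the sign of $\gamma_0\gamma_1$. The quantity $\gamma_0\overline{\gamma_1}=(r_x-\ii r_y)/2$ is invariant under global phases. The only global phases that keep both amplitudes real are $\pm1$, and these leave the product unchanged. So the sign of $\gamma_0\gamma_1$ is the sign of the Bloch $x$-component, a property of the state that cannot be gauged away.

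This matters for the induction. Suppose $\gamma_0\gamma_1<0$ at some stage and $\lambda\ne0$. Then reality of the next product would require $\tan\alpha_k=-cs\lambda^{k-1}/|\gamma_0\gamma_1|$, the opposite branch from Eq.~\eqref{eq:signed-infinity-phase}. With the prescribed $\alpha_k$, the new product would have imaginary part $2cs\lambda^{k-1}\cos\alpha_k\ne0$. That would give $\chi\ne0$ and break the prefix identity at step $k+1$.

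The repair is to carry positivity in the invariant, assuming $\gamma_0\overline{\gamma_1}>0$.
\begin{itemize}
\item Base case: the product is $cs>0$, since $\theta\in(0,\pi)$.
\item Inductive step: put $A=\gamma_0\gamma_1>0$ and $B=cs\lambda^{k-1}$. With $\tan\alpha_k=B/A$ you have $A+\ii B=Ae^{\ii\alpha_k}/\cos\alpha_k$. Hence $\gamma_0'\overline{\gamma_1'}=A/\cos\alpha_k$, which is positive precisely because $\alpha_k\in(-\pi/2,\pi/2)$.
\end{itemize}
This is the point where the branch restriction on $\alpha_k$ is genuinely used. The paper's invariant $\boldsymbol r_k=(\sqrt{1-\lambda^{2k}},0,\lambda^k)$ encodes the same fact through its positive $x$-component, so tracking the explicit Bloch vector settles reality and sign in one formula.

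Once repaired, your version has a small advantage: it makes visible that the prefix identity at step $k$ holds for every choice of $\alpha_k$, with $\alpha_k$ serving only to prepare step $k+1$.
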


\begin{proof}
Let $\boldsymbol r_k$ be the Bloch vector of
$U_k\circ\cdots\circ U_1|0\rangle$.  We claim
\begin{equation}
 \boldsymbol r_k
 =\left(\sqrt{1-\lambda^{2k}},0,\lambda^k\right).
 \label{eq:signed-infinity-bloch}
\end{equation}
It holds at $k=1$.  If it holds at $k-1$, then $R_{\sigma_x}(\theta)$ sends the last
two components to
\begin{equation}
 \left(-\lambda^{k-1}\sqrt{1-\lambda^2},\lambda^k\right).
\end{equation}
Equation~\eqref{eq:signed-infinity-phase} makes the following
$R_{\sigma_z}(\alpha_k)$ rotation cancel the transverse $y$ component and leaves
Eq.~\eqref{eq:signed-infinity-bloch}.  Therefore
$\Born(U_k)=K(\lambda)$ and the prefix kernel is $K(\lambda^k)$, proving
Eq.~\eqref{eq:signed-infinity-prefix} by
$K(\lambda)\circ K(\mu)=K(\lambda\cdot\mu)$.

For a two-step segment beginning at $U_r$, direct Bloch rotation gives the
polarization
\begin{equation}
 \lambda^2-(1-\lambda^2)\cos\alpha_r.
\end{equation}
Since $|\lambda|<1$ and $\alpha_r\in(-\pi/2,\pi/2)$, this differs from
$\lambda^2$, proving Eq.~\eqref{eq:signed-infinity-cue}.
At $\lambda=1/2$, \
the local formulas in Eqs.~\eqref{eq:signed-infinity-notes} and
\eqref{eq:signed-infinity-phase} give the constant-amplitude specialization.
At $\lambda=-1/2$, they give the signed companion, with
\begin{equation}
 \tan\alpha_k=(-1)^{k-1}
 \frac{\sqrt3}{2\sqrt{4^{k-1}-1}}.
\end{equation}
\end{proof}

\subsection{Discrete path-integral formulation}
\label{app:discrete-path-integral}

\begin{definition}[Context paths, amplitudes, and weights]
\label{def:context-paths-app}
The composition law admits an exact sum-over-paths form.  Fix

an entrance \(x_0=x\) and an endpoint \(x_L=y\), and let
\begin{equation}
 \Omega_{yx}^{(L)}
 :=\{(x_0,\ldots,x_L)\in X^{L+1}:x_0=x,\ x_L=y\}
 \label{eq:discrete-path-set}
\end{equation}
be the context paths joining them.  The amplitude and the corresponding
stepwise path weight are
\begin{align}
 \mathcal A(\gamma)
 &:=\prod_{t=1}^{L}\langle x_t|U_t|x_{t-1}\rangle,
 \label{eq:discrete-path-amplitude}\\
 w(\gamma)
 &:=|\mathcal A(\gamma)|^2
   =\prod_{t=1}^{L}\langle x_t|\Born_{\C}(U_t)|x_{t-1}\rangle.
 \label{eq:discrete-path-weight}
\end{align}
\end{definition}
These expressions follow by inserting
$\openone=\sum_{x_t\in X}|x_t\rangle\langle x_t|$ at every intermediate
boundary.  Explicitly,
\begin{equation}
\langle y|U_{L:1}|x\rangle
 =\sum_{x_1,\ldots,x_{L-1}}
   \prod_{t=1}^{L}\langle x_t|U_t|x_{t-1}\rangle.
 \label{eq:inserted-context-identities}
\end{equation}
The endpoint-checked route adds amplitudes before applying the Born rule,
whereas the stepwise-checked route adds their classical weights:
\begin{align}
\Born_{\C}(U_{L:1})_{yx}
 &=\left|\sum_{\gamma\in\Omega_{yx}^{(L)}}\mathcal A(\gamma)\right|^2,
 \label{eq:coherent-discrete-path-sum}\\
 \bigl(\Born_{\C}(U_L)\circ\cdots\circ\Born_{\C}(U_1)\bigr)_{yx}
 &=\sum_{\gamma\in\Omega_{yx}^{(L)}}w(\gamma).
 \label{eq:stochastic-discrete-path-sum}
\end{align}
Consequently,
\begin{align}
&\Born_{\C}(U_{L:1})_{yx}
 -\bigl(\Born_{\C}(U_L)\circ\cdots\circ\Born_{\C}(U_1)\bigr)_{yx}
 \nonumber\\
 &\hspace{3em}=
 \sum_{\substack{\gamma,\gamma'\in\Omega_{yx}^{(L)}\\
                   \gamma\ne\gamma'}}
 \mathcal A(\gamma)\overline{\mathcal A}(\gamma')
 =2\operatorname{Re}\!\sum_{\gamma<\gamma'}
 \mathcal A(\gamma)\overline{\mathcal A}(\gamma').
 \label{eq:path-interference-current}
\end{align}
\hspace{0.2em}
Here \(\gamma<\gamma'\) denotes any fixed ordering of the finite
path set; it merely counts each unordered pair once.
Equation~\eqref{eq:path-interference-current} is the path form of the
Born--Chapman--Kolmogorov current.  \ The
quantum-to-stochastic composition square
in Fig.~\ref{fig:quantum-stochastic-composition} commutes
exactly when this total cross-path contribution vanishes for every entrance
\(x\) and endpoint \(y\).  This is an aggregate cancellation condition.
Distinct paths may interfere, and the unitary steps may fail to commute.
Chronology is retained by the time-labeled factors and their endpoints in
Eq.~\eqref{eq:discrete-path-amplitude}, and by the ordered stochastic product
in Eq.~\eqref{eq:stochastic-discrete-path-sum}; scalar multiplication itself
is commutative.

When the stepwise endpoint probability is nonzero, the stepwise path law can
also be postselected on its endpoint:
\begin{equation}
 \mathbb P_{\rm step}(\gamma\mid X_0=x,X_L=y)
 =\frac{|\mathcal A(\gamma)|^2}
        {\displaystyle\sum_{\gamma'\in\Omega_{yx}^{(L)}}
         |\mathcal A(\gamma')|^2}.
 \label{eq:postselected-path-law}
\end{equation}
This is the endpoint-conditioned bridge law of the dephased Markov chain.

Finally, beginning at an internal boundary replaces
$\Omega_{yx}^{(L)}$ and its amplitudes by those of the prescribed suffix,
with a freshly prepared
\(\C\)-diagonal entrance state at that boundary.  Cancellation of the
cross terms for every prefix from the initial boundary therefore need not
imply cancellation for the suffix.  In path language, boundary instability
is the failure of aggregate interference cancellation when the path sum is
restarted at an internal boundary.  Section~\ref{app:not-consistent-histories} contrasts this
aggregate condition with pairwise consistency of fine-grained histories.

\subsection{Exact relation to dephasing and Kolmogorov consistency}
\label{app:ncgd-kolmogorov}

The vanishing-defect pair equation
$\mathcal J_{\C}(V,U)=0$ makes exact contact with work on dynamics that do not
generate and subsequently detect coherence.  Write
$\mathcal U(\rho)=U\rho U^\dagger$ and
$\mathcal V(\rho)=V\rho V^\dagger$.  Then
\begin{equation}
 \Mc\circ\mathcal V\circ\Mc\circ\mathcal U\circ\Mc
 =
 \Mc\circ\mathcal V\circ\mathcal U\circ\Mc
 \quad\Longleftrightarrow\quad
 \Born_{\C}(V)\circ\Born_{\C}(U)
 =\Born_{\C}(V\circ U).
 \label{eq:ncgd-born-equivalence}
\end{equation}
Indeed, after restricting their action to diagonal inputs and reading only
diagonal outputs, the two superoperators have the displayed Born kernels as
their matrices on the basis $\{P_x\}$.  Our Born-kernel equation therefore
recovers precisely the closed-unitary, fixed-context restriction of the
non-coherence-generating-and-detecting (NCGD) condition
\cite{app:SmirneEtAl2019,app:GarciaDiazAccessible}.
For one inserted measurement, this is no-signaling in time resolved
by the entrance outcome, required for every context-basis
preparation~\cite{app:KoflerBrukner,app:ClementeKofler,app:ClementeKofler2015}.
It does not assert that the
intermediate quantum state is undisturbed.

\begingroup
\setlength{\emergencystretch}{2em}

NCGD already imposes the dephasing identity for all triples of
times~\cite{app:GarciaDiazAccessible}; boundary stability specializes that
requirement to a prescribed finite unitary sequence.  Sakuldee, Taranto, and
Milz give a pair of genuinely
mixing qubit unitary steps satisfying the
identity~\cite[Example~6]{app:SakuldeeTarantoMilz}.  Here we classify unitary
solutions and distinguish an anchored family of composing prefixes from
compatibility under independent preparation at internal boundaries.
\par\endgroup

We now prove the schedule-consistency equivalence used by the main
text.

\begin{definition}[Checking schedules and schedule statistics]
\label{def:checking-schedule-app}
For every prescribed contiguous subpassage from boundary $r$ to
boundary $s$, where $0\leq r<s\leq L$, a schedule is
\(\tau=(r=t_0<t_1<\cdots<t_m=s)\).  Its sequential context statistics are
\begin{equation}
 p_\tau(x_{t_1},\ldots,x_{t_m}\mid x_r)
 :=\prod_{j=1}^{m}
 \Born_{\C}(U_{t_j:t_{j-1}+1})_{x_{t_j}x_{t_{j-1}}}.
 \label{eq:schedule-statistics-subpassage}
\end{equation}
The lower index $t_{j-1}+1$ names the first step after boundary
$t_{j-1}$; this shift is required because the $t_j$ label boundaries whereas
the subscripts of $U_{b:a}$ label the included steps.
For fixed entrance outcome $x_r$, this product is a normalized joint Markov
law obtained by reading $\C$ exactly at the scheduled boundaries.  If the
interior time $t_j$ is deleted, Kolmogorov marginal consistency means
\begin{equation}
 \sum_{x_{t_j}}p_\tau(x_{t_1},\ldots,x_{t_m}\mid x_r)
 =p_{\tau\setminus\{t_j\}}
   (x_{t_1},\ldots,\widehat{x}_{t_j},\ldots,x_{t_m}\mid x_r),
 \label{eq:schedule-marginalization}
\end{equation}
where the hat denotes omission.  Every internal-start test uses a fresh
context atom, or by linearity an arbitrary $\C$-diagonal mixture, rather than
the quantum state carried from an earlier passage.
In the rhythmic representation of Section~\ref{sec:quantum-music}, the same
schedule is written as the ordered composition
$(t_1-t_0,\ldots,t_m-t_{m-1})$ of $s-r$. With beat duration $\Delta>0$,
these differences specify successive sound holds. The schedule is
\emph{admissible} when its endpoint kernel equals
$\Born_{\C}(U_{s:r+1})$ for every entrance basis state, hence for every
$\C$-diagonal mixture. This is a condition on the endpoint marginal;
the stronger consistency under all readout insertions and deletions is stated
in Proposition~\ref{prop:schedule-consistency}.
\end{definition}

\Needspace{0.28\textheight}
\begin{proposition}[Boundary stability is full schedule consistency]
\label{prop:schedule-consistency}
For a prescribed unitary sequence in a fixed atomic context, the following
are equivalent:
\begin{enumerate}
 \item the sequence is boundary-stable in the sense of
 Eq.~\eqref{eq:appendix-cue-law};
 \item for every contiguous subpassage independently initialized
 in a \(\C\)-diagonal entrance state, its schedule statistics obey Kolmogorov
 marginal consistency under every insertion or deletion of a prescribed internal
 context readout.
\end{enumerate}
\end{proposition}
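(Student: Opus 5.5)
The plan is to treat the schedule statistics of Definition~\ref{def:checking-schedule-app} as products of interval Born kernels and to show that Kolmogorov marginal consistency under deleting one interior readout is exactly one genealogy cut identity. Both directions then reduce to bookkeeping over contiguous intervals.

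First, fix a subpassage from boundary $r$ to $s$, a schedule $\tau$, and an interior time $t_j$ with $0<j<m$. In Eq.~\eqref{eq:schedule-marginalization}, every factor of $p_\tau$ except the two adjacent to $t_j$ is common to both sides. Summing over $x_{t_j}$ leaves
\[
\sum_{x_{t_j}}\Born_{\C}(U_{t_{j+1}:t_j+1})_{x_{t_{j+1}}x_{t_j}}\Born_{\C}(U_{t_j:t_{j-1}+1})_{x_{t_j}x_{t_{j-1}}}
=\bigl[\Born_{\C}(U_{t_{j+1}:t_j+1})\circ\Born_{\C}(U_{t_j:t_{j-1}+1})\bigr]_{x_{t_{j+1}}x_{t_{j-1}}},
\]
whereas the right side carries $\Born_{\C}(U_{t_{j+1}:t_{j-1}+1})_{x_{t_{j+1}}x_{t_{j-1}}}$.

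Marginal consistency for this deletion, for all values of the remaining recorded outcomes and every entrance $x_r$, is therefore equivalent to the cut identity $K_{b:a}=K_{b:c+1}\circ K_{c:a}$ with $a=t_{j-1}+1$, $c=t_j$, $b=t_{j+1}$. For the reverse implication, the other factors must not vanish identically. Take the entrance $x_r=x_{t_{j-1}}$ directly when $j=1$. Otherwise, marginalize the other readouts using consistency already established, or simply choose the schedule $\tau=(t_{j-1}<t_j<t_{j+1})$ on the subpassage starting at boundary $t_{j-1}$, which item~2 permits because subpassages are independently initialized. This isolates the cut identity with the entrance ranging over all context atoms, and column-by-column testing as in the remark on context-diagonal preparations gives full kernel equality.

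(1)$\Rightarrow$(2). Boundary stability gives $K_{b:a}=\Born(U_b)\circ\cdots\circ\Born(U_a)$ for every contiguous interval. Splitting this product at $c$ yields $K_{b:c+1}\circ K_{c:a}$, so every cut identity holds and every deletion (equivalently, insertion) is consistent. Linearity extends the result from context atoms to $\C$-diagonal mixtures.

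(2)$\Rightarrow$(1). Induct on the interval length $b-a+1$, with the base case $b=a$ trivial. For $b>a$, apply item~2 to the subpassage from boundary $a-1$ to $b$ with $\tau=(a-1<a<b)$, deleting the readout at boundary $a$. By the reduction above this gives $K_{b:a}=K_{b:a+1}\circ\Born(U_a)$. The induction hypothesis expands $K_{b:a+1}$ into the product of its one-step kernels, which yields Eq.~\eqref{eq:appendix-cue-law}.

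The main obstacle is the converse step: extracting an exact kernel identity from marginal equalities of joint laws. Choosing the minimal three-boundary schedule on an independently initialized subpassage avoids having to divide by possibly vanishing neighboring factors, and this is the choice I would commit to.
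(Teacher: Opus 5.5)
Your proposal is correct and follows essentially the paper's route: (1)$\Rightarrow$(2) by splitting the boundary-stable product at the deleted cut so that summing over the removed outcome reproduces the merged block kernel, and (2)$\Rightarrow$(1) by comparing schedules on an independently initialized subpassage. The only difference is bookkeeping. The paper marginalizes the fully refined schedule down to the endpoint-checked one in a single comparison. You instead delete one readout from the three-boundary schedule $(a-1<a<b)$ and induct on interval length, which is equally valid.
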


\begin{proof}
Fix a contiguous subpassage.
Deleting one readout merges two adjacent unitary blocks.  Boundary stability
identifies the Born kernel of that merged block with the stochastic product of
the two block kernels, so summing over the deleted outcome preserves the
coarser distribution as in Eq.~\eqref{eq:schedule-marginalization}.  Repetition
proves consistency for every schedule and every diagonal entrance mixture.
Conversely, compare on each subpassage the endpoint-checked schedule with the
schedule containing every internal readout.  Marginal consistency equates the
direct interval Born kernel with the product of its one-step Born kernels,
which is Eq.~\eqref{eq:appendix-cue-law}.
\end{proof}

Broader
schedule-consistency settings are studied in
\cite{app:SakuldeeTarantoMilz,app:StrasbergGarciaDiaz}.  Proposition
\ref{prop:schedule-consistency} records the finite-sequence dictionary:
boundary stability is fixed-context schedule consistency on every prescribed
subpassage independently initialized in a \(\C\)-diagonal state, whereas prefix
consistency retains only endpoint-versus-fully-refined comparisons from the
initial boundary.

Boundary stability is operation-supplied divisibility across every
prescribed cut.  In stochastic divisibility, an intermediary need
only exist~\cite{app:Pimenta}; here it must be the Born kernel of the actual
unitary subpassage.  The underlying closed-unitary processes remain
operationally Markovian and completely positive divisible, even when these
population kernels fail to compose~\cite{app:PollockEtAl,app:MilzModi}.
Quantum-channel divisibility is distinct: Wolf and Cirac ask
whether a quantum channel factors through intermediate quantum
channels~\cite{app:WolfCirac}.

\subsection{Born-kernel endpoint composition does not imply consistent
fine-grained histories}
\label{app:not-consistent-histories}

The consistent-histories framework assigns probabilities to a chosen family
of multitime alternatives when its decoherence functional satisfies
appropriate consistency conditions
\cite{app:Griffiths,app:DowkerHalliwell,app:GellMannHartle,app:PazZurek}.
Endpoint Born-kernel composition for a fixed passage is different from, and
strictly weaker than, the following fixed-context fine-grained-history
condition.

\begin{definition}[Fixed-context fine-grained-history consistency]
\label{def:fixed-context-history-consistency-app}
Fix an entrance $x$, an endpoint $y$, and a sequence
$U_1,\ldots,U_L$.  For an intermediate context path
$\alpha=(x_1,\ldots,x_{L-1})$, define
\begin{equation}
 \mathcal A_{\alpha}^{yx}
 :=\langle y|U_LP_{x_{L-1}}U_{L-1}\cdots P_{x_1}U_1|x\rangle,
 \qquad
 \mathfrak D_{\alpha\beta}^{yx}
 :=\mathcal A_{\alpha}^{yx}\overline{\mathcal A_{\beta}^{yx}}.
 \label{eq:history-decoherence-functional}
\end{equation}
Then
\begin{align}
 \Born(U_L\circ\cdots\circ U_1)_{yx}
 &=\sum_{\alpha,\beta}\mathfrak D_{\alpha\beta}^{yx},\nonumber\\
 \bigl(\Born(U_L)\circ\cdots\circ\Born(U_1)\bigr)_{yx}
 &=\sum_{\alpha}\mathfrak D_{\alpha\alpha}^{yx}.
 \label{eq:history-two-sums}
\end{align}
Consequently, endpoint Born kernels compose precisely when
\begin{equation}
 \sum_{\alpha\ne\beta}\mathfrak D_{\alpha\beta}^{yx}=0
 \qquad\text{for every }x,y.
 \label{eq:aggregate-history-cancellation}
\end{equation}
\begin{equation}
 \operatorname{Re}\mathfrak D_{\alpha\beta}^{yx}=0
 \qquad
 \text{for every }x,y\text{ and every }\alpha\ne\beta
 \label{eq:fine-grained-history-consistency}
\end{equation}

is the fixed-context fine-grained specialization of weak
decoherence.  It is the pointwise weak-decoherence condition for this particular
context-path family~\cite{app:Griffiths,app:DowkerHalliwell,app:GellMannHartle,app:PazZurek}.
\end{definition}

The scalar $\mathfrak D_{\alpha\beta}^{yx}$ is the entrance--endpoint-resolved
contribution to the standard decoherence functional.  For the natural
fine-grained context paths used below, medium decoherence requires
\(\mathfrak D_{\alpha\beta}^{yx}=0\) for every distinct pair,
whereas weak decoherence requires
\(\operatorname{Re}\mathfrak D_{\alpha\beta}^{yx}=0\) pairwise.  Either condition implies
the aggregate cancellation in Eq.~\eqref{eq:aggregate-history-cancellation};
the converse need not hold.

\begin{center}
\small
\begin{tabularx}{0.94\textwidth}{@{}l>{\raggedright\arraybackslash}Xl@{}}
\toprule
\textbf{Condition} & \textbf{What must vanish for each fixed $x,y$} & \textbf{Strength}\\
\midrule
medium decoherence & every off-diagonal $\mathfrak D_{\alpha\beta}^{yx}$ & strongest\\
weak decoherence & every $\operatorname{Re}\mathfrak D_{\alpha\beta}^{yx}$ & intermediate\\
endpoint Born-kernel composition & only their aggregate sum & weakest\\
\bottomrule
\end{tabularx}
\end{center}

\Needspace{12\baselineskip}
\begin{proposition}[Endpoint composition without weak decoherence]
For \(\sigma\in\{\sigma_x,\sigma_y,\sigma_z\}\), let
\(R_\sigma(\theta)=e^{-\ii\theta\sigma/2}\).  Choose
$\varphi\in(0,\pi/2)$ with $\tan\varphi=2$, and set
\begin{equation}
 U_1=R_{\sigma_y}(\pi/3),
 \qquad
 U_2=R_{\sigma_x}(2\pi/3),
 \qquad
 U_3=R_{\sigma_y}(-\pi/3)\circ R_{\sigma_z}(-\varphi).
 \label{eq:history-separation-witness}
\end{equation}
Both prefixes obey the fully refined Born-kernel identity:
\begin{align}
 \Born(U_2\circ U_1)
 &=\Born(U_2)\circ\Born(U_1)=K(-1/4),\nonumber\\
 \Born(U_3\circ U_2\circ U_1)
 &=\Born(U_3)\circ\Born(U_2)\circ\Born(U_1)=K(-1/8).
 \label{eq:history-separation-prefixes}
\end{align}
Nevertheless, the fine-grained histories of the complete three-step prefix
are not weakly decoherent.  Beginning instead at the internal boundary
before $U_2$ destroys the aggregate cancellation:
\begin{equation}
 \mathcal J_{\C}(U_3,U_2)
 =\frac{3\sqrt5}{20}
 (\sigma_x-\openone)
 \ne0.
 \label{eq:history-separation-current}
\end{equation}
Thus the witness establishes endpoint composition for the complete prefix.
\par\noindent

The sequence is not boundary-stable and therefore does not supply a
genealogy on all contiguous intervals.
\end{proposition}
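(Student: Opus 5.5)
The plan is to reduce everything to the scalar qubit coordinate $\beta$ of Definition~\ref{def:qubit-kernel-coordinate-app}, and to evaluate it by tracking Bloch vectors of $|0\rangle$ through the Pauli rotations. Three earlier facts do the work: $\Born(U)=K(\beta(U))$, the product law $K(\mu)\circ K(\lambda)=K(\mu\lambda)$, and the scalar criterion~\eqref{eq:scalar-iff}. Throughout I use the same Bloch conventions as in the proof of Corollary~\ref{cor:constant-kernel-infinity}. The third component of the rotated Bloch vector of $|0\rangle$ is $\beta$.

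\emph{Step 1: one-step kernels.} $R_{\sigma_y}(\pi/3)$ sends $(0,0,1)$ to $(\sqrt3/2,0,1/2)$, so $\beta_1=1/2$. $R_{\sigma_x}(2\pi/3)$ gives $\beta_2=\cos(2\pi/3)=-1/2$. For $U_3$, the factor $R_{\sigma_z}(-\varphi)$ fixes the pole, so $\beta_3=\cos(-\pi/3)=1/2$. The fully refined products are therefore $K(-1/4)$ and $K(-1/8)$.

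\emph{Step 2: coherent prefixes.} The pair $(U_1,U_2)$ lies in the cross-axis family~\eqref{eq:cross-axis-qubit-family}, so $\Born(U_2\circ U_1)=K(\cos\tfrac{2\pi}{3}\cos\tfrac{\pi}{3})=K(-1/4)$. For the three-step prefix I push the Bloch vector forward.
\begin{itemize}
\item After $U_2$ it is $(\sqrt3/2,\mp\sqrt3/4,-1/4)$, with the sign fixed by the rotation convention.
\item $R_{\sigma_y}(-\pi/3)$ produces third component $\tfrac12(-\tfrac14)\pm\tfrac{\sqrt3}{2}x'$, where $x'$ is the first component after $R_{\sigma_z}(-\varphi)$.
\item This equals $-1/8$ exactly when $x'=0$, i.e.\ when $R_{\sigma_z}(-\varphi)$ rotates the transverse vector $(\sqrt3/2,\mp\sqrt3/4)$ onto the $y$-axis.
\item That happens precisely when $\tan\varphi=(\sqrt3/2)/(\sqrt3/4)=2$, with the sign of $-\varphi$ matched to the sign of the $y$-component.
\end{itemize}
This sign bookkeeping is the one place I expect to need care. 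I will verify it by writing $R_{\sigma_z}$ explicitly as a rotation of the $(x,y)$ plane.

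\emph{Step 3: internal restart.} Start at $|0\rangle$ before $U_2$. $R_{\sigma_x}(2\pi/3)$ gives $(0,\mp\sqrt3/2,-1/2)$. Using $\cos\varphi=1/\sqrt5$ and $\sin\varphi=2/\sqrt5$, $R_{\sigma_z}(-\varphi)$ creates an $x$-component of magnitude $\sqrt3/\sqrt5$. Then $R_{\sigma_y}(-\pi/3)$ gives $\beta(U_3\circ U_2)=-\tfrac14-\tfrac{3}{2\sqrt5}$, with the sign taken from the same convention as Step 2. Since $K(a)-K(b)=\tfrac{a-b}{2}(\openone-\sigma_x)$, this yields $\mathcal J_{\C}(U_3,U_2)=-\tfrac{3}{4\sqrt5}(\openone-\sigma_x)=\tfrac{3\sqrt5}{20}(\sigma_x-\openone)$. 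As a cross-check I will recompute it as $\tfrac12\chi(U_3,U_2)(\openone-\sigma_x)$ using the two path amplitudes~\eqref{eq:qubit-path-amplitudes}.

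\emph{Step 4: failure of weak decoherence.} Fix $x=y=0$. The four histories $\alpha=(x_1,x_2)\in\{0,1\}^2$ have amplitudes
\[
\mathcal A_\alpha=\langle0|U_3|x_2\rangle\langle x_2|U_2|x_1\rangle\langle x_1|U_1|0\rangle .
\]
These are explicit products of entries of $R_{\sigma_y}(\pi/3)$ (real), $R_{\sigma_x}(2\pi/3)$ (real diagonal, imaginary off-diagonal), and $U_3$ (which carries the $e^{\pm\ii\varphi/2}$ phases).
\begin{itemize}
\item Take $\alpha=(0,0)$ and $\beta=(0,1)$. They differ only in $x_2$, so $\mathfrak D_{\alpha\beta}$ is $|\langle0|U_1|0\rangle|^2$ times a product of a real $U_2$ entry with an imaginary one, times $\langle0|U_3|0\rangle\overline{\langle0|U_3|1\rangle}$.
\item The real part of that last factor involves $\sin\varphi\neq0$, so $\operatorname{Re}\mathfrak D_{\alpha\beta}^{00}\neq0$.
\end{itemize}
If this particular pair happens to vanish, I will scan the remaining five pairs. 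Since $\sum_{\alpha\ne\beta}\mathfrak D_{\alpha\beta}=0$ by Step 2 while the terms are not all zero, some pair must have nonzero real part unless every pairwise term is purely imaginary, which the explicit entries rule out. This proves the proposition.
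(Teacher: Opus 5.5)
Your proposal is correct and is essentially the paper's own direct-evaluation proof: your Bloch-rotation values $\beta(U_{3:1})=-\tfrac18$ and $\beta(U_{3:2})=-\tfrac14-\tfrac{3}{2\sqrt5}$ are right (and they explain why $\tan\varphi=2$ is forced), and your single entry $\operatorname{Re}\mathfrak D^{00}_{(0,0),(0,1)}=-\tfrac{9}{64}\sin\varphi=-\tfrac{9\sqrt5}{160}$ is exactly the $-18$ entry of the paper's displayed matrix, which is enough because weak decoherence is a pairwise condition. One small slip: the $U_2$ factor $\tfrac{\ii\sqrt3}{4}$ is purely imaginary, so it is the imaginary part, not the real part, of $\langle0|U_3|0\rangle\overline{\langle0|U_3|1\rangle}=\tfrac{\sqrt3}{4}e^{\ii\varphi}$ that carries $\sin\varphi$, and this does not affect your conclusion.
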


\begin{proof}
The individual kernels are
$\Born(U_1)=K(1/2)$, $\Born(U_2)=K(-1/2)$, and
$\Born(U_3)=K(1/2)$.  Direct multiplication gives
Eq.~\eqref{eq:history-separation-prefixes}.  For the first two steps and
$x=y=0$, the two path amplitudes satisfy
\begin{equation}
 \mathfrak D_{01}^{00}=\frac{3\ii}{16}.
\end{equation}
Thus medium decoherence already fails, although its real part vanishes.

For the complete prefix, order the four intermediate paths as
$(0,0),(0,1),(1,0),(1,1)$.  At $x=y=0$, the real off-diagonal part of the
decoherence matrix is
\begin{equation}
 \operatorname{Re}\mathfrak D_{\mathrm{off}}^{00}
 =\frac{\sqrt5}{320}
 \begin{pmatrix}
  0&-18&0&3\\
  -18&0&9&0\\
  0&9&0&6\\
  3&0&6&0
 \end{pmatrix}.
 \label{eq:history-separation-matrix}
\end{equation}
The matrix has nonzero entries, so pairwise weak decoherence fails, but the
sum of all its entries is zero.  The other three entrance--endpoint pairs
have the same cancellation with the corresponding sign changes, which is
equivalent to the second identity in
Eq.~\eqref{eq:history-separation-prefixes}.  For the suffix, direct
evaluation gives Eq.~\eqref{eq:history-separation-current}.
\end{proof}

The prefix composition
identities in this witness therefore come from interference balance,
not from absent interference.
Cue instability is the loss of that balance when the remaining
passage is restarted at an internal boundary.

\section{Quantitative bounds}
\label{app:quantitative-bounds}

We quantify schedule information and the composition current, then bound
the number of full-support steps in a boundary-stable sequence.

\subsection{Conditional information carried by the checking schedule}
\label{app:schedule-information}

\begin{definition}[Checking-schedule information]
\label{def:schedule-information-app}
Fix a contiguous passage containing steps $a$ through $b$, where
$1\leq a\leq b\leq L$, and define its coherent and fully checked endpoint
kernels by
\begin{equation}
 \mathsf P_{b:a}:=\Born_{\C}(U_{b:a}),
 \qquad
 \mathsf Q_{b:a}:=
 \Born_{\C}(U_b)\circ\cdots\circ\Born_{\C}(U_a).
 \label{eq:schedule-information-kernels}
\end{equation}
Let $X_{a-1}\sim\pi$ be a recorded entrance outcome
and let an independent fair
bit $S$ select the protocol: $S=0$ selects the coherent passage, and $S=1$
selects the passage with unread context checks at every prescribed internal
boundary.  The final recorded outcome $X_b$ then has joint law
\begin{equation}
 \Pr(S=\sigma,X_{a-1}=x,X_b=y)
 =\frac12\pi(x)(\mathsf K_\sigma)_{yx},
 \qquad
 \mathsf K_0=\mathsf P_{b:a},\quad
 \mathsf K_1=\mathsf Q_{b:a}.
 \label{eq:schedule-information-experiment}
\end{equation}

For a discrete random variable \(A\) with law \(p_A\), write
\(H_2(A):=-\sum_a p_A(a)\log_2p_A(a)\), with \(0\log_2 0:=0\).  For
discrete (B,C), define
\(H_2(B\mid C):=\sum_c\Pr(C=c)H_2(B\mid C=c)\).  The base-two conditional
mutual information is
\[
 I(A;B\mid C):=H_2(B\mid C)-H_2(B\mid A,C).
\]
It measures how much learning \(A\) reduces the uncertainty of \(B\) once
\(C\) is known.  We also write
\(I(A;B):=H_2(B)-H_2(B\mid A)\).

Equivalently, the base-two Shannon entropy of a probability vector
$p$
\cite{app:Shannon1948} is
\begin{equation}
 H_2(p):=-\sum_y p_y\log_2p_y,
 \qquad 0\log_2 0:=0.
 \label{eq:shannon-entropy-app}
\end{equation}
Define the base-two Jensen--Shannon divergence
\cite{app:Lin1991} by
\begin{equation}
 \operatorname{JS}_2(p,q):=
 H_2\!\left(\frac{p+q}{2}\right)
 -\frac12H_2(p)-\frac12H_2(q).
 \label{eq:js-definition-app}
\end{equation}
The \emph{Born--Chapman--Kolmogorov schedule information} of the
passage is
\begin{equation}
 \mathcal I_{\C,\pi}^{\mathrm{BCK}}(b:a)
 :=I(S;X_b\mid X_{a-1}).
 \label{eq:schedule-information-definition-app}
\end{equation}
Here $X_{a-1}$ is the boundary outcome immediately before the
first included step $U_a$, while $X_b$ is the outcome immediately after the
last included step $U_b$.
\end{definition}

\begin{sloppypar}
\begin{proposition}[Conditional schedule information]
\label{prop:schedule-information}
For the experiment in Eq.~\eqref{eq:schedule-information-experiment},
\begin{equation}
 \mathcal I_{\C,\pi}^{\mathrm{BCK}}(b:a)
 =\sum_x\pi(x)\,
 \operatorname{JS}_2\!\left(
   (\mathsf P_{b:a})_{\cdot x},
   (\mathsf Q_{b:a})_{\cdot x}
 \right),
 \label{eq:schedule-information-identity}
\end{equation}
and $0\leq\mathcal I_{\C,\pi}^{\mathrm{BCK}}(b:a)\leq1$ bit per run.
If $\pi(x)>0$ for every $x$, then
\begin{equation}
 \mathcal I_{\C,\pi}^{\mathrm{BCK}}(b:a)=0
 \quad\Longleftrightarrow\quad
 \mathsf P_{b:a}=\mathsf Q_{b:a}.
 \label{eq:schedule-information-zero}
\end{equation}
Consequently, for any fixed strictly positive $\pi$, a prescribed sequence is
boundary-stable exactly when this conditional information vanishes on every
contiguous passage.  Prefix consistency requires the same condition only for
passages with $a=1$.

For the uniform entrance distribution,
\begin{equation}
 \mathcal I_{\C,\mathrm{unif}}^{\mathrm{BCK}}(b:a)
 \geq
 \frac{\|\mathsf P_{b:a}-\mathsf Q_{b:a}\|_{\mathrm F}^{2}}
      {8d\ln2}.
 \label{eq:schedule-information-pinsker}
\end{equation}
For two steps, with $U_1=U$ and $U_2=V$, this becomes
\begin{equation}
\mathcal I_{\C,\mathrm{unif}}^{\mathrm{BCK}}(2:1)
 \geq
 \frac{d-1}{8d\ln2}\nu_{\C}(V,U)^2.
 \label{eq:schedule-information-current-bound}
\end{equation}
\end{proposition}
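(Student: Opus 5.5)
Conditioning on the recorded entrance reduces the whole statement to single-column comparisons of two endpoint laws. Under Eq.~\eqref{eq:schedule-information-experiment}, conditional on $X_{a-1}=x$ the fair bit $S$ is still uniform and independent of $X_{a-1}$. The law of $X_b$ given $(S=\sigma,X_{a-1}=x)$ is the column $(\mathsf K_\sigma)_{\cdot x}$, and given $X_{a-1}=x$ alone it is the midpoint $\tfrac12[(\mathsf P_{b:a})_{\cdot x}+(\mathsf Q_{b:a})_{\cdot x}]$. Substituting into $I(S;X_b\mid X_{a-1})=H_2(X_b\mid X_{a-1})-H_2(X_b\mid S,X_{a-1})$ and averaging over $x$ with weight $\pi(x)$ gives Eq.~\eqref{eq:schedule-information-identity} directly, by the definition \eqref{eq:js-definition-app}.

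The bounds then come from standard properties of $\operatorname{JS}_2$, namely $0\le \operatorname{JS}_2\le 1$ bit. Nonnegativity follows from concavity of $H_2$. The upper bound follows because $\operatorname{JS}_2(p,q)=I(S;Y)$ for a fair bit $S$, and $I(S;Y)\le H_2(S)=1$. Averaging over $\pi$ preserves the range $[0,1]$.

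For the zero criterion \eqref{eq:schedule-information-zero}, I will use strict concavity of $H_2$: $\operatorname{JS}_2(p,q)=0$ iff $p=q$. With $\pi(x)>0$ for every $x$, the sum vanishes iff every column agrees, that is, iff $\mathsf P_{b:a}=\mathsf Q_{b:a}$. The statements on boundary stability and prefix consistency are then just definitions \eqref{eq:appendix-cue-law} and \eqref{eq:appendix-prefix-law} rewritten passage by passage.

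For the Pinsker-type bound \eqref{eq:schedule-information-pinsker}, set $m=(p+q)/2$ and write $\operatorname{JS}_2=\tfrac12[D(p\|m)+D(q\|m)]$, with $D$ the relative entropy in bits. Pinsker in bits gives $D(p\|m)\ge \|p-m\|_1^2/(2\ln 2)$, and since $\|p-m\|_1=\|q-m\|_1=\tfrac12\|p-q\|_1$ this yields $\operatorname{JS}_2\ge \|p-q\|_1^2/(8\ln2)$. Then $\|v\|_1\ge\|v\|_2$ converts each column to a squared Euclidean norm. With the uniform weights $1/d$, summing the columns turns $\sum_x\|\cdot_{x}\|_2^2$ into the squared Frobenius norm, which gives the factor $1/(8d\ln 2)$.

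For two steps, $\mathsf P-\mathsf Q=\mathcal J_{\C}(V,U)$, and Definition~\ref{def:normalized-current-app} gives $\|\mathcal J\|_{\mathrm F}^2=(d-1)\nu_{\C}^2$, which is \eqref{eq:schedule-information-current-bound}. The main obstacle I expect is bookkeeping of constants: base-two versus natural logarithms in Pinsker, and the halving from the midpoint $m$. I would check these against the Hadamard case, where $\mathsf P=\openone$ and $\mathsf Q=J_2/2$, to confirm the inequality direction and constants.
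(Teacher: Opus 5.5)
Your proposal is correct and follows essentially the same route as the paper. Both arguments expand the conditional mutual information column by column into the $\pi$-weighted Jensen--Shannon sum, use strict concavity of $H_2$ for the zero criterion, and apply Pinsker together with $\|\cdot\|_1\ge\|\cdot\|_2$ and uniform column averaging to reach the Frobenius and $\nu_{\C}$ bounds. Your midpoint derivation of the $1/(8\ln2)$ constant and your argument $\operatorname{JS}_2\le H_2(S)=1$ make explicit two steps that the paper only asserts.
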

\end{sloppypar}

\begin{proof}
Conditioned on $X_{a-1}=x$ but not on $S$, the endpoint law is
$m_x=[(\mathsf P_{b:a})_{\cdot x}+(\mathsf Q_{b:a})_{\cdot x}]/2$.
Expanding
$I(S;X_b\mid X_{a-1})
=H_2(X_b\mid X_{a-1})
-H_2(X_b\mid S,X_{a-1})$
gives Eq.~\eqref{eq:schedule-information-identity}.  Strict concavity of
Shannon entropy makes each Jensen--Shannon term nonnegative and zero exactly
when its two arguments agree.  Strict
positivity of $\pi$ therefore gives
Eq.~\eqref{eq:schedule-information-zero}, whose all-interval version is
Eq.~\eqref{eq:appendix-cue-law}.  Finally, Pinsker's inequality gives
\begin{equation}
 \operatorname{JS}_2(p,q)
 \geq\frac{\|p-q\|_1^2}{8\ln2}
 \geq\frac{\|p-q\|_2^2}{8\ln2}.
 \label{eq:js-pinsker-app}
\end{equation}
Averaging uniformly over the columns proves
Eq.~\eqref{eq:schedule-information-pinsker}; the two-step specialization uses
Eqs.~\eqref{eq:defect} and \eqref{eq:metric}.
\end{proof}

The conditioning on the entrance record is essential.  If a uniform entrance
is sampled and then discarded, bistochasticity makes both endpoint marginals
uniform, so $I(S;X_b)=0$ can hold even when
$\mathsf P_{b:a}\ne\mathsf Q_{b:a}$.  This is classical Shannon information
of the recorded variables
$(S,X_{a-1},X_b)$, not an entropy of the signed current,
a count of hidden physical bits, or a generic measure of quantum
non-Markovianity.

Coherence-mediated information return under classical reduction is
established~\cite{app:BenattiChruscinskiNichele}; coherent-versus-dephased
population discrepancies have also been quantified by trace
distance~\cite{app:GarciaDiazAccessible}.  The quantity above gives the
conditional information about the chosen checking schedule.

\Needspace{3\baselineskip}
The fair bit fixes the ordinary Jensen--Shannon normalization.  More
generally, if $\Pr(S=1)=\alpha\in(0,1)$, then the same proof replaces
Eq.~\eqref{eq:js-definition-app} by the weighted divergence
\begin{equation}
 \operatorname{JS}_{2,\alpha}(p,q)
 :=H_2((1-\alpha)p+\alpha q)
   -(1-\alpha)H_2(p)-\alpha H_2(q),
 \label{eq:weighted-js-app}
\end{equation}
without changing the zero criterion.  This biased form applies when a
quantum computation supplies the schedule bit in
Section~\ref{app:quantum-musical-prediction}.

Retaining the final quantum system $R$ before the context
measurement gives
\[
 I_c:=I(S;X_b\mid X_{a-1}),
 \qquad
 I_q:=I(S;R\mid X_{a-1}),
\]
where $I_q$ uses von Neumann entropies with base-two logarithms.
For the independent fair checking bit $S$,
\[
 \boxed{0\leq I_c\leq I_q\leq1\ \text{bit}.}
\]
These inequalities follow from nonnegativity, data processing under the
final measurement, and the entropy bound for a classical bit, respectively.

For the pair in Eq.~\eqref{eq:qubit-distinct-pair}, conditioned on
either entrance label, the coherent density operator is pure, the checked
density operator has eigenvalues $3/4,1/4$, and their equally weighted
average has eigenvalues $(4\pm\sqrt7)/8$.  Hence
\[
 I_q=h_2\!\left(\frac{4+\sqrt7}{8}\right)
 -\frac12h_2\!\left(\frac34\right)
 \simeq0.2504\ \text{bits},
\]
where $h_2(t)=-t\log_2t-(1-t)\log_2(1-t)$.
The common final unitary $V$ leaves $I_q$ unchanged.

\subsection[A sharp Frobenius bound]{A sharp Frobenius bound}
\label{app:bound}

\begin{definition}[Uniform mode and complex Hadamard unitaries]
\label{def:complex-hadamard-app}
Here, set
$d=|X|=|\C|=\dim\mathsf H$.  Define the normalized equal-amplitude context ket
\[
 |\boldsymbol{+}_d\rangle
 :=\frac{1}{\sqrt d}\sum_{x\in X}|x\rangle,
 \qquad
 J_d:=d|\boldsymbol{+}_d\rangle\langle\boldsymbol{+}_d|,
\]
so that $J_d$ is the $d\times d$ all-ones matrix.  The normalized Frobenius
current introduced in
Eq.~\eqref{eq:metric} has the sharp bound
\begin{equation}
 0\leq \nu_{\C}(V,U)\leq 1.
 \label{eq:metric-bound}
\end{equation}
It measures the aggregate signed difference of the endpoint-checked and
stepwise-checked kernels.  Its largest possible value is attained.

A unitary $F\in\U(d)$ is a \emph{complex Hadamard unitary} when every entry
has modulus $1/\sqrt d$.  Equivalently, its Born kernel is the uniform kernel
$J_d/d$.  With outcomes labeled $0,\ldots,d-1$, the Fourier matrix
\[
 F_d:=\frac1{\sqrt d}\sum_{x,y=0}^{d-1}
 e^{2\pi\ii xy/d}|x\rangle\langle y|
\]
is an example in every dimension.
\end{definition}

\begin{theorem}[Sharp Born-composition bound]
\label{thm:bound-main}
For every $d\ge2$,
\begin{equation}
 \sup_{U,V\in\U(d)}\nu_{\C}(V,U)=1.
 \label{eq:sharp-bound-main}
\end{equation}
Every \ inverse complex-Hadamard pair
$U=F$, $V=F^\dagger$ attains the bound.
\end{theorem}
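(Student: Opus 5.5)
We need to prove two things: the upper bound $\|\mathcal J_{\C}(V,U)\|_{\mathrm F}^2\le d-1$ for all $U,V\in\U(d)$, and attainment by $U=F$, $V=F^\dagger$.

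\textbf{Attainment.} For an inverse complex-Hadamard pair we have $V\circ U=\openone$, so $K_{\rm end}=\Born_{\C}(\openone)=\openone$. Since $\Born_{\C}(F)=\Born_{\C}(F^\dagger)=J_d/d$ and $(J_d/d)^2=J_d/d$, we get $K_{\rm step}=J_d/d$. Hence $\mathcal J_{\C}=\openone-J_d/d$. This is the orthogonal projection onto the complement of $|\boldsymbol{+}_d\rangle$, a rank-$(d-1)$ projection, so $\|\mathcal J_{\C}\|_{\mathrm F}^2=\Tr(\openone-J_d/d)=d-1$. Therefore $\nu_{\C}(F^\dagger,F)=1$.

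\textbf{Upper bound.} The plan is to bound the Frobenius norm through the operator norm and a rank/mode argument. Write $A=K_{\rm end}$ and $B=K_{\rm step}$. Both are bistochastic, so both fix $|\boldsymbol{+}_d\rangle$ and, being real, preserve its orthogonal complement $W=\{v:\sum_xv_x=0\}$ (bistochasticity gives $A^{\mathsf T}$ the same property). Then $\mathcal J_{\C}=A-B$ annihilates $|\boldsymbol{+}_d\rangle$ and maps $W$ into $W$. It is therefore the direct sum of $0$ on the uniform mode and its restriction $(A-B)|_W$.

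For the direct bound $\|A-B\|_{\mathrm F}^2=\|A\|_{\mathrm F}^2+\|B\|_{\mathrm F}^2-2\Tr(A^{\mathsf T}B)$, the cross term is nonnegative since all entries are nonnegative. It then suffices to show $\|A\|_{\mathrm F}^2+\|B\|_{\mathrm F}^2\le d-1+2\Tr(A^{\mathsf T}B)$. This route looks messy. It is cleaner to expand directly:
\[
\|A-B\|_{\mathrm F}^2=\sum_{y,x}(A_{yx}-B_{yx})^2\le\sum_{y,x}|A_{yx}-B_{yx}|\max(A_{yx},B_{yx}).
\]
This is still awkward.

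Instead, use the restriction to $W$. Let $A_W=A|_W$, $B_W=B|_W$, and $\Pi=\openone-J_d/d$, so that $\mathcal J_{\C}=(A-B)\Pi=A\Pi-B\Pi$. By Lemma~\ref{lem:bistochastic-contraction}, $\|A_W\|_{\mathrm{op}}\le1$ and $\|B_W\|_{\mathrm{op}}\le1$. A naive estimate gives only $\|A_W-B_W\|_{\mathrm F}^2\le 4(d-1)$, so the key is a sharper identity.

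Use $\|A\Pi\|_{\mathrm F}^2=\|A\|_{\mathrm F}^2-1$, since $A|\boldsymbol{+}_d\rangle=|\boldsymbol{+}_d\rangle$. Then
\[
\|\mathcal J_{\C}\|_{\mathrm F}^2=\|A\|_{\mathrm F}^2+\|B\|_{\mathrm F}^2-2\Tr(A^{\mathsf T}B),
\]
because the uniform-mode contributions $1+1-2$ cancel. Now $B=\Born_{\C}(V)\Born_{\C}(U)$ and $A=\Born_{\C}(VU)$, and the compression identity~\eqref{eq:compression-identity} writes $A=B+\iota^\dagger L_VQL_U\iota$.

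Here I would switch to Hilbert--Schmidt geometry. Column $x$ of $A$, $B$ and $\mathcal J_{\C}$ is the diagonal part of $L_V L_U(P_x)$, of $L_V P_{\C} L_U(P_x)$, and of $L_V Q_{\C} L_U(P_x)$ respectively. Since $\iota^\dagger$ and $L_V$ are contractions,
\[
\|\mathcal J_{\C}e_x\|_2\le\|Q_{\C}L_U(P_x)\|_{\HS}.
\]
Moreover,
\[
\|Q_{\C}UP_xU^\dagger\|_{\HS}^2=1-\sum_z|\langle z|U|x\rangle|^4.
\]
Summing over $x$ gives
\[
\|\mathcal J_{\C}\|_{\mathrm F}^2\le d-\|\Born_{\C}(U)\|_{\mathrm F}^2\le d-1,
\]
since $\|\Born_{\C}(U)\|_{\mathrm F}^2\ge1$: a bistochastic matrix has Frobenius norm squared at least $\langle\boldsymbol{+}_d|K^{\mathsf T}K|\boldsymbol{+}_d\rangle=1$.

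This off-diagonal-sector contraction step is the main point. Its only delicate parts are checking that $\iota^\dagger L_V$ restricted to the coherence sector is indeed a contraction into $\ell^2$, which holds because $\iota^\dagger$ is a coisometry and $L_V$ is unitary on $\HS$, and computing $\|Q_{\C}(\cdot)\|_{\HS}$ for the pure state $U|x\rangle\langle x|U^\dagger$. Combined with attainment, this gives $\sup\nu_{\C}=1$.
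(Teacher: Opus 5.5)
Your proof is correct. The attainment half coincides with the paper's, but you obtain the upper bound by a different route.

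The paper also starts from $\mathcal J_{\C}(V,U)=\iota_{\C}^\dagger L_VQ_{\C}L_U\iota_{\C}$, but uses it only to get $\|\mathcal J_{\C}\|_{\mathrm{op}}\le1$. The reduction from $d$ to $d-1$ comes from $|\boldsymbol{+}_d\rangle$ being a left and right null vector of the current. Hence $\rank\mathcal J_{\C}\le d-1$ and $\|\mathcal J_{\C}\|_{\mathrm F}^2\le\rank\mathcal J_{\C}\,\|\mathcal J_{\C}\|_{\mathrm{op}}^2\le d-1$.

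You instead bound each column by the coherence content of the pure state $UP_xU^\dagger$. You then recover the $-1$ from the uniform eigenvector of $\Born_{\C}(U)$, via $\|\Born_{\C}(U)\|_{\mathrm F}^2\ge1$.

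The paper's argument is shorter and controls every singular value at once. It therefore gives the same sharp constant for other unitarily invariant norms; for instance, the trace norm is at most $d-1$.

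Your argument gives the stronger, state-dependent estimate $\|\mathcal J_{\C}(V,U)\|_{\mathrm F}^2\le d-\|\Born_{\C}(U)\|_{\mathrm F}^2=d\,\ell_2(\Born_{\C}(U))$. This ties the current to the linear entropy in the paper's full-support budget. Running the same step on $\mathcal J_{\C}^{\mathsf T}=\iota_{\C}^\dagger L_{U^\dagger}Q_{\C}L_{V^\dagger}\iota_{\C}$ gives the analogous bound with $\Born_{\C}(V)$. Consequently $\nu_{\C}(V,U)=1$ forces $\Born_{\C}(U)=\Born_{\C}(V)=J_d/d$, i.e.\ both steps are complex Hadamard. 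That is a converse to the attainment claim which the rank argument does not directly give.

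The exploratory detours in the middle of your write-up (the $\Tr(A^{\mathsf T}B)$ expansion and the naive $4(d-1)$ estimate) are harmless but unused, and can be deleted.
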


\begin{proof}[Proof of Theorem~\ref{thm:bound-main}]
Equation~\eqref{eq:compression-identity} gives
\begin{equation}
 \mathcal J_{\C}(V,U)
 =\iota_{\C}^\dagger L_VQ_{\C}L_U\iota_{\C}.
 \label{eq:frobenius-compression}
\end{equation}
Here $\iota_{\C}$ is an isometry, $L_U$ and $L_V$ are Hilbert--Schmidt
unitaries, and $Q_{\C}$ is an orthogonal projection.

By Remark~\ref{rem:standard-notation-app}, these contractions have
operator norm one, and hence
\begin{equation}
 \left\|\mathcal J_{\C}(V,U)\right\|_{\mathrm{op}}\leq1.
 \label{eq:current-op-bound}
\end{equation}
Both $K_{\rm end}$ and $K_{\rm step}$ in
Eq.~\eqref{eq:endpoint-stepwise} are bistochastic.  Hence the normalized
equal-amplitude ket and its dual bra are respectively right and left null
vectors of their difference:
\begin{equation}
 \mathcal J_{\C}(V,U)|\boldsymbol{+}_d\rangle=0,
 \qquad
 \langle\boldsymbol{+}_d|\mathcal J_{\C}(V,U)=0.
 \label{eq:current-null-vectors}
\end{equation}
Consequently $\operatorname{rank}\mathcal J_{\C}(V,U)\leq d-1$.  Summing the
squares of the singular values gives the standard rank--norm inequality
\begin{equation}
 \left\|\mathcal J_{\C}(V,U)\right\|_{\mathrm F}^2
 \leq\operatorname{rank}\mathcal J_{\C}(V,U)
 \left\|\mathcal J_{\C}(V,U)\right\|_{\mathrm{op}}^2
 \leq d-1.
\end{equation}
This proves the upper bound.

Let $F$ be complex Hadamard and take $U=F$, $V=F^\dagger$.  Then
\begin{equation}
 \Born(V\circ U)=\openone,
 \qquad
 \Born(U)=\Born(V)=\frac{J_d}{d},
 \qquad
 \mathcal J_{\C}(V,U)=\openone-\frac{1}{d}J_d.
\end{equation}
\noindent
The final matrix is the orthogonal projection onto the $(d-1)$-dimensional
subspace orthogonal to $|\boldsymbol{+}_d\rangle$.  Its Frobenius norm is therefore
$\sqrt{d-1}$, so $\nu_{\C}(F^\dagger,F)=1$.
\end{proof}

\noindent\par
\subsection{Why a boundary-stable qubit sequence has only two genuinely mixing steps}
\noindent

We use the defined qubit term ``genuinely mixing'' throughout.

\begin{definition}[Bloch rotation and cumulative axes]
\label{def:cumulative-bloch-axes-app}
For a qubit unitary $U$, let $R_U\in\mathrm{SO}(3)$ be its Bloch-sphere
rotation, defined by
\begin{equation}
 U^\dagger(\boldsymbol a\boldsymbol\cdot\boldsymbol\sigma)U
 =(R_U^{\mathsf T}\boldsymbol a)\boldsymbol\cdot\boldsymbol\sigma,
 \qquad
 \boldsymbol\sigma=(\sigma_x,\sigma_y,\sigma_z),
 \qquad
 \boldsymbol z=(0,0,1)^{\mathsf T}.
 \label{eq:bloch-rotation-convention}
\end{equation}
Taking the expectation in $|0\rangle$ gives
\begin{equation}
 \beta(U)=\boldsymbol z\cdot R_U^{\mathsf T}\boldsymbol z,
 \label{eq:bloch-polarization}
\end{equation}
because $\langle0|\boldsymbol a\cdot\boldsymbol\sigma|0\rangle
=\boldsymbol z\cdot\boldsymbol a$ for every
$\boldsymbol a\in\mathbb R^3$.  Define the cumulative Bloch axes
\begin{equation}
 \boldsymbol b_0:=\boldsymbol z,
 \qquad
 \boldsymbol b_k:=R_{U_{k:1}}^{\mathsf T}\boldsymbol z\in\R^3,
 \quad 1\leq k\leq L.
 \label{eq:bloch-axis-vectors}
\end{equation}
Thus $\boldsymbol b_0$ is the initial context axis, and
$\boldsymbol b_k$ is that axis pulled back through the first $k$ steps.  For
$1\leq a\leq b\leq L$,
\begin{equation}
 \beta(U_{b:a})
 =\boldsymbol b_b\cdot\boldsymbol b_{a-1},
 \label{eq:segment-polarization}
\end{equation}
and
\begin{equation}
 \beta_k=
 \boldsymbol b_k\cdot\boldsymbol b_{k-1}.
\end{equation}
\end{definition}
Boundary stability is therefore equivalent to the Gram conditions
\begin{equation}
 \boldsymbol b_b\cdot\boldsymbol b_{a-1}
 =\prod_{j=a}^{b}\beta_j
 \qquad(1\leq a\leq b\leq L).
 \label{eq:markov-gram}
\end{equation}

\begin{definition}[Qubit frame innovation]
\label{def:qubit-frame-innovation-app}
For \(1\leq k\leq L\), the qubit frame innovation at step \(k\) is
\begin{equation}
 \boldsymbol e_k:=
 \boldsymbol b_k-\beta_k\boldsymbol b_{k-1}.
 \label{eq:qubit-frame-innovation-app}
\end{equation}
It is the component of the new cumulative Bloch axis not predicted by the
immediately preceding axis and its one-step polarization.
\end{definition}

\begin{theorem}[Qubit boundary-stability bound]
\label{thm:qubit-boundary-stability-app}
Every boundary-stable qubit sequence contains at most two genuinely mixing
steps.
\end{theorem}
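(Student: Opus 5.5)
The plan is to work entirely with the cumulative Bloch axes $\boldsymbol b_0,\ldots,\boldsymbol b_L$ of Definition~\ref{def:cumulative-bloch-axes-app} and the frame innovations $\boldsymbol e_k$ of Definition~\ref{def:qubit-frame-innovation-app}. The idea is that boundary stability, rewritten as the Gram conditions~\eqref{eq:markov-gram}, forces the nonzero innovations to be mutually orthogonal vectors in the plane $\boldsymbol z^\perp\subset\R^3$. Such a plane has room for at most two of them.

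\textbf{Step 1.} I will show that each innovation is orthogonal to every earlier axis: $\boldsymbol e_k\cdot\boldsymbol b_j=0$ for $0\leq j\leq k-1$. Apply~\eqref{eq:markov-gram} with $b=k$ and $a=j+1$, and again with $b=k-1$ and $a=j+1$. This gives
\[
\boldsymbol b_k\cdot\boldsymbol b_j=\prod_{i=j+1}^{k}\beta_i=\beta_k\prod_{i=j+1}^{k-1}\beta_i=\beta_k\,\boldsymbol b_{k-1}\cdot\boldsymbol b_j .
\]
When $j=k-1$ the empty product equals $1$, consistent with $\boldsymbol b_{k-1}\cdot\boldsymbol b_{k-1}=1$. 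Subtracting yields $\boldsymbol e_k\cdot\boldsymbol b_j=0$. In particular $j=0$ gives $\boldsymbol e_k\perp\boldsymbol z$ for every $k$.

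\textbf{Step 2.} Since $\boldsymbol e_j=\boldsymbol b_j-\beta_j\boldsymbol b_{j-1}$ lies in $\operatorname{span}(\boldsymbol b_0,\ldots,\boldsymbol b_j)$, Step~1 gives $\boldsymbol e_k\cdot\boldsymbol e_j=0$ whenever $j<k$. Thus $\boldsymbol e_1,\ldots,\boldsymbol e_L$ are pairwise orthogonal vectors lying in the two-dimensional plane $\boldsymbol z^\perp$.

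\textbf{Step 3.} I will identify genuine mixing with a nonzero innovation. Using $|\boldsymbol b_k|=|\boldsymbol b_{k-1}|=1$ and $\boldsymbol b_k\cdot\boldsymbol b_{k-1}=\beta_k$, a direct expansion gives $|\boldsymbol e_k|^2=1-\beta_k^2$. By Definition~\ref{def:qubit-kernel-coordinate-app}, step $k$ is genuinely mixing exactly when $|\beta_k|<1$, that is, exactly when $\boldsymbol e_k\neq0$.

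\textbf{Conclusion.} The genuinely mixing steps correspond to nonzero, pairwise orthogonal vectors in a two-dimensional plane. There can be at most two such vectors, so there are at most two genuinely mixing steps.

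I expect no serious obstacle. The only delicate point is the index bookkeeping in Step~1, namely matching the inclusive step interval $[a,b]=[j+1,k]$ to the boundary pair $(j,k)$ via~\eqref{eq:segment-polarization}, and checking the edge case $j=k-1$. Step~1 uses only intervals ending at $k$ and $k-1$, so full boundary stability is genuinely needed: prefix consistency would supply only the case $j=0$.
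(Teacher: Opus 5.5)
Your proof is correct and follows the paper's argument. It uses the same innovations $\boldsymbol e_k=\boldsymbol b_k-\beta_k\boldsymbol b_{k-1}$, obtains their orthogonality to all earlier axes from the Gram conditions~\eqref{eq:markov-gram}, and uses $\|\boldsymbol e_k\|^2=1-\beta_k^2$; the only cosmetic difference is that you count nonzero, pairwise orthogonal innovations in the plane $\boldsymbol z^\perp$, whereas the paper counts how the span of $\boldsymbol b_0,\boldsymbol b_1,\ldots$ grows from dimension one inside $\R^3$, which is the same dimension count.
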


\begin{proof}

For the innovations in
Definition~\ref{def:qubit-frame-innovation-app},
\begin{equation}
 \boldsymbol e_k=
 \boldsymbol b_k-\beta_k\boldsymbol b_{k-1}.
\end{equation}
For every $0\le r\le k-1$,
Eq.~\eqref{eq:markov-gram} gives
\begin{align}
 \boldsymbol e_k\cdot\boldsymbol b_r
 &=
   \boldsymbol b_k\cdot\boldsymbol b_r
   -\beta_k\boldsymbol b_{k-1}\cdot\boldsymbol b_r\\
 &=0.
\end{align}
Also
\begin{equation}
 \|\boldsymbol e_k\|^2=1-\beta_k^2.
\end{equation}
Thus every genuinely mixing step, characterized by $|\beta_k|<1$, adds a
new nonzero direction orthogonal to the span of all previous
$\boldsymbol b_r$.  Starting from the one-dimensional span of
$\boldsymbol b_0$ and remaining inside
$\R^3$, there is room for at most two
such innovations.  Hence at most two steps can genuinely mix.
\end{proof}

Equivalently, the Gram matrix in Eq.~\eqref{eq:markov-gram} is the correlation
matrix of the signed two-state population mode and has rank
\begin{equation}
 1+\left|\{k:|\beta_k|<1\}\right|,
\end{equation}
which cannot exceed three for Bloch vectors.

\subsection{Full-support capacity and prime-dimensional saturation}
\label{app:prime-count}

This subsection proves the full-support step-count bound; primality is
used only to attain it.
By Definition~\ref{def:support-mixing-app}, a step \(U_k\) has full
support precisely when
\begin{equation}
 \Born(U_k)_{yx}>0
 \qquad\text{for every }x,y.
 \label{eq:full-support-note}
\end{equation}
For qubits, full support is exactly the defined
genuinely mixing condition.
The upper bound holds in every finite dimension; primality enters only in
the sharp construction below.

\subsubsection{Context frames and dimension accounting}
\noindent

\begin{definition}[Population-deviation space and context frames]
\label{def:context-frames-app}
Let
\begin{equation}
 \mathcal R_d^0
 :=\left\{v\in\R^d:\sum_xv_x=0\right\}
 \label{eq:population-deviation-space}
\end{equation}
be the $(d-1)$-dimensional space of population deviations from the uniform
distribution.  \Needspace{5\baselineskip}
Embed it isometrically in the real Hilbert space
$\mathsf{Herm}_0(d)$ of traceless Hermitian operators by
\begin{equation}
\operatorname{Diag}_{\C}(v):=\sum_xv_x|x\rangle\langle x|.
 \label{eq:diagonal-embedding}
\end{equation}
Here $|x\rangle$ denotes the $x$th \ fixed context-basis vector, so
$|x\rangle\langle x|$ is the matrix unit with its only nonzero entry in
position $(x,x)$; thus $\operatorname{Diag}_{\C}(v)$ is simply the usual
diagonal matrix with diagonal $v$.
The inner products are Euclidean and Hilbert--Schmidt, respectively, and
\begin{equation}
 \dim_{\R}\mathsf{Herm}_0(d)=d^2-1.
 \label{eq:traceless-dimension}
\end{equation}

Define the context-frame isometries of the cumulative passages by
\begin{equation}
 \begin{aligned}
  &F_k:\mathcal R_d^0\longrightarrow\mathsf{Herm}_0(d),\qquad
    0\leq k\leq L,\\
  &F_0(v):=\operatorname{Diag}_{\C}(v),\\
  &F_k(v):=U_{k:1}^\dagger
    \operatorname{Diag}_{\C}(v)U_{k:1},\qquad
    1\leq k\leq L.
 \end{aligned}
 \label{eq:cumulative-context-frames}
\end{equation}
The map $\operatorname{Diag}_{\C}$ is an isometry and unitary conjugation
preserves the Hilbert--Schmidt inner product, so
$F_k$ identifies a zero-sum population deviation with the
corresponding fixed-context diagonal observable pulled back through the first
$k$ steps.
\end{definition}

For $1\leq a\leq b\leq L$, direct
evaluation of the Hilbert--Schmidt inner
product gives, for $v,w\in\mathcal R_d^0$,
\begin{equation}
 \begin{aligned}
  &\langle F_b(v),F_{a-1}(w)\rangle_{\HS}
    =\sum_{x,y}v_xw_y
    \left|\langle x|U_{b:a}|y\rangle\right|^2\\
  &=\langle v,\Born(U_{b:a})w\rangle_2.
 \end{aligned}
 \label{eq:frame-overlap-calculation}
\end{equation}
Equivalently,
\begin{equation}
 F_b^*\circ F_{a-1}
 =\left.\Born(U_{b:a})\right|_{\mathcal R_d^0}.
 \label{eq:frame-born-overlap}
\end{equation}
\begin{definition}[One-step frame transition]
\label{def:frame-transition-app}
Here $F_b^*$ is the adjoint between real
inner-product spaces.  Put
\begin{equation}
 T_k:=F_k^*\circ F_{k-1}
 =\left.\Born(U_k)\right|_{\mathcal R_d^0}.
 \label{eq:one-note-restriction}
\end{equation}
Boundary stability preserves the zero-sum population subspace because every Born
kernel is bistochastic.  Thus $T_k$ is the signed-population restriction of a
stochastic kernel; it is a linear contraction on $\mathcal R_d^0$, not itself
a probability transition on a simplex.
\end{definition}
Boundary stability is therefore precisely the block Gram rule
\begin{equation}
 F_b^*\circ F_{a-1}=T_b\circ T_{b-1}\circ\cdots\circ T_a
 \qquad(1\leq a\leq b\leq L).
 \label{eq:block-gram-rule}
\end{equation}

\begin{definition}[Context-frame innovation]
\label{def:context-frame-innovation-app}
The \emph{context-frame innovation} is the part of the next frame not
predicted by its immediate predecessor:
\begin{equation}
 E_k:=F_k-F_{k-1}\circ T_k^*.
 \label{eq:innovation-map}
\end{equation}
Thus $E_k(v)$ is the residual diagonal observable after the preceding
context frame has accounted for the linear prediction $T_k^*v$.  Its
\emph{innovation rank} is
\begin{equation}
 r_k^{\rm inn}:=\rank(E_k).
 \label{eq:innovation-rank-definition-app}
\end{equation}
\end{definition}

For every
$0\le r\le k-1$,
Eq.~\eqref{eq:block-gram-rule} yields
\begin{align}
 E_k^*\circ F_r
   =F_k^*\circ F_r-T_k\circ F_{k-1}^*\circ F_r\nonumber\\
 &=T_k\circ T_{k-1}\circ\cdots\circ T_{r+1}
   -T_k\circ T_{k-1}\circ\cdots\circ T_{r+1}=0.
 \label{eq:innovation-earlier-orthogonal}
\end{align}
Thus every innovation is orthogonal to all earlier context frames.  Since
\begin{equation*}
 \operatorname{ran}E_j\subseteq
 \operatorname{ran}F_j+\operatorname{ran}F_{j-1},
\end{equation*}
the ranges of
$E_1,\ldots,E_L$ are mutually orthogonal and are orthogonal to
$\operatorname{ran}F_0$.  Expanding
the definition and using
\begin{equation*}
 \begin{aligned}
  &F_k^*F_k=F_{k-1}^*F_{k-1}=\openone,\\
  &F_k^*F_{k-1}=T_k,
 \end{aligned}
\end{equation*}
gives
\begin{equation}
 E_k^*\circ E_k=\openone-T_k\circ T_k^*.
 \label{eq:innovation-gram}
\end{equation}

Every bistochastic matrix is a Euclidean contraction by
Eq.~\eqref{eq:bistochastic-contraction}.
Hence $\openone-T_k\circ T_k^*$ is positive semidefinite.
\hspace{0.25em}
Equation~\eqref{eq:innovation-gram} gives
\begin{equation}
 r_k^{\rm inn}=\rank(\openone-T_k\circ T_k^*)=\rank(E_k).
 \label{eq:innovation-rank}
\end{equation}
The original frame
$\operatorname{ran}F_0$ consumes
$d-1$ dimensions of
$\mathsf{Herm}_0(d)$, and its mutually orthogonal innovations consume
$\sum_kr_k^{\rm inn}$ more.  Therefore
\begin{equation}
 (d-1)+\sum_{k=1}^{L}r_k^{\rm inn}\le d^2-1,
 \label{eq:ambient-dimension-count}
\end{equation}
and hence
\begin{equation}
 \sum_{k=1}^{L}r_k^{\rm inn}\le d(d-1).
 \label{eq:weighted-note-count}
\end{equation}

The same innovation count has an information-theoretic form.
\begin{definition}[Mean conditional linear entropy]
\label{def:mean-linear-entropy-app}
For a bistochastic kernel $K$, define its mean conditional linear entropy by
\begin{equation}
 \ell_2(K)
 :=\frac1d\sum_x\left(1-\sum_yK_{yx}^2\right)
 =1-\frac{\|K\|_{\mathrm F}^2}{d}.
 \label{eq:kernel-linear-entropy-app}
\end{equation}
This is a linear, or Tsallis--2, entropy.  Equivalently, it is one minus the
collision probability, averaged over uniformly sampled entrances.  It is
dimensionless.
\end{definition}

\begin{sloppypar}
\begin{proposition}[Finite linear-entropy budget]
\label{prop:linear-entropy-budget-app}
Every boundary-stable $d$-dimensional sequence obeys
\begin{equation}
 \sum_{k=1}^{L}\ell_2\!\left(\Born(U_k)\right)\le d-1.
 \label{eq:linear-entropy-budget-app}
\end{equation}
The prime-dimensional construction in
Theorem~\ref{thm:sharp-prime-classification-app} attains equality.
\end{proposition}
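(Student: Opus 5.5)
The plan is to turn the innovation-rank count of Eq.~\eqref{eq:ambient-dimension-count} into a trace count and then express that trace through the linear entropy. First I upgrade rank to trace. By Eq.~\eqref{eq:innovation-gram}, $E_k^*E_k=\openone-T_kT_k^*$ on $\mathcal R_d^0$. This operator is positive semidefinite. Its eigenvalues lie in $[0,1]$, because $T_k$ is a contraction by Lemma~\ref{lem:bistochastic-contraction}. Hence
\[
\operatorname{Tr}(\openone-T_kT_k^*)\le\rank(\openone-T_kT_k^*)=r_k^{\rm inn}.
\]

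A sharper route avoids ranks altogether and works with the orthogonal ranges directly. The ranges of $E_1,\ldots,E_L$ and of $F_0$ are mutually orthogonal in the $(d^2-1)$-dimensional space $\mathsf{Herm}_0(d)$. I will use this to bound $\sum_k\operatorname{Tr}(E_kE_k^*)$ by a dimension count. One option is the orthogonal projection $\Pi$ onto $\operatorname{ran}F_0^\perp$: since $\Pi E_k=E_k$, we get $\sum_k\operatorname{Tr}(E_k^*E_k)\le\sum_k\rank E_k$. This is weaker than I want, so I will instead compare with a fixed family.

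Each $E_k$ maps into a space that splits into a diagonal part and an off-diagonal part. $E_k^*E_k$ is a contraction, and the $E_k$ have orthogonal ranges inside the $d(d-1)$-dimensional orthocomplement $\operatorname{ran}F_0^\perp$. Therefore the operator $\sum_kE_kE_k^*\le\Pi$, since a sum of partial contractions with orthogonal ranges is itself a contraction. Taking traces gives
\[
\sum_k\operatorname{Tr}(\openone-T_kT_k^*)\le d(d-1).
\]

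It remains to evaluate the trace. Write $K=\Born(U_k)$, which is bistochastic, so $K$ maps the uniform vector to itself and maps $\mathcal R_d^0$ into itself. Then
\[
\operatorname{Tr}(T_kT_k^*)=\|K\|_{\mathrm F}^2-1.
\]
Combining this with $\dim\mathcal R_d^0=d-1$ gives
\[
\operatorname{Tr}(\openone-T_kT_k^*)=(d-1)-\|K\|_{\mathrm F}^2+1=d-\|K\|_{\mathrm F}^2=d\,\ell_2(K).
\]
Dividing the trace bound by $d$ then yields Eq.~\eqref{eq:linear-entropy-budget-app}.

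The main obstacle is the operator inequality $\sum_kE_kE_k^*\le\Pi$. Each $E_kE_k^*$ is positive with norm at most $1$, because $E_k^*E_k\le\openone$. It is supported on $\operatorname{ran}E_k$, and these ranges are orthogonal to one another and to $\operatorname{ran}F_0$. So the sum is block-diagonal with contraction blocks, and I will check this carefully.

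The equality claim for the prime construction is deferred to Theorem~\ref{thm:sharp-prime-classification-app}. There every step is complex Hadamard, so $\ell_2=(d-1)/d$, and there are $d$ steps, giving a total of exactly $d-1$.
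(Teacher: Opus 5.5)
Your proposal is correct and follows essentially the paper's proof. The shared steps are:
\begin{itemize}
\item the identity $\Tr(\openone-T_kT_k^*)=d\,\ell_2(\Born(U_k))$, obtained by splitting the bistochastic kernel off the uniform mode;
\item the bound of this trace by $r_k^{\rm inn}=\rank E_k$;
\item the orthogonal dimension count $\sum_k r_k^{\rm inn}\le d(d-1)$;
\item equality from the $d$ uniform kernels $J_d/d$ in the prime construction.
\end{itemize}
Your operator inequality $\sum_kE_kE_k^*\le\Pi$ is not sharper than the rank route you set aside as ``weaker''. Taking its trace gives exactly $\sum_k\Tr(E_k^*E_k)\le\sum_k\rank E_k\le d(d-1)$, which is the paper's argument repackaged.
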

\end{sloppypar}

\begin{proof}
The uniform mode contributes one to $\|\Born(U_k)\|_{\mathrm F}^2$, and its
restriction to the orthogonal zero-sum sector is $T_k$.  Therefore
Eqs.~\eqref{eq:innovation-gram} and \eqref{eq:kernel-linear-entropy-app} give
\begin{equation}
 d\,\ell_2\!\left(\Born(U_k)\right)
 =\Tr(\openone-T_kT_k^*)
 =\|E_k\|_{\mathrm F}^2
 \leq r_k^{\rm inn}.
 \label{eq:linear-entropy-innovation-app}
\end{equation}
The last inequality holds because the positive contraction
$\openone-T_kT_k^*$ has eigenvalues in $[0,1]$.  Summing and using
Eq.~\eqref{eq:weighted-note-count} proves the bound.  In the prime
construction, every one-step kernel is $J_d/d$, so every term equals
$1-1/d$ and the $d$ terms sum to $d-1$.
\end{proof}

If $K=\Born(U_k)$ has full support, then $KK^{\mathsf T}$ is a strictly
positive bistochastic matrix.  The Perron--Frobenius theorem makes its
eigenvalue $1$ simple, with the uniform vector as its eigendirection.  The
orthogonal complement of that vector is precisely $\mathcal R_d^0$, so every
singular value of
$T_k=K|_{\mathcal R_d^0}$ is strictly smaller than one.  It follows that
$\openone-T_k\circ T_k^*$ is positive definite on $\mathcal R_d^0$ and
\begin{equation}
 r_k^{\rm inn}=d-1.
 \label{eq:full-support-full-innovation}
\end{equation}
Each full-support step therefore consumes $d-1$ dimensions of the budget
in Eq.~\eqref{eq:weighted-note-count}; at most $d$ such steps can occur.

\Needspace{19\baselineskip}
\noindent\par
\subsubsection{Sharp full-support capacity theorem}

\noindent\par
\begin{theorem}[Sharp full-support capacity]
\label{thm:sharp-prime-classification-app}
Let $d\ge2$ and let
$U_1,\ldots,U_L$ be unitary operations on $\mathbb C^d$.  Suppose every
contiguous interval obeys the Born-kernel composition law:
\begin{equation}
 \begin{aligned}
  \Born(U_{b:a})
  &=\Born(U_b)\circ\cdots\circ\Born(U_a),\\[-1pt]
  &\hspace{1em}(1\leq a\leq b\leq L).
 \end{aligned}
 \label{eq:prime-classification-hypothesis}
\end{equation}
Then at most $d$ one-step kernels \(\Born(U_k)\) have full support.
For every prime $d$, this bound is attained by a boundary-stable sequence of
exactly $d$ full-support steps, for which every nonempty interval satisfies
\begin{equation}
 \begin{aligned}
  \Born(U_{b:a})
  &=\Born(U_b)\circ\cdots\circ\Born(U_a)\\[2pt]
  &=|\boldsymbol{+}_d\rangle\langle\boldsymbol{+}_d|,\\[-1pt]
  &\hspace{1em}(1\leq a\leq b\leq d).
 \end{aligned}
 \label{eq:prime-classification-conclusion}
\end{equation}
where $|\boldsymbol{+}_d\rangle$ is the normalized equal-amplitude context ket.
\end{theorem}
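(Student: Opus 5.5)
The upper bound is essentially assembled in the preceding subsection, so I would begin by collecting it. Under the hypothesis in Eq.~\eqref{eq:prime-classification-hypothesis}, the block Gram rule Eq.~\eqref{eq:block-gram-rule} holds. By Eq.~\eqref{eq:innovation-earlier-orthogonal}, the innovation ranges $\operatorname{ran}E_1,\ldots,\operatorname{ran}E_L$ are mutually orthogonal and orthogonal to $\operatorname{ran}F_0$, which gives the budget Eq.~\eqref{eq:weighted-note-count}: $\sum_k r_k^{\rm inn}\le d(d-1)$. The Perron--Frobenius argument preceding the theorem shows that a full-support step has $r_k^{\rm inn}=d-1$, i.e.\ Eq.~\eqref{eq:full-support-full-innovation}. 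Since all $r_k^{\rm inn}\ge0$, the number $N$ of full-support steps satisfies $N(d-1)\le d(d-1)$, so $N\le d$. The only point to check is that the Perron--Frobenius step genuinely needs only full support of $K$, so that $KK^{\mathsf T}$ is strictly positive; it does not need any property of the other steps.

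For attainment I would translate the Gram rule into a statement about bases. Write the cumulative passage as $U_{k:1}=W_k^\dagger$ with $W_0=\openone$, where the columns of $W_k$ form an orthonormal basis $\mathcal B_k$. The steps are then forced to be $U_k=W_k^\dagger W_{k-1}$, and every contiguous passage is
\[
 U_{b:a}=W_b^\dagger W_{a-1}.
\]
Consequently $\Born(U_{b:a})_{yx}=|\langle w^{(b)}_y|w^{(a-1)}_x\rangle|^2$. If $\mathcal B_0,\ldots,\mathcal B_d$ are pairwise mutually unbiased, every such entry equals $1/d$ whenever $a-1\neq b$, which always holds. Thus every interval kernel is $J_d/d=|\boldsymbol{+}_d\rangle\langle\boldsymbol{+}_d|$. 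Since $(J_d/d)^m=J_d/d$, the product of one-step kernels agrees with it, which is exactly Eq.~\eqref{eq:prime-classification-conclusion}. Each step then has the full-support kernel $J_d/d$, giving $d$ full-support steps.

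The remaining, and main, step is to exhibit $d+1$ mutually unbiased bases including the computational one when $d$ is prime. I would use the Wootters--Fields/Ivanovi\'c construction. For odd prime $d$, with $\omega=e^{2\pi\ii/d}$, take $W_j=D_jF_d$ for $j=1,\ldots,d$, where $D_j=\operatorname{diag}(\omega^{(j-1)x^2})$. Each $W_j$ is complex Hadamard, so $\mathcal B_j$ is unbiased to $\mathcal B_0$. Unbiasedness of $\mathcal B_j$ and $\mathcal B_{j'}$ for $j\neq j'$ reduces to $|F_d^\dagger\operatorname{diag}(\omega^{c x^2})F_d|$ having constant entries $1/\sqrt d$ when $c\not\equiv0$. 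This is a quadratic Gauss sum, and its modulus $\sqrt d$ uses that $2c$ is invertible mod $d$; this is where primality enters. For $d=2$, I would instead treat the case by hand with the Pauli eigenbases, taking $W_1=H$ and $W_2=\operatorname{diag}(1,\ii)H$. The Gauss-sum evaluation is the one genuinely computational point; I would either quote it with a reference or verify it by completing the square.
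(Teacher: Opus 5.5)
Your proposal is correct and follows essentially the same route as the paper. The upper bound comes from the innovation budget $\sum_k r_k^{\rm inn}\le d(d-1)$, with $r_k^{\rm inn}=d-1$ for every full-support step by Perron--Frobenius. Attainment uses cumulative mutually unbiased frames $U_{k:1}=W_k^\dagger$, which telescope so that every interval kernel is the idempotent projector $|\boldsymbol{+}_d\rangle\langle\boldsymbol{+}_d|$. Your explicit $D_jF_d$ bases with the Gauss-sum check, and the Pauli eigenbases for $d=2$, match the construction the paper records after its proof.
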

Here $d$ is the Hilbert-space dimension, whereas $L$ is the independent
number of prescribed steps.  In the saturating construction $L=d$, so its
$d$ steps have $d+1$ boundary basis frames.

\begin{proof}
If $m$
one-step kernels have full support, then
\(m(d-1)\le\sum_kr_k^{\rm inn}\le d(d-1)\), hence \(m\le d\).

For prime $d$, to attain the bound, choose a complete family of $d+1$
mutually unbiased
bases~\cite{app:WoottersFields}, beginning with the computational basis.  Write
$V_1,V_2,\ldots,V_{d+1}$ for their unitary basis maps, where
$V_a|x\rangle$ is the $x$th vector of basis $a$ and $V_1=\openone$.  Thus,
whenever $a\ne b$,
\begin{equation}
 \left|\langle x|V_a^\dagger V_b|y\rangle\right|^2=\frac1d
 \qquad\text{for every }x,y.
 \label{eq:mub-overlap}
\end{equation}
Use these bases as cumulative basis frames:
\begin{align}
 U_k
 &:=
 V_{k+1}^\dagger\circ V_k,\nonumber\\
 U_{k:1}
 &=V_{k+1}^\dagger,
 \qquad 1\le k\le d.
 \label{eq:mub-notes}
\end{align}
For every nonempty contiguous interval, the intermediate factors telescope,
\begin{equation}
 U_{b:a}=
 V_{b+1}^\dagger\circ V_a,
 \qquad
 1\leq a\leq b\leq d,
 \label{eq:mub-telescope}
\end{equation}
and the distinct endpoint bases are mutually unbiased.  Therefore
\begin{equation}
 \Born(U_{b:a})=|\boldsymbol{+}_d\rangle\langle\boldsymbol{+}_d|.
 \label{eq:mub-interval-kernel}
\end{equation}
Every one-step kernel is the same projector, which is idempotent:
\begin{equation}
 \left(|\boldsymbol{+}_d\rangle\langle\boldsymbol{+}_d|\right)^m
 =|\boldsymbol{+}_d\rangle\langle\boldsymbol{+}_d|
 \qquad(m\ge1).
 \label{eq:uniform-idempotence}
\end{equation}
The Born composition law consequently holds on every interval.  All $d$
steps have full support, so the upper bound is attained.
\end{proof}

For completeness, when $d$ is an odd prime, let $\mathbb F_d$ be the finite
field with $d$ elements and write
$\omega=e^{2\pi\ii/d}$.  Besides the computational basis, one may take the
$d$ bases
\begin{equation}
 |a,b\rangle
 :=\frac1{\sqrt d}\sum_{x\in\mathbb F_d}
   \omega^{ax^2+bx}|x\rangle,
 \qquad a,b\in\mathbb F_d.
 \label{eq:prime-mub-bases}
\end{equation}
For fixed $a$ these vectors are orthonormal, while distinct values of $a$
give quadratic Gauss sums of magnitude $\sqrt d$.  For $d=2$, the
eigenbases of $\sigma_z$, $\sigma_x$, and $\sigma_y$ supply the three mutually unbiased
bases and hence two steps.

\section{The analytic Walsh three-step circuit}
\label{app:walsh}

This section gives the full verification of
the analytic circuit and the composition identities
stated locally in Theorem~\ref{thm:walsh-composition-app}.
For self-contained checking, define
\begin{equation}
 H=\frac{\sigma_x+\sigma_z}{\sqrt2},
 \qquad W=H\otimes H,
 \qquad
 \Phi(a,b,c)=\operatorname{diag}
 \bigl(1,e^{\ii\pi a},e^{\ii\pi b},e^{\ii\pi c}\bigr),
 \label{eq:walsh-gates-app}
\end{equation}
and let $C_{\rm NOT}$ be CNOT with the first qubit controlling the second.
For real phase parameters $\xi,\eta,\zeta$, put
\begin{align}
 Q_1&=W\circ\Phi(\xi,0,\xi)\circ W
       \circ\Phi(1/2,0,1/2)\circ W,\nonumber\\
 Q_2&=W\circ\Phi(-1/2,1/2,-1)\circ W
       \circ\Phi(1/2-\xi,1/2,-\xi)\circ W,\nonumber\\
 Q_3&=W\circ\Phi(1/2,0,1/2)\circ W
       \circ\Phi(\eta,\zeta,\zeta-\eta)\circ W,\nonumber\\
 U_1&=C_{\rm NOT}\circ Q_1,\qquad
 U_2=C_{\rm NOT}\circ Q_2\circ C_{\rm NOT},\qquad
 U_3=C_{\rm NOT}\circ Q_3\circ C_{\rm NOT}.
 \label{eq:walsh-circuit-app}
\end{align}

These definitions make the construction self-contained.

This all-interval family supplies the cue-stable counterpart to the
schedule-selective audible construction in Section~\ref{sec:quantum-music}.
Here $W$, without a subscript, is the normalized Walsh transform and satisfies
$W^{\mathsf T}=W=W^{-1}$.
At each boundary the final CNOT
of one step is adjacent to the initial CNOT of the next and the pair squares
to the identity.  The gates are retained because the step boundaries and
every contiguous subinterval are part of the claim.

Figure~\ref{fig:walsh-circuit-app} expands the three steps at gate
level and compares the endpoint-only and stepwise-checking protocols.
\begin{figure}[H]
\centering
\begin{minipage}{0.96\linewidth}
\centering
\footnotesize
\textbf{(a) Compiled three-step passage}\par\vspace{0.2em}
$U_1=C_{\rm NOT}\circ Q_1$\par\vspace{-0.4em}
\begin{quantikz}[row sep=0.25cm,column sep=0.30cm]
 \lstick{$q_0$} & \gate{H} & \gate[2]{\Phi(1/2,0,1/2)} & \gate{H}
  & \gate[2]{\Phi(\xi,0,\xi)} & \gate{H} & \ctrl{1} & \qw \\
 \lstick{$q_1$} & \gate{H} & {} & \gate{H}
  & {} & \gate{H} & \targ{} & \qw
\end{quantikz}

\vspace{0.5em}
$U_2=C_{\rm NOT}\circ Q_2\circ C_{\rm NOT}$\par\vspace{-0.4em}
\begin{quantikz}[row sep=0.25cm,column sep=0.22cm]
 \lstick{$q_0$} & \ctrl{1} & \gate{H}
  & \gate[2]{\Phi(1/2-\xi,1/2,-\xi)} & \gate{H}
  & \gate[2]{\Phi(-1/2,1/2,-1)} & \gate{H} & \ctrl{1} & \qw \\
 \lstick{$q_1$} & \targ{} & \gate{H} & {} & \gate{H}
  & {} & \gate{H} & \targ{} & \qw
\end{quantikz}

\vspace{0.5em}
$U_3=C_{\rm NOT}\circ Q_3\circ C_{\rm NOT}$\par\vspace{-0.4em}
\begin{quantikz}[row sep=0.25cm,column sep=0.28cm]
 \lstick{$q_0$} & \ctrl{1} & \gate{H} & \gate[2]{\Phi(\eta,\zeta,\zeta-\eta)} & \gate{H}
  & \gate[2]{\Phi(1/2,0,1/2)} & \gate{H} & \ctrl{1} & \qw \\
 \lstick{$q_1$} & \targ{} & \gate{H} & {} & \gate{H}
  & {} & \gate{H} & \targ{} & \qw
\end{quantikz}

\vspace{0.65em}
\textbf{(b) Endpoint-only and stepwise checking protocols}\par\vspace{0.25em}
\resizebox{0.86\linewidth}{!}{%
\begin{minipage}{\linewidth}
\centering
\textit{Endpoint-only check:}\par\vspace{-0.25em}
\begin{quantikz}[row sep=0.27cm,column sep=0.34cm]
 \lstick{$q_0$} & \gate[2]{U_1} & \gate[2]{U_2} & \gate[2]{U_3}
  & \meter{} & \qw \\
 \lstick{$q_1$} & {} & {} & {} & \meter{} & \qw
\end{quantikz}

\vspace{0.4em}
\textit{Stepwise check:}\par\vspace{-0.25em}
\begin{quantikz}[row sep=0.27cm,column sep=0.28cm]
 \lstick{$q_0$} & \gate[2]{U_1} & \gate[2]{\Mc}
  & \gate[2]{U_2} & \gate[2]{\Mc}
  & \gate[2]{U_3} & \meter{} & \qw \\
 \lstick{$q_1$} & {} & {}
  & {} & {}
  & {} & \meter{} & \qw
\end{quantikz}
\end{minipage}%
}
\end{minipage}
\caption{\captionlead{Two-qubit circuits for the analytic Walsh passage.}
Panel (a) compiles the three steps in Eq.~\eqref{eq:walsh-circuit-app}; circuit
time runs from left to right.  Panel (b) compares the same passage under an
endpoint-only readout and unread context checks after the first two steps.
The next figure gives the record-register realization of each $\Mc$ block.
Theorem~\ref{thm:walsh-composition-app} proves equality of the two endpoint
laws on every contiguous interval; the dense specialization retains nonzero
pairwise path interference.}
\label{fig:walsh-circuit-app}
\end{figure}

\begin{figure}[H]
\centering
\footnotesize
\textit{Fixed-context measurement key.}\quad

The quantum action of each $\Mc$ block has the following
record-register realization.\par
\vspace{0.35em}
\resizebox{0.84\linewidth}{!}{%
\begin{minipage}{\linewidth}
\centering
\begin{minipage}[c]{0.24\linewidth}
\centering
\begin{quantikz}[row sep=0.27cm,column sep=0.30cm]
 \lstick{$q_0$} & \gate[2]{\Mc} & \qw \\
 \lstick{$q_1$} & {} & \qw
\end{quantikz}
\end{minipage}%
\begin{minipage}[c]{0.12\linewidth}
\centering
\(\longmapsto\)
\end{minipage}%
\begin{minipage}[c]{0.50\linewidth}
\centering
\begin{quantikz}[row sep=0.27cm,column sep=0.30cm,classical gap=0.04cm]
 \lstick{$q_0$} & \ctrl{2} & \qw & \qw \\
 \lstick{$q_1$} & \ctrl{2} & \qw & \qw \\
 \lstick{$r_0=\ket0$} & \targ{} & \meter{} & \setwiretype{c} \\
 \lstick{$r_1=\ket0$} & \targ{} & \gate[ps=meter,priority label=below,disable auto height][2em][1.5em]{\mathrm Z} & \setwiretype{c}
\end{quantikz}
\end{minipage}
\end{minipage}%
}
\[
 \rho\longmapsto\Mc(\rho)
 :=\sum_{x\in\{0,1\}^2} |x\rangle\!\langle x|\,\rho\,|x\rangle\!\langle x|.
\]
\caption{\captionlead{Ancilla implementation of the
fixed-context check $\Mc$.}  Here $\C$ is the computational context.
The CNOTs copy its two binary labels into fresh record targets,
which are measured in the $Z$ basis; double wires denote classical records.
Retaining a record makes its outcome available for sonification;
marginalizing it implements the non-selective channel $\Mc$.}
\label{fig:walsh-readout-main}
\end{figure}

\subsection{Walsh-coordinate representation}
\label{app:walsh-shadows}
\noindent

\begin{definition}[Walsh-coordinate representation]
\label{def:walsh-coordinate-app}

For a classical \(4\times4\) matrix \(K\), define
\(\widetilde K:=WKW\).  The
tilde denotes
the same linear map in Walsh coordinates; in particular,
$\widetilde{\Born(U)}=W\Born(U)W$.
We call this coordinate matrix its
\emph{Walsh shadow}; it need not be stochastic or entrywise nonnegative.  The zeroth
Walsh coordinate is the constant mode and the other three coordinates are
the nonconstant binary parity modes.  If $E_{ab}$ is the matrix unit with a
single $1$ in row $a$, column $b$, then
\begin{equation}
 E_{ab}E_{cd}=\delta_{bc}E_{ad}.
 \label{eq:matrix-unit-law-app}
\end{equation}
\end{definition}

\begin{lemma}[Walsh-coordinate kernels]
\label{lem:walsh-shadows-app}
For the three-parameter family in

Eqs.~\eqref{eq:walsh-gates-app}--\eqref{eq:walsh-circuit-app},
\begin{align}
 \widetilde{\Born(U_1)}&=E_{00}+\sin(\pi\xi)E_{31},\nonumber\\
 \widetilde{\Born(U_2)}&=E_{00}+\sin(\pi\xi)E_{31},\nonumber\\
 \widetilde{\Born(U_3)}&=E_{00}+\sin(\pi\eta)E_{13},
 \label{eq:walsh-one-step-app}
\end{align}
and the coherently composed segments satisfy
\begin{align}
 \widetilde{\Born(U_2\circ U_1)}&=E_{00},\nonumber\\
 \widetilde{\Born(U_3\circ U_2)}
 &=E_{00}+\sin(\pi\xi)\sin(\pi\eta)E_{11},\nonumber\\
 \widetilde{\Born(U_3\circ U_2\circ U_1)}&=E_{00}.
 \label{eq:walsh-segments-app}
\end{align}
\end{lemma}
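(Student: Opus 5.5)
The plan is to compute each Born kernel in Walsh coordinates through the fact that a diagonal phase conjugated by $W$ becomes a circulant-type operator on $\mathbb Z_2^2$, whose Born kernel is easy to read off. First observe that $\Born(C_{\rm NOT}\circ Q\circ C_{\rm NOT})=\Pi\,\Born(Q)\,\Pi$ and $\Born(C_{\rm NOT}\circ Q_1)=\Pi\,\Born(Q_1)$, where $\Pi$ is the CNOT permutation. This holds because a permutation factor on either side of a unitary simply permutes rows or columns of $|\cdot|^2$. In Walsh coordinates, $W\Pi W$ is the transpose-inverse action of CNOT on the parity labels, again a permutation of the four Walsh indices $\{00,01,10,11\}$. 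So the CNOT dressing only relabels indices, and everything reduces to the kernels of $Q_1,Q_2,Q_3$.

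For a word $Q=W\Phi(\mathbf b)W\Phi(\mathbf a)W$, write $\Phi(\mathbf a)=\sum_s e^{\ii\pi a_s}|s\rangle\langle s|$. Then $\langle y|Q|x\rangle=\tfrac18\sum_{s,t}(-1)^{y\cdot t+t\cdot s+s\cdot x}e^{\ii\pi(b_t+a_s)}$. The key step is to use that each phase vector here is affine-quadratic on $\mathbb F_2^2$: for example $(1/2,0,1/2)$ is $\tfrac12 s_2$ up to relabeling, and $(\xi,0,\xi)$ is $\xi s_2$ with $s=(s_1,s_2)$. Consequently $e^{\ii\pi a_s}$ factorizes over qubits or over a single parity, and $Q$ becomes a product of single-qubit rotations together with at most one entangling diagonal. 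I would then expand $|\langle y|Q|x\rangle|^2$ and apply $W(\cdot)W$. The Walsh shadow of a kernel $K$ has entries $\widetilde K_{uv}=\tfrac14\sum_{x,y}(-1)^{u\cdot y+v\cdot x}|\langle y|Q|x\rangle|^2$, which are the expectations $\tfrac14\Tr\bigl(Z^u Q Z^v Q^\dagger\bigr)$ with $Z^u$ Pauli-$Z$ strings. This is the cleanest route: $\widetilde{\Born(Q)}_{uv}$ is the Pauli-$Z$ to Pauli-$Z$ transfer coefficient of the Clifford-like map $Q$. Conjugating $Z^v$ through $W$ gives $X^v$, through a diagonal phase gives a combination of $X^v$ and $X^vZ^{w}$ with coefficients $\cos$ and $\sin$ of the relevant phase differences, and a final $W$ maps these back. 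Reading off the $Z$-string components yields one diagonal constant mode $E_{00}$ and one off-diagonal entry $\sin(\pi\xi)$ or $\sin(\pi\eta)$, with all other $Z$-components vanishing. Combined with the CNOT relabeling, this produces Eq.~\eqref{eq:walsh-one-step-app}.

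For the composed segments I would not multiply kernels but recompute the coherent products. In $U_2\circ U_1$ the adjacent CNOTs cancel, leaving $C_{\rm NOT}\circ Q_2\circ Q_1$. Inside $Q_2\circ Q_1$ the middle $W\circ W=\openone$, so the diagonals $\Phi(1/2-\xi,1/2,-\xi)$ and $\Phi(\xi,0,\xi)$ merge, cancelling the $\xi$-dependence, and I would check that the total word maps every nontrivial $Z$-string off the $Z$-sector, giving $E_{00}$. The segments $U_3\circ U_2$ and the full passage are handled the same way with the Pauli transfer bookkeeping, tracking the phase $\sin(\pi\xi)\sin(\pi\eta)$ on $E_{11}$. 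The main obstacle is this bookkeeping for the three-step word: one must show that the residual $Z$-components cancel exactly. I would attack it first by merging diagonals across $W^2=\openone$, and only if necessary by a direct $4\times4$ symbolic expansion.
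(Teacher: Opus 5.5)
Your proposal is correct, but it takes a different route from the paper.

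The paper's proof is a direct symbolic computation. It substitutes $x=e^{\ii\pi\xi}$, $y=e^{\ii\pi\eta}$, $z=e^{\ii\pi\zeta}$, multiplies the $4\times4$ amplitude matrices for each step and segment, takes entrywise squared moduli, conjugates by $W$, and tabulates the few surviving entries. This is mechanical and easy to hand to computer algebra. It does not explain why almost every Walsh entry vanishes, or why $\zeta$ drops out.

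Your identity $\widetilde{\Born(Q)}_{uv}=\tfrac14\Tr(Z^uQZ^vQ^\dagger)$ turns the problem into Heisenberg-picture bookkeeping of Pauli strings. In that picture $W$ exchanges the $X$- and $Z$-supports, every diagonal layer preserves the $X$-support, and each CNOT becomes the Walsh-label permutation $W\Pi W$, which swaps labels $1$ and $3$. Most entries then vanish by a support count alone, and only a few $\cos$/$\sin$ coefficients need computing.

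Your diagonal merge is also the right move for the segments. Since $\Phi(1/2-\xi,1/2,-\xi)\Phi(\xi,0,\xi)=\Phi(1/2,1/2,0)$, the product $Q_2\circ Q_1$ is a $\xi$-independent Clifford map. It sends $Z\otimes Z\mapsto\openone\otimes X$, $\openone\otimes Z\mapsto -Y\otimes\openone$, and $Z\otimes\openone\mapsto -Y\otimes X$. None of these is a $Z$-string, which gives $E_{00}$ for $U_2\circ U_1$. The three-step case reduces to one more pass of $Q_3$ on these three strings, so the obstacle you anticipate is mild.

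Two wording slips do not affect the argument:
\begin{itemize}
\item The $Q_k$ are not Clifford for generic $\xi,\eta$. For example, $\Phi(\eta,\zeta,\zeta-\eta)$ contains the controlled phase $e^{-2\pi\ii\eta s_0s_1}$, and every phase function on $\mathbb F_2^2$ is automatically affine-quadratic, so that observation adds nothing. Your transfer argument never uses the Clifford property.
\item For $Q_1=H\otimes\bigl(H\operatorname{diag}(1,e^{\ii\pi\xi})H\operatorname{diag}(1,\ii)H\bigr)$, the surviving entry is the diagonal $\sin(\pi\xi)E_{11}$, not an off-diagonal one. The CNOT relabeling then moves it to $E_{31}$.
\end{itemize}
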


\Needspace{0.32\textheight}
\begin{proof}
Put $x=e^{\ii\pi\xi}$, $y=e^{\ii\pi\eta}$, and
$z=e^{\ii\pi\zeta}$.
Substitution of the six diagonal phase layers, multiplication of the
$4\times4$ matrices, entrywise squared moduli, and conjugation by $W$ leave
only the entries in Table~\ref{tab:walsh-symbolic-app}.
\begin{table}[H]
\centering
\caption{Nonzero entries of the Walsh shadows.}
\label{tab:walsh-symbolic-app}
\small
\begin{tabular}{@{}lll@{}}
\toprule
Segment & Constant entry & Additional entry\\
\midrule
$U_1$ & $(0,0)=1$ & $(3,1)=\ii(1-x^2)/(2x)=\sin(\pi\xi)$\\
$U_2$ & $(0,0)=1$ & $(3,1)=\ii(1-x^2)/(2x)=\sin(\pi\xi)$\\
$U_3$ & $(0,0)=1$ & $(1,3)=\ii(1-y^2)/(2y)=\sin(\pi\eta)$\\
$U_2\circ U_1$ & $(0,0)=1$ & none\\
$U_3\circ U_2$ & $(0,0)=1$ & $(1,1)=\sin(\pi\xi)\sin(\pi\eta)$\\
$U_3\circ U_2\circ U_1$ & $(0,0)=1$ & none\\
\bottomrule
\end{tabular}
\end{table}
The parameter $z$, hence $\zeta$, cancels from all six Born shadows.  The
calculation is symbolic and holds for arbitrary real $\xi,\eta,\zeta$.
\end{proof}

\begin{theorem}[Analytic three-step composition]
\label{thm:walsh-composition-app}
For the real-parameter family in
Eqs.~\eqref{eq:walsh-gates-app}--\eqref{eq:walsh-circuit-app}, every
nontrivial contiguous interval is operation-supplied compositional:
\begin{align}
 \Born(U_2\circ U_1)&=\Born(U_2)\circ\Born(U_1),\nonumber\\
 \Born(U_3\circ U_2)&=\Born(U_3)\circ\Born(U_2),\nonumber\\
 \Born(U_3\circ U_2\circ U_1)
 &=\Born(U_3)\circ\Born(U_2)\circ\Born(U_1).
 \label{eq:walsh-composition-app}
\end{align}
Hence the three-step passage is boundary-stable for every real
\(\xi,\eta,\zeta\).
\end{theorem}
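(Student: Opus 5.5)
The plan is to deduce Theorem~\ref{thm:walsh-composition-app} directly from Lemma~\ref{lem:walsh-shadows-app}, working entirely in Walsh coordinates. Conjugation $K\mapsto WKW$ is an algebra automorphism, because $W^{\mathsf T}=W=W^{-1}$ gives $\widetilde{K_2K_1}=\widetilde K_2\widetilde K_1$. Each identity in Eq.~\eqref{eq:walsh-composition-app} is therefore equivalent to the same identity between Walsh shadows. So I would compare the products of the one-step shadows in Eq.~\eqref{eq:walsh-one-step-app} with the segment shadows in Eq.~\eqref{eq:walsh-segments-app}, using only the matrix-unit law Eq.~\eqref{eq:matrix-unit-law-app}.

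Write $s=\sin(\pi\xi)$ and $t=\sin(\pi\eta)$. The three products are as follows.

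For the first pair,
\[
(E_{00}+sE_{31})(E_{00}+sE_{31})=E_{00}+sE_{00}E_{31}+sE_{31}E_{00}+s^2E_{31}E_{31}=E_{00},
\]
since $E_{00}E_{31}=0$, $E_{31}E_{00}=0$ and $E_{31}E_{31}=0$. This matches $\widetilde{\Born(U_2\circ U_1)}=E_{00}$.

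For the second pair,
\[
(E_{00}+tE_{13})(E_{00}+sE_{31})=E_{00}+st\,E_{13}E_{31}=E_{00}+st\,E_{11},
\]
with the mixed terms again vanishing. This matches $\widetilde{\Born(U_3\circ U_2)}$.

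For the triple, the previous computation gives
\[
(E_{00}+tE_{13})\,E_{00}=E_{00},
\]
which matches $\widetilde{\Born(U_3\circ U_2\circ U_1)}=E_{00}$.

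Undoing the Walsh conjugation then gives the three identities. The single-step intervals $a=b$ hold trivially, so every contiguous interval of the three-step passage satisfies Eq.~\eqref{eq:appendix-cue-law}, and the passage is boundary-stable for all real $\xi,\eta,\zeta$.

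Everything substantive sits in Lemma~\ref{lem:walsh-shadows-app}, which I am allowed to assume. If it had to be reverified, the main obstacle would be the symbolic multiplication of the six diagonal phase layers interleaved with $W$, together with checking that the CNOT factors cancel at interior boundaries (or act as permutations at segment ends). For that I would first rewrite each $Q_k$ as a Walsh conjugate of a diagonal phase, so that the entrywise squared moduli become short Fourier sums in $x,y,z$.
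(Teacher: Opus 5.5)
Your proposal is correct and is essentially the paper's proof. Both use $W^2=\openone$ to transfer each identity to Walsh shadows, then verify it from Lemma~\ref{lem:walsh-shadows-app} using the matrix-unit law. The only cosmetic difference is the grouping of the triple product: you check $\widetilde{\Born(U_3)}\bigl(\widetilde{\Born(U_2)}\widetilde{\Born(U_1)}\bigr)=(E_{00}+tE_{13})E_{00}$, whereas the paper checks $\bigl(\widetilde{\Born(U_3)}\widetilde{\Born(U_2)}\bigr)\widetilde{\Born(U_1)}=(E_{00}+st\,E_{11})(E_{00}+sE_{31})$, and these agree by associativity.
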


\begin{proof}[Proof of Theorem~\ref{thm:walsh-composition-app}]
Put $a=\sin(\pi\xi)$ and $b=\sin(\pi\eta)$.  Since $W^2=\openone$, stochastic
kernel composition may be checked in Walsh coordinates.  By
Eq.~\eqref{eq:matrix-unit-law-app},
\begin{align}
 (E_{00}+aE_{31})^2&=E_{00},\nonumber\\
 (E_{00}+bE_{13})(E_{00}+aE_{31})
 &=E_{00}+abE_{11},\nonumber\\
 (E_{00}+abE_{11})(E_{00}+aE_{31})&=E_{00}.
 \label{eq:walsh-unit-products-app}
\end{align}
These are exactly the Walsh-coordinate matrices of the coherent endpoint
Born kernels in Eq.~\eqref{eq:walsh-segments-app}; conjugating back by $W$ proves every
identity in
Eq.~\eqref{eq:walsh-composition-app}.  The following subsections give the
explicit kernels and path-interference calculation at the dense specialization.
\end{proof}

\subsection{Complete dense specialization}
\label{app:walsh-kernels}

Set $\xi=\eta=1/6$ and $\zeta=0$.  Define
\begin{equation}
 K_{31}^{\rm d}:=W\left(E_{00}+\frac12E_{31}\right)W,
 \qquad
 K_{13}^{\rm d}:=W\left(E_{00}+\frac12E_{13}\right)W,
 \label{eq:dense-kernels-app}
\end{equation}
and
\begin{equation}
 K_{\rm unif}:=W E_{00}W
 =|\boldsymbol{+}_4\rangle\langle\boldsymbol{+}_4|=\frac14J_4,
 \qquad
 K_{11}^{\rm d}:=W\left(E_{00}+\frac14E_{11}\right)W.
 \label{eq:walsh-jl-app}
\end{equation}
Here $|\boldsymbol{+}_4\rangle$ is the $d=4$ case of the equal-amplitude ket
defined in Section~\ref{app:bound}.
For direct numerical checking, the two one-step kernels are
\begin{equation}
 K_{31}^{\rm d}=\frac18
 \begin{pmatrix}
  3&1&3&1\\1&3&1&3\\1&3&1&3\\3&1&3&1
 \end{pmatrix},
 \qquad
 K_{13}^{\rm d}=\frac18
 \begin{pmatrix}
  3&1&1&3\\1&3&3&1\\3&1&1&3\\1&3&3&1
 \end{pmatrix}.
 \label{eq:walsh-dense-matrices-app}
\end{equation}
The complete interval calculation is
\begin{equation}
 \Born(U_1)=K_{31}^{\rm d},
 \quad \Born(U_2)=K_{31}^{\rm d},
 \quad \Born(U_3)=K_{13}^{\rm d},
 \label{eq:walsh-one-kernels-app}
\end{equation}
and
\begin{align}
 \Born(U_2\circ U_1)&=K_{31}^{\rm d}\circ K_{31}^{\rm d}=K_{\rm unif},\nonumber\\
 \Born(U_3\circ U_2)&=K_{13}^{\rm d}\circ K_{31}^{\rm d}=K_{11}^{\rm d},\nonumber\\
 \Born(U_3\circ U_2\circ U_1)
 &=K_{13}^{\rm d}\circ K_{31}^{\rm d}\circ K_{31}^{\rm d}=K_{\rm unif}.
 \label{eq:walsh-all-kernels-app}
\end{align}

These equations give the complete interval algebra of the
three-step family.

\subsection{Multipath interference}

Consider the first two steps, input $0$, and output $1$.  The four amplitude
paths through the intermediate labels $m=0,1,2,3$ are
$z_m:=\langle1|U_2|m\rangle\langle m|U_1|0\rangle$, namely
\begin{equation}
 z_0=\frac{\sqrt3}{8},
 \qquad z_1=-\frac{\sqrt3}{8},
 \qquad z_2=\frac{\ii}{8},
 \qquad z_3=\frac{3\ii}{8}.
 \label{eq:walsh-paths-app}
\end{equation}
Every path is nonzero, and
\begin{equation}
 \left|\sum_{m=0}^{3}z_m\right|^2
 =\left|\frac{\ii}{2}\right|^2
 =\frac14
 =\sum_{m=0}^{3}|z_m|^2.
\end{equation}
The relation between the two probabilities is
\begin{equation}
 \left|\sum_mz_m\right|^2
 =\sum_m|z_m|^2
  +2\sum_{0\le r<s\le3}\operatorname{Re}(z_r\overline{z_s}).
\end{equation}
Only two real cross terms are nonzero:
\begin{equation}
 2\operatorname{Re}(z_0\overline{z_1})=-\frac3{32},
 \qquad
 2\operatorname{Re}(z_2\overline{z_3})=+\frac3{32}.
\end{equation}
They cancel.  The equality therefore comes from interference
balance, not from absent or separated amplitude paths.

\section{Musical realizations and prediction complexity}
\label{app:quantum-music-complexity}

We first give the two-pitch realization and the qutrit rhythm selector,
including their sound-output conventions. We then prove the transition-
prediction and sampling results. The prediction problem applies equally
when the outputs label states of a finite state machine.

\subsection{\texorpdfstring{Qubit Ostinato: an audible realization}{Qubit Ostinato: an audible realization}}
\label{app:qubit-ostinato}

\begin{example}[Qubit Ostinato]
\label{ex:qubit-ostinato-app}
The \emph{Qubit Ostinato} is the audible
 $K(1/2)$ member of the exact
constant-kernel family proved in Corollary~\ref{cor:constant-kernel-infinity}.
It maps the computational basis to the adjacent pitch space
\[
 A_{\mathrm{Ostinato}}=\{\mathrm E_2,\mathrm F_2\},
 \qquad \ell_{\rm snd}(|0\rangle)=\mathrm E_2,
 \qquad \ell_{\rm snd}(|1\rangle)=\mathrm F_2.
\]
Consequently,
\begin{equation}
 \begin{aligned}
 \Born(U_k)
 &=\;K(1/2)
 =
 \;\begin{pmatrix}\frac34&\frac14\\[2pt]\frac14&\frac34\end{pmatrix},\\
 \Born(U_k\circ\cdots\circ U_1)
 &=\;K(2^{-k}).
 \end{aligned}
 \label{eq:qubit-ostinato-score}
\end{equation}
Under readout after every operation, the pitch changes with probability
$1/4$ at each step, while every finite prefix has the same endpoint
distribution as the coherent passage. Figure~\ref{fig:qubit-ostinato-demo} makes
the first two updates explicit.
\end{example}

\begin{figure}[H]
\centering
\begingroup
\setlength{\fboxsep}{5pt}
\setlength{\fboxrule}{0.45pt}
\fbox{%
\begin{minipage}{0.94\linewidth}
\centering
\footnotesize
\vspace{0.15em}
\resizebox{0.98\linewidth}{!}{%
\begin{tikzpicture}[>=Latex,line cap=round,line join=round]
  \newcommand{\OstinatoBlochSphere}[1]{%
    \pgfmathsetmacro{\OstLatRadius}{sqrt(1-(#1)^2)}
    \pgfmathsetmacro{\OstLatCentre}{1.4*sqrt(3)*(#1)/2}
    \pgfmathsetmacro{\OstLatCut}{acos(-(#1)/(sqrt(3)*\OstLatRadius))}
    \shade[ball color=gray!12,opacity=0.18] (0,0) circle (1.40);
    \draw[gray!45,densely dashed,line width=0.45pt]
      plot[domain=90:270,samples=75,variable=\t]
      ({1.4*sin(\t)},{-0.7*cos(\t)});
    \draw[gray!35,densely dashed,line width=0.40pt]
      plot[domain=130.893395:310.893395,samples=75,variable=\t]
      ({0.7*sqrt(3)*sin(\t)},
       {1.4*(-sin(\t)/4+sqrt(3)*cos(\t)/2)});
    \draw[gray!35,densely dashed,line width=0.40pt]
      plot[domain=33.690068:213.690068,samples=75,variable=\t]
      ({0.7*sin(\t)},
       {1.4*(sqrt(3)*sin(\t)/4+sqrt(3)*cos(\t)/2)});
    \draw[pathone!45,densely dashed,line width=0.70pt]
      plot[domain=\OstLatCut:360-\OstLatCut,samples=75,variable=\t]
      ({1.4*\OstLatRadius*sin(\t)},
       {\OstLatCentre-0.7*\OstLatRadius*cos(\t)});
    \draw[gray!65,densely dashed,line width=0.45pt]
      ({-0.84*sqrt(3)},0.42) -- (0,0);
    \draw[->,gray!75,line width=0.55pt]
      (0,0) -- ({0.84*sqrt(3)},-0.42)
      node[below right,font=\scriptsize,text=black] {$x$};
    \draw[gray!65,line width=0.45pt]
      (-0.84,{-0.42*sqrt(3)}) -- (0,0);
    \draw[->,gray!60,densely dashed,line width=0.45pt]
      (0,0) -- (0.84,{0.42*sqrt(3)})
      node[above right,font=\scriptsize,text=black] {$y$};
    \draw[gray!60,densely dashed,line width=0.45pt]
      (0,-1.49) -- (0,0);
    \draw[->,gray!75,line width=0.55pt]
      (0,0) -- (0,1.49)
      node[above,font=\scriptsize,text=black] {$z$};
    \draw[gray!65,line width=0.55pt]
      plot[domain=-90:90,samples=75,variable=\t]
      ({1.4*sin(\t)},{-0.7*cos(\t)});
    \draw[gray!45,line width=0.45pt]
      plot[domain=-49.106605:130.893395,samples=75,variable=\t]
      ({0.7*sqrt(3)*sin(\t)},
       {1.4*(-sin(\t)/4+sqrt(3)*cos(\t)/2)});
    \draw[gray!45,line width=0.45pt]
      plot[domain=-146.309932:33.690068,samples=75,variable=\t]
      ({0.7*sin(\t)},
       {1.4*(sqrt(3)*sin(\t)/4+sqrt(3)*cos(\t)/2)});
    \draw[pathone,line width=1.05pt]
      plot[domain=-\OstLatCut:\OstLatCut,samples=85,variable=\t]
      ({1.4*\OstLatRadius*sin(\t)},
       {\OstLatCentre-0.7*\OstLatRadius*cos(\t)});
    \draw[black!75,line width=0.75pt] (0,0) circle (1.40);
    \fill[black!60] (0,0) circle (0.8pt);
  }
  \begin{scope}[shift={(0,0)}]
    \node[font=\small\bfseries] at (0,2.03) {(a) After $U_1$};
    \OstinatoBlochSphere{0.5}
    \node[font=\scriptsize,text=pathone,fill=white,inner sep=1pt]
      at (-0.93,0.89) {$z=1/2$};
    \draw[->,gray!65,densely dashed,line width=0.95pt]
      (0,0) -- (0,{0.7*sqrt(3)});
    \fill[gray!65] (0,{0.7*sqrt(3)}) circle (1.6pt);
    \node[font=\scriptsize,text=gray!75,above left]
      at (0,{0.7*sqrt(3)}) {$\boldsymbol r_0$};
    \draw[->,line width=1.35pt,deepnavy]
      (0,0) -- (1.05,{0.175*sqrt(3)});
    \fill[deepnavy] (1.05,{0.175*sqrt(3)}) circle (2.1pt);
    \node[font=\scriptsize,text=deepnavy,above right]
      at (1.05,{0.175*sqrt(3)}) {$\boldsymbol r_1$};
    \node[font=\scriptsize] at (0,-1.83)
      {$U_1=R_{\sigma_y}(\pi/3)$};
  \end{scope}

  \draw[->,line width=0.9pt] (1.97,0) -- (4.43,0);
  \node[font=\scriptsize,align=center] at (3.20,0.62)
    {$U_2=R_{\sigma_z}(\alpha_2)$\\[-1pt]
     $\circ R_{\sigma_x}(\pi/3)$};
  \node[font=\scriptsize] at (3.20,-0.30)
    {$\alpha_2=\arctan(1/2)$};
  \node[font=\scriptsize] at (3.20,-0.75)
    {$\Born(U_2)=K(1/2)$};

  \begin{scope}[shift={(6.40,0)}]
    \node[font=\small\bfseries] at (0,2.03) {(b) After $U_2$};
    \OstinatoBlochSphere{0.25}
    \node[font=\scriptsize,text=pathone,fill=white,inner sep=1pt]
      at (-0.98,0.59) {$z=1/4$};
    \draw[->,gray!65,densely dashed,line width=0.95pt]
      (0,0) -- (1.05,{0.175*sqrt(3)});
    \fill[gray!65] (1.05,{0.175*sqrt(3)}) circle (1.6pt);
    \node[font=\scriptsize,text=gray!75,above right]
      at (1.05,{0.175*sqrt(3)}) {$\boldsymbol r_1$};
    \draw[->,line width=1.35pt,deepnavy]
      (0,0) -- ({0.175*sqrt(45)},{0.175*sqrt(3)-0.0875*sqrt(15)});
    \fill[deepnavy]
      ({0.175*sqrt(45)},{0.175*sqrt(3)-0.0875*sqrt(15)}) circle (2.1pt);
    \node[font=\scriptsize,text=deepnavy,below right]
      at ({0.175*sqrt(45)},{0.175*sqrt(3)-0.0875*sqrt(15)})
      {$\boldsymbol r_2$};
    \node[font=\scriptsize] at (0,-1.83)
      {$\Born(U_2\circ U_1)=K(1/4)$};
  \end{scope}

  \node[font=\scriptsize] at (3.20,-2.25)
    {\textcolor{deepnavy}{\rule{1.25em}{1.15pt}} coherent prefix\qquad
     \textcolor{gray!65}{\rule{1.25em}{0.75pt}} preceding state\qquad
     \textcolor{pathone}{\rule{1.25em}{1.15pt}} population latitude};

  \begin{scope}[yshift=-3mm]
  \node[font=\small\bfseries] at (3.20,-2.72)
    {(c) Twelve sampled eight-note walks};
  \begin{scope}[yshift=-2mm]
  \draw[gray!48,line width=0.55pt] (-0.10,-3.35) -- (6.50,-3.35);
  \draw[gray!48,line width=0.55pt] (-0.10,-4.35) -- (6.50,-4.35);
  \node[font=\scriptsize\bfseries,anchor=east] at (-0.25,-3.35) {$\mathrm E_2$};
  \node[font=\scriptsize\bfseries,anchor=east] at (-0.25,-4.35) {$\mathrm F_2$};
  \foreach \x/\u in {0/U_1,0.91/U_2,1.82/U_3,2.73/U_4,
                       3.64/U_5,4.55/U_6,5.46/U_7,6.37/U_8} {
    \node[font=\scriptsize] at (\x,-3.03) {$\u$};
    \filldraw[fill=white,line width=0.55pt] (\x,-3.35) circle (1.55pt);
    \filldraw[fill=white,line width=0.55pt] (\x,-4.35) circle (1.55pt);
  }
  \newcommand{\CompactBornPath}[9]{%
    \draw[black,opacity=0.22,line width=0.42pt]
      (0,{-3.85+0.50*(#2)+#1})--
      (0.91,{-3.85+0.50*(#3)+#1})--
      (1.82,{-3.85+0.50*(#4)+#1})--
      (2.73,{-3.85+0.50*(#5)+#1})--
      (3.64,{-3.85+0.50*(#6)+#1})--
      (4.55,{-3.85+0.50*(#7)+#1})--
      (5.46,{-3.85+0.50*(#8)+#1})--
      (6.37,{-3.85+0.50*(#9)+#1});%
  }
  \CompactBornPath{-0.08}{1}{1}{1}{1}{1}{1}{1}{1}
  \CompactBornPath{-0.06}{-1}{-1}{-1}{-1}{-1}{-1}{-1}{-1}
  \CompactBornPath{-0.045}{-1}{-1}{-1}{-1}{-1}{-1}{1}{1}
  \CompactBornPath{-0.03}{-1}{1}{1}{1}{1}{1}{1}{1}
  \CompactBornPath{-0.015}{1}{1}{1}{1}{1}{-1}{-1}{-1}
  \CompactBornPath{0}{1}{1}{-1}{-1}{-1}{-1}{-1}{-1}
  \CompactBornPath{0.015}{1}{1}{-1}{-1}{-1}{-1}{-1}{-1}
  \CompactBornPath{0.03}{1}{-1}{1}{1}{1}{1}{1}{1}
  \CompactBornPath{0.045}{1}{1}{1}{1}{1}{1}{1}{1}
  \CompactBornPath{0.06}{1}{1}{1}{-1}{-1}{-1}{-1}{1}
  \CompactBornPath{0.075}{-1}{-1}{-1}{-1}{1}{1}{1}{1}
  \CompactBornPath{0.09}{-1}{-1}{-1}{-1}{-1}{-1}{-1}{-1}
  \end{scope}
  \end{scope}
\end{tikzpicture}%
}
\end{minipage}%
}
\endgroup
\caption{\captionlead{Qubit Ostinato in the two-note space $\{\mathrm E_2,\mathrm F_2\}$.}
 Here $\Pr(\mathrm{change})=1/4$, with
$|0\rangle\mapsto\mathrm E_2$ and $|1\rangle\mapsto\mathrm F_2$.
Panels (a) and (b) are orthographic views of the Bloch sphere after the
first two steps of the constant-kernel sequence in
Eqs.~\eqref{eq:signed-infinity-notes}
and~\eqref{eq:signed-infinity-phase}.  The coherent Bloch vectors are
$\boldsymbol r_1=(\sqrt3/2,0,1/2)$ and
$\boldsymbol r_2=(\sqrt{15}/4,0,1/4)$.
Gray arrows show the preceding states, beginning at
$\boldsymbol r_0=(0,0,1)$.  Green latitude circles mark the population
polarizations $z=1/2$ and $z=1/4$; rear arcs are lighter and dashed.
Panel (c) shows twelve reproducible eight-note samples of the
associated two-state stochastic walk, beginning with the first audible
outcome after $U_1$.}
\label{fig:qubit-ostinato-demo}
\end{figure}
The Qubit Ostinato is an audible realization of the separation
between initial- and internal-cue tests: composition from the initial cue does
not by itself make the score cue-stable.

\subsection{Qutrit rhythm selector}
\label{app:qutrit-rhythm-selector}

Let $J_3$ denote the $3\times3$ all-ones matrix. The qutrit selector used in
Figure~\ref{fig:emulated-quantum-score} is
\[
 \begin{gathered}
 P_3|s\rangle=|(s+1)\bmod3\rangle,\qquad
 a=\frac{-1+\ii\sqrt{55}}8,\\
 V^{\rm sel}=aP_3+\frac14(J_3-P_3).
 \end{gathered}
\]
Its Born kernel is given in Eq.~\eqref{eq:score-selector-law-main}.

\paragraph{Score and sound output.}
\label{app:score-sound-output}
A score prescribes an ordered sequence of unitary operations in a fixed
context, together with a classical sound assignment; its readout schedule is
specified separately. In the musical dictionary, each operation is a note
and each temporal boundary is a cue. At an enabled boundary $b$, outcome $x$
prepares the post-measurement state $|x\rangle$ and the classical map
$\ell_{{\rm snd},b}:X\to\mathsf A_{\rm snd}$ assigns a sound. The boundary
dependence of this map changes the voicing, while $x$ retains its role in the
subsequent quantum evolution.

The entrance sound is specified separately from the entrance quantum state
and is held until the first scheduled readout. Every enabled readout starts
its assigned sound anew, including when the pitch repeats; between readouts
the sound is held. Thus the spacings between successive readouts determine
the rhythm, with an externally chosen beat duration $\Delta$.

\paragraph{Displayed performance.}
In the two-qubit realization, each data pass starts in $|00\rangle$ and the
entrance sound carries over from the preceding endpoint. The selector starts
in state $S=0$ and thereafter is prepared in its preceding observed state
before $V^{\rm sel}$ and readout. Its outcomes $0,1,2$ choose $(2,1)$, $(3)$,
and $(1,1,1)$, respectively. The sound map is the boundary-dependent pitch
table in Section~\ref{sec:quantum-music}.

The displayed seed-1766 window begins at an observed endpoint and contains
five complete three-beat data passes, with selector outcomes $1,2,0,1,2$.
The mandatory final readout is at $t=15\Delta$. Its assigned pitch
$\mathrm C_4$ is sounded until $t=16\Delta$, followed by one silent beat
ending at $t=17\Delta$. This terminal one-beat hold is a playback convention
after the measured passage and adds no quantum operation or readout.
The main-text circuit in Eq.~\eqref{eq:qutrit-schedule-control-main} and
Figure~\ref{fig:emulated-quantum-score} give the complete realization and the
256 eight-readout comparison excerpts. Their sampled outcomes and durations
are independent of this displayed terminal hold.

\subsection{Quantum musical-transition prediction}
\label{app:quantum-musical-prediction}

\begin{definition}[Complexity conventions]
\label{def:complexity-conventions-app}
Fix the standard finite universal gate set
$\mathcal G=\{H,T,C_{\rm NOT}\}$, where
\[
 H=\frac1{\sqrt2}\begin{pmatrix}1&1\\1&-1\end{pmatrix},
 \qquad
 T=\operatorname{diag}(1,e^{\ii\pi/4}),
 \qquad
 C_{\rm NOT}|c,r\rangle=|c,r\mathbin\oplus c\rangle.
\]
A promise problem is a pair
$\mathcal L=(\mathcal L_{\rm yes},
\mathcal L_{\rm no})$ of disjoint languages.  We write
\PromiseBQP{} for the class of promise problems decided by a polynomial-time
uniform family of polynomial-size quantum circuits over $\mathcal G$.
There are polynomial-time computable rational thresholds
$0\leq\alpha(\ell)<\beta(\ell)\leq1$, separated by an inverse polynomial in the
classical input length $\ell$, such that acceptance probability is at least
$\beta(\ell)$ on $\mathcal L_{\rm yes}$ and at most $\alpha(\ell)$ on
$\mathcal L_{\rm no}$. Repetition on fresh registers and reversible
comparison of the observed frequency with the thresholds' midpoint give the usual
bounded-error definition with polynomial overhead
~\cite[Sec.~IV.2]{app:WatrousComplexity}.
\end{definition}

\begin{remark}[History presentation]
\label{rem:qmp-history-presentation-app}
For the feedback architecture in the main text, fix a finite audible alphabet
$\mathsf A_{\rm snd}$ and a literal polynomial-time injective encoding
\(\operatorname{enc}\) of circuit descriptions into finite symbol strings.
Choose this encoding so that the symbol string is at least as long as the
circuit's bit description.
A history $m=(a_1,\ldots,a_t)\in\mathsf A_{\rm snd}^*$ consists only of
symbols that have already been played.  The special prelude
\begin{equation}
 m_Q:=\operatorname{enc}(\langle Q\rangle)
 \label{eq:qmp-history-encoding-app}
\end{equation}
is decoded by the score compiler before the next passage, giving
$U(m_Q)=Q$.  Conversely, the decoded circuit supplies the next transition
prediction.  Encoding and decoding are classical polynomial-time operations,
so this history presentation and the explicit circuit presentation below
define polynomial-time equivalent presentations when the thresholds $a,b$
are passed as additional input. Exact identity gate pairs can pad the
decoded circuit, or its encoded history, to ensure that the target input
length is at least the source length, preserving the inverse-polynomial gap
condition. Symbols produced by a
passage are appended only for a later compiler invocation; no same-run
quantum feedback is assumed.
\end{remark}

\begingroup
\setlength{\emergencystretch}{2em}
We
now formulate the
prediction problem and prove its completeness locally.  Fix two distinct terminal transition labels
$\mathsf t_0,\mathsf t_1$; these are classical score outputs, not circuit gates
or time indices. An instance supplies a classical description $\langle Q\rangle$
of an $n$-qubit circuit over $\mathcal G$, initialized in $|0^n\rangle$,
and rational thresholds $a,b$.
We use an explicit circuit encoding whose length polynomially bounds
both the number of qubits and the number of gates.
Measuring its distinguished musical output qubit $M$ gives
$z\in\{0,1\}$ and selects $\mathsf t_z$; the remaining qubits form an
unobserved register $\mathsf J$, with outcome $u\in\{0,1\}^{n-1}$.  In the
computational atomic context, set
\par\endgroup
\begin{equation}
 \ell_{\rm tr}(z,u):=\mathsf t_z.
 \label{eq:musical-transition-label-app}
\end{equation}
The probability of the designated musical transition $\mathsf t_1$ is
therefore
\begin{equation}
p_Q:=\Pr_Q(M=1)
=\sum_{u\in\{0,1\}^{n-1}}
  [\Born_{\C}(Q)]_{(1,u),0^n}.
 \label{eq:qmp-probability-app}
\end{equation}
The explicit context subscript records that this is the canonical
computational-context Born kernel.

\begin{definition}[Quantum musical-transition prediction
  (\textup{\textsc{QMTP}})]
\label{def:qmp-app}
Fix an integer $c\geq1$. An instance is $\langle Q,a,b\rangle$, with
rational thresholds $0\leq a<b\leq1$ satisfying
$b-a\geq\ell^{-c}$, where $\ell:=|\langle Q,a,b\rangle|$ is the total
input bit length. The thresholds are part of the input; $c$ is fixed for
the problem. Among these instances, the yes- and no-instances are
\begin{equation}
 \begin{aligned}
 \textsc{QMTP}_{\rm yes}
   &:=\{\langle Q,a,b\rangle:p_Q\geq b\},\\
 \textsc{QMTP}_{\rm no}
   &:=\{\langle Q,a,b\rangle:p_Q\leq a\}.
 \end{aligned}
 \label{eq:qmp-promise-app}
\end{equation}
Inputs with $a<p_Q<b$ lie outside the promise. The promise gap is $b-a$.
\end{definition}

\begin{theorem}[\PromiseBQP-completeness]
\label{thm:qmp-promisebqp-app}
\mbox{}\par\noindent
\begin{sloppypar}
\noindent

For every fixed $c\geq1$ in
Definition~\ref{def:qmp-app}, quantum musical-transition prediction
(\textup{\textsc{QMTP}}) is \PromiseBQP-complete under deterministic
classical polynomial-time many-one reductions.
\end{sloppypar}
\end{theorem}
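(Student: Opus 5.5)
The proof has two parts: membership in \PromiseBQP{} and \PromiseBQP-hardness. Both follow the standard template for circuit-output prediction problems. The musical labels $\mathsf t_0,\mathsf t_1$ are a relabeling of the output bit and change neither $p_Q$ nor the promise.

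\emph{Membership.} Given $\langle Q,a,b\rangle$, a uniform polynomial-time classical procedure outputs a quantum circuit over $\mathcal G$ with the following structure.
\begin{itemize}
\item It prepares $k=\mathrm{poly}(\ell)$ fresh copies of $|0^n\rangle$ and runs $Q$ on each copy. This is polynomial because $|\langle Q\rangle|$ bounds both the number of qubits and the number of gates.
\item It measures the designated qubit $M$ of every copy and compares the empirical frequency of $M=1$ with the midpoint $(a+b)/2$. The comparison can be done classically after measurement, or by a reversible comparator circuit, as in Definition~\ref{def:complexity-conventions-app}.
\end{itemize}
Hoeffding's inequality with $b-a\geq\ell^{-c}$ shows that $k=O(\ell^{2c})$ copies give error at most $1/3$, or any $2^{-\mathrm{poly}}$ after further repetition. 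This yields thresholds $\alpha<\beta$ with inverse-polynomial gap, which is the acceptance condition in Definition~\ref{def:complexity-conventions-app}. For fixed $c$ the circuit family is polynomial-size and uniform.

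\emph{Hardness.} Let $(\mathcal L_{\rm yes},\mathcal L_{\rm no})$ be any problem in \PromiseBQP{} with uniform circuits $C_w$ and thresholds $\alpha(|w|)<\beta(|w|)$.
\begin{itemize}
\item If needed, first amplify by majority vote so that $\alpha\leq1/3$ and $\beta\geq2/3$. This gives constant thresholds $a=1/3$ and $b=2/3$.
\item Map $w\mapsto\langle Q_w,a,b\rangle$, where $Q_w=C_w$ with its acceptance qubit designated as $M$, so that $M=1$ carries label $\mathsf t_1$. Then $p_{Q_w}$ equals the acceptance probability, and yes- and no-instances map to yes- and no-instances.
\item Check the gap condition $b-a\geq\ell^{-c}$. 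Because $b-a=1/3$, this holds for every $c\geq1$ once $\ell\geq3$. Small instances are handled by a fixed table, or by padding with identity pairs $HH$ as in Remark~\ref{rem:qmp-history-presentation-app}.
\end{itemize}
The reduction is a deterministic classical polynomial-time many-one reduction, since it only writes out $C_w$ and two constants.

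\textbf{Main obstacle.} The step needing most care is the bookkeeping of the gap condition relative to the \emph{output} length $\ell$. The condition is stated in terms of $|\langle Q,a,b\rangle|$, not $|w|$, so the reduction must not shrink the gap below $\ell^{-c}$ as $\ell$ grows. Constant thresholds obtained by amplification are the cleanest way to avoid this. Without amplification, one would instead pad $Q_w$ so that $\ell$ is polynomially related to $|w|$ and the original inverse-polynomial gap dominates $\ell^{-c}$. This might fail for small $c$ unless amplification is applied first, so I will amplify first.
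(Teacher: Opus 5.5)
Your proof is correct. The membership half is the same as the paper's: repeat $Q$, compare the empirical frequency of $M=1$ with $(a+b)/2$, and apply Hoeffding with $N=O(\ell^{2c})$.

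The hardness half takes a different route. The paper does not amplify. It keeps the verifier's own thresholds $a(w)=\alpha_w$ and $b(w)=\beta_w$, and copies the acceptance qubit onto a fresh output qubit with one CNOT. It then pads the circuit description with $H\circ H$ pairs until $\ell(w)\geq\lceil\Delta_w^{-1}\rceil$, which gives $\ell(w)^{-c}\leq\ell(w)^{-1}\leq\Delta_w$. Since $\Delta_w^{-1}\leq r(|w|)$, this padding is polynomial.

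You instead amplify first to constant thresholds $1/3,2/3$, after which the gap condition holds automatically once $\ell\geq3$. Your route makes the gap bookkeeping trivial and independent of $c$. The cost is that you invoke the amplification lemma, with reversible counting and rational comparison built into the reduced circuit. The paper's route keeps the reduction essentially syntactic (one CNOT, identity padding, thresholds copied verbatim) and needs no amplification. Designating $s$ directly as $M$, rather than copying it with a CNOT, makes no difference.

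Two small corrections.

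First, \emph{majority vote} is the wrong amplifier when $\alpha,\beta$ are arbitrary polynomial-time thresholds. For example, with $\alpha=0.01$ and $\beta=0.02$, a majority of many runs rejects with high probability on yes- and no-instances alike. You must compare the empirical acceptance frequency with $(\alpha+\beta)/2$, exactly as in your membership argument.

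Second, your remark that the padding route ``might fail for small $c$'' is mistaken. Enlarging $\ell$ only weakens the requirement $b-a\geq\ell^{-c}$. For every $c\geq1$, padding to $\ell\geq\lceil\Delta_w^{-1}\rceil$ suffices, and this is precisely the paper's argument. Amplification is therefore a convenience here, not a necessity.
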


\begin{proof}

For membership, run $Q$ independently $N$ times,
measure $M$ in each run, and compare the observed frequency with
$(a+b)/2$. Hoeffding's inequality
bounds the error by $\exp[-N(b-a)^2/2]$, so
$N=O(\ell^{2c})$ repetitions
give bounded error.
Reversible counting and rational-threshold comparison implement this procedure as a uniform
polynomial-size quantum circuit.

\begin{sloppypar}

For hardness, let
$\mathcal L=(\mathcal L_{\rm yes},
\mathcal L_{\rm no})\in\PromiseBQP$. Its uniform circuit family produces in
classical polynomial time an
$n=n(|w|)$-qubit verifier $U(w)$, with both qubit and gate counts
polynomially bounded in $|w|$, over $\mathcal G$, including the
computational-basis preparation of $w$, with
acceptance qubit $s$,
recorded outcome $S\in\{0,1\}$, and
remaining work register $\mathsf W$. Let $\alpha_w<\beta_w$ be its
polynomial-time computable rational thresholds, with
$\Delta_w:=\beta_w-\alpha_w\geq1/r(|w|)$ for some fixed polynomial $r$.
Then
\end{sloppypar}
\begin{equation}
 w\in \mathcal L_{\rm yes}
 \Longrightarrow\Pr(S=1)\geq\beta_w,
 \qquad
 w\in \mathcal L_{\rm no}
 \Longrightarrow\Pr(S=1)\leq\alpha_w.
 \label{eq:qmp-verifier-promise-app}
\end{equation}
Add a fresh qubit $M$ in $|0\rangle$ and define
\begin{equation}
 Q(w)
 :=C_{\rm NOT}^{s\to M}
 \circ(U(w)\otimes\openone_M).
 \label{eq:qmp-reduction-circuit-app}
\end{equation}
Here $Q(w)$ has
$n+1$ qubits, and the unobserved register above is
$\mathsf J=(\mathsf W,s)$.
Writing the verifier state before the final CNOT
as
\begin{equation}
 U(w)|0^n\rangle
 =|\phi_{w,0}\rangle_{\mathsf W}
   |0\rangle_{s}
  +|\phi_{w,1}\rangle_{\mathsf W}
   |1\rangle_{s},
 \label{eq:qmp-verifier-branches-app}
\end{equation}
with unnormalized branch vectors, gives
$p_{Q(w)}=\|\phi_{w,1}\|^2=\Pr_{U(w)}(S=1)$.
Set $a(w)=\alpha_w$ and $b(w)=\beta_w$. Pad the explicit circuit
description with $H\circ H=\openone$ gate pairs if needed so that
$\ell(w):=|\langle Q(w),a(w),b(w)\rangle|\geq\lceil\Delta_w^{-1}\rceil$.
This preserves the unitary and its acceptance probability exactly and uses
only polynomially many gates, since $\Delta_w^{-1}\leq r(|w|)$.
For $c\geq1$, it gives
$\ell(w)^{-c}\leq\ell(w)^{-1}\leq\Delta_w$.
Thus $w\mapsto\langle Q(w),a(w),b(w)\rangle$ is a classical
polynomial-time reduction satisfying the \textup{\textsc{QMTP}} promise.
Composing it with
$Q(w)\mapsto m_{Q(w)}=\operatorname{enc}(\langle Q(w)\rangle)$, retaining
$a(w),b(w)$ and padding if needed, proves the
same hardness statement when
\textup{\textsc{QMTP}} is presented as an already played musical
history.
\end{proof}

Thus, in the standard promise-class sense, deciding a designated transition in
this circuit-universal quantum score model is \PromiseBQP-complete
\cite{app:JanzingWocjan2007}.  This neither makes every fixed transition hard,
nor proves that the restricted Born-kernel scores here are universal or that
deciding cue stability is \PromiseBQP-hard.

\subsection[A verifier-controlled stepwise check with an unconditional endpoint]{A verifier-controlled stepwise check with an unconditional endpoint}
\label{app:verifier-controlled-check}

In Figure~\ref{fig:bqp-controlled-check-app}, $\mathsf W$ is the
verifier work register, $D$ is the data register, and $M_{\rm step}$ and
$M_{\rm end}$ retain the intermediate and endpoint outcomes
$Y_{\rm step}$ and $Y_{\rm end}:=Y$.  The fixed history argument $m$ is
suppressed in the gate labels.

The verifier output can also choose a measurement schedule.  Let $m$ be a
fixed already played history, and let its score compiler produce a selector
circuit $U(m)$ with acceptance qubit $s$ and
$p(m):=\Pr_{U(m)}(S=1)$.  The same compiler selects a pair
$U_1(m),U_2(m)$ of $k$-qubit data notes.  Let
$D=(q_0,\ldots,q_{k-1})$ be a $k$-qubit data register initialized in
$|x\rangle$, with
$x\in X=\{0,\ldots,2^k-1\}$ labeling a computational-context outcome,
and let the two data notes act on $D$ in that order.  Initialize
the meter registers $M_{\rm step}$ and $M_{\rm end}$
in $|0^k\rangle$.

\begingroup
\color{black}
\begin{figure}[H]
\centering
\begingroup
\newsavebox{\scoreSpeakerBox}
\sbox{\scoreSpeakerBox}{%
 \begin{tikzpicture}[x=1.2ex,y=1.2ex,baseline=-0.45ex,
   line width=0.5pt,line cap=round,line join=round]
  \fill (0,0.22)--(0.28,0.22)--(0.68,0.55)--(0.68,-0.55)
    --(0.28,-0.22)--(0,-0.22)--cycle;
  \draw (0.86,0.34) to[out=-35,in=35] (0.86,-0.34);
  \draw (1.05,0.56) to[out=-35,in=35] (1.05,-0.56);
 \end{tikzpicture}%
}
\begin{quantikz}[
 row sep=0.34cm,
 column sep=0.32cm,
 ampersand replacement=\&,
 wire types={b,q,b,b,b}
]
 \lstick{$\mathsf W:|0^{n-1}\rangle$}
  \& \qwbundle{n-1}\qw \&[0.20cm] \gate[2]{U}
  \& \qw \& \qw \& \qw \& \qw \& \qw \& \qw \& \qw \\
 \lstick{$s:|0\rangle$}
  \& \qw \& {} \& \meter{S} \& \setwiretype{c}\cw
  \& \ctrl[vertical wire=c]{1}\cw \& \cw \\
 \lstick{$D:|x\rangle$}
  \& \qwbundle{k}\qw \& \qw \& \qw \& \gate{U_1} \& \ctrl{1} \& \qw
  \& \gate{U_2} \& \ctrl{2} \& \qw \\
 \lstick{$M_{\rm step}:|0^k\rangle$}
  \& \qwbundle{k}\qw \& \qw \& \qw \& \qw \& \targ{}
  \& \meter{}\rstick{$Y_{\rm step}$\enspace\usebox{\scoreSpeakerBox}\ {\scriptsize$S=1$}} \\
 \lstick{$M_{\rm end}:|0^k\rangle$}
  \& \qwbundle{k}\qw \& \qw \& \qw \& \qw \& \qw \& \qw \& \qw
  \& \targ{} \& \meter{}\rstick{$Y_{\rm end}$\enspace\usebox{\scoreSpeakerBox}\ {\scriptsize$S=0,1$}}
\end{quantikz}
\endgroup
\caption{\captionlead{Verifier-controlled checking schedule.}  Reading $S$ before the data
passage classically enables the step record bank only for $S=1$; its retained
outcome $Y_{\rm step}$ supplies the optional intermediate sound.  The endpoint
bank records and sounds $Y_{\rm end}$ on both branches.  The intermediate
outcome is not used as feed-forward within the passage, although its
measurement backaction on $D$ is retained.
Bundled wires denote multiqubit registers.}
\label{fig:bqp-controlled-check-app}
\end{figure}
\endgroup

Write $\operatorname{CCX}_{c_1,c_2\to r}$ for the Toffoli gate
\[
 |c_1,c_2,r\rangle\longmapsto
 |c_1,c_2,r\mathbin\oplus(c_1c_2)\rangle.
\]
It has an exact constant-size decomposition over $\mathcal G$
($T^\dagger=T^7$).  Apply
\begin{equation}
 \begin{aligned}
 C_{\rm step}^{(s)}
 &:=\prod_{j=0}^{k-1}\operatorname{CCX}_{s,q_j\to
   (M_{\rm step})_j},\\
 C_{\rm end}
 &:=\prod_{j=0}^{k-1}C_{\rm NOT}^{q_j\to
   (M_{\rm end})_j}.
 \end{aligned}
 \label{eq:bqp-controlled-record-banks-app}
\end{equation}
Place $C_{\rm step}^{(s)}$
after $U_1(m)$ and the unconditional $C_{\rm end}$
after $U_2(m)$.
If $S$ is the recorded value of $s$,
$X=x$ is the data entrance, and $Y=y$
is the endpoint record, then summing over the
step record gives
\begin{equation}
 \begin{aligned}
 K_0(m)&=\Born_{\C}(U_2(m)\circ U_1(m)),\\
 K_1(m)&=\Born_{\C}(U_2(m))\circ\Born_{\C}(U_1(m)),
 \end{aligned}
 \label{eq:bqp-schedule-kernels-app}
\end{equation}
\begin{equation}
\begin{aligned}
 \Pr(S=0,Y=y\mid X=x,m)
 &=(1-p(m))(K_0(m))_{yx},\\
 \Pr(S=1,Y=y\mid X=x,m)
 &=p(m)(K_1(m))_{yx}.
 \end{aligned}
 \label{eq:bqp-schedule-joint-law-app}
\end{equation}
\Needspace{3\baselineskip}
Thus the endpoint check is present on both branches, whereas the
stepwise check
is present only on the accepting branch.  The
step record may be read and
forgotten at the boundary, or its measurement may be deferred to the end:
because no later gate acts on that record, both implementations give the same
endpoint statistics.  \ Figure~\ref{fig:bqp-controlled-check-app} shows the
equivalent early-readout implementation for live playback.

For live playback, measure
$s$ first and
classically enable the step CNOT bank; this implements the same conditional
endpoint law.
Enable the step speaker only for $S=1$ and the endpoint speaker
for both values.  The emitted symbols are appended to $m$ only after this
passage; marginalizing the step record therefore leaves
Eq.~\eqref{eq:bqp-schedule-joint-law-app} unchanged within the current run.

\begin{corollary}[Endpoint-only conditional-check prediction]
\label{cor:qmp-endpoint-check-app}
Specialize Fig.~\ref{fig:bqp-controlled-check-app} to $k=1$, initialize
$D$ in $|0\rangle$, and take $U_1(m)=U_2(m)=H$. Fix an integer $c\geq1$.
An instance is $\langle m,a,b\rangle$, where $m$ uses the circuit-universal
history presentation of Remark~\ref{rem:qmp-history-presentation-app}
and its compiler produces a verifier circuit $U(m)$ over $\mathcal G$
with designated output qubit $s$. The rational input thresholds satisfy
$0\leq a<b\leq1$ and $b-a\geq\ell^{-c}$, where
$\ell:=|\langle m,a,b\rangle|$ is the total input length.
The task is to decide whether
\begin{equation}
 \Pr(Y=1)\geq
 \frac b2
 \qquad\text{or}\qquad
 \Pr(Y=1)\leq
 \frac a2,
 \label{eq:qmp-endpoint-promise-app}
\end{equation}
It is promised that one alternative holds. This problem is
\PromiseBQP-complete under deterministic classical polynomial-time
many-one reductions.
\end{corollary}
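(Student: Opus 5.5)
The plan is to reduce Corollary~\ref{cor:qmp-endpoint-check-app} to Theorem~\ref{thm:qmp-promisebqp-app} by computing $\Pr(Y=1)$ in closed form and then composing with the history presentation. The first step specializes Eq.~\eqref{eq:bqp-schedule-kernels-app} to $U_1=U_2=H$ with data entrance $x=0$. Since $H^2=\openone$, we get $K_0=\Born_{\C}(\openone)=\openone$. Since $\Born_{\C}(H)=J_2/2$ is idempotent, $K_1=J_2/2$. Hence $(K_0)_{10}=0$ and $(K_1)_{10}=1/2$. Summing Eq.~\eqref{eq:bqp-schedule-joint-law-app} over $S$ then gives
\[
 \Pr(Y=1)=\frac{p(m)}{2},\qquad p(m)=\Pr_{U(m)}(S=1).
\]
Consequently $\Pr(Y=1)\geq b/2$ holds exactly when $p(m)\geq b$, and $\Pr(Y=1)\leq a/2$ holds exactly when $p(m)\leq a$. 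The new promise problem is therefore literally the \textsc{QMTP} promise applied to the decoded verifier circuit $U(m)$, with designated output qubit $s$ and unchanged thresholds and gap.

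Membership follows by running the whole circuit of Fig.~\ref{fig:bqp-controlled-check-app} polynomially many times and comparing the empirical frequency of $Y=1$ with $(a+b)/4$. The Hoeffding bound is $\exp[-N(b-a)^2/8]$, so $N=O(\ell^{2c})$ runs suffice. There is a technical point here: decoding $m$ into $U(m)$ must be classical polynomial time, and the data block must use only gates from $\mathcal G$. Both hold, because $H\in\mathcal G$, the controlled record bank uses a Toffoli with the exact decomposition stated in the paper, and Remark~\ref{rem:qmp-history-presentation-app} makes decoding efficient.

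For hardness, I would take any $\mathcal L\in\PromiseBQP$ and apply the reduction from the proof of Theorem~\ref{thm:qmp-promisebqp-app}. This gives a verifier $U(w)$ with thresholds $\alpha_w<\beta_w$. Next I would set $m:=\operatorname{enc}(\langle U(w)\rangle)$, so that the compiler yields selector $U(w)$, data notes $H,H$, and thresholds $a=\alpha_w$, $b=\beta_w$. The identity $\Pr(Y=1)=p/2$ then shows the promise is preserved. The hard step is the length bookkeeping. The condition $b-a\geq\ell^{-c}$ is measured against the history length $\ell=|\langle m,a,b\rangle|$, which may exceed the circuit length. I would handle this exactly as in the earlier proof: pad with identity pairs $H\circ H$ (encoded into $m$) until $\ell\geq\lceil\Delta_w^{-1}\rceil$. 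Then $\ell^{-c}\leq\ell^{-1}\leq\Delta_w$ for $c\geq1$, and the padding adds only polynomially many gates because $\Delta_w^{-1}\leq r(|w|)$. The composed map is deterministic polynomial time, which completes the reduction.
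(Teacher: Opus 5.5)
Your proposal is correct and follows the paper's proof. You establish $\Pr(Y=1)=p(m)/2$ by reading it off the schedule kernels $K_0=\openone$ and $K_1=J_2/2$, where the paper uses the equivalent state-level dephasing argument. You then obtain membership by Hoeffding sampling against $(a+b)/4$, and hardness by encoding the verifier with its own thresholds and padding with identity gate pairs, exactly as in Theorem~\ref{thm:qmp-promisebqp-app}.
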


\begin{proof}
On the branch $S=0$, no
step record is created and
$H\circ H|0\rangle=|0\rangle$.  On the branch
$S=1$, the step record
dephases $H|0\rangle=|+\rangle$ in the computational context; the probe is
then entangled with the record, so its reduced state is maximally mixed.  The
second Hadamard leaves that reduced state maximally mixed.
Hence
\begin{equation}
 \Pr(Y=1\mid S=0)=0,
 \qquad
 \Pr(Y=1\mid S=1)=\frac12,
 \qquad
 \Pr(Y=1)=\frac12p(m).
 \label{eq:qmp-endpoint-probability-app}
\end{equation}
\setlength{\emergencystretch}{2em}
The gadget sends the verifier thresholds $a,b$ to $a/2,b/2$ and halves
their separation, leaving an inverse-polynomial gap. For hardness, encode
the verifier and pass its thresholds exactly as in the proof of
Theorem~\ref{thm:qmp-promisebqp-app}, padding with identity gate pairs when
needed for the total history-instance length. Equation~\eqref{eq:qmp-endpoint-probability-app}
then gives the required endpoint promise.

For membership, sample $N$ independent runs and compare the endpoint
frequency with $(a+b)/4$. The error is at most
$\exp[-N(b-a)^2/8]$, so
$N=O(\ell^{2c})$ gives bounded error.
\end{proof}

\Needspace{6\baselineskip}
If the schedule bit $S$ is retained,
this $H$--$H$ specialization also gives
the explicit information value
\begin{equation}
 I(S;Y)
 =H_2\!\left(1-\frac{p(m)}{2},
   \frac{p(m)}{2}\right)
   -p(m)
 \quad\text{bits per run},
 \label{eq:qmp-schedule-information-app}
\end{equation}
which is the biased-schedule specialization of
Eq.~\eqref{eq:weighted-js-app}.  This identity links the two constructions
operationally; the completeness proof itself is the circuit reduction above.

\Needspace{7\baselineskip}
\subsection{\texorpdfstring{Exact sampling of musical outputs}{Exact sampling of musical outputs}}
\label{app:music-sampling}

\begin{remark}[Complexity-class terminology]
\label{rem:complexity-classes-app}
\(\mathrm{BPP}\) and \(\mathrm{BQP}\) are the bounded-error classical
randomized and quantum polynomial-time language classes.  An IQP circuit is a
polynomial-size circuit of mutually commuting gates diagonal in the Pauli-
\(X\) basis, followed by computational-basis measurement.  The polynomial
hierarchy \(\mathrm{PH}\) is the union of its alternating quantifier levels;
its third level is \(\Sigma_3^{\mathrm P}\cup\Pi_3^{\mathrm P}\).  These
standard notions are recalled only to state the sampling consequence below.
\end{remark}

\begin{proposition}[Exact musical-output sampling]
\label{prop:music-exact-sampling-app}
Fix a polynomial-time injective encoding of full computational-basis output
strings into polynomial-length musical strings, with a polynomial-time
inverse.  If a classical randomized polynomial-time algorithm, given the
circuit description, samples these musical outputs exactly for every circuit
in the uniform finite-gate IQP families of Bremner, Jozsa, and
Shepherd~\cite{app:BremnerJozsaShepherd2011}, then the polynomial hierarchy
collapses to its third level.
\end{proposition}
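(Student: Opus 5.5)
The plan is to reduce exact classical sampling of the musical outputs to exact classical sampling of IQP output distributions, and then invoke the Bremner--Jozsa--Shepherd hardness theorem~\cite{app:BremnerJozsaShepherd2011}. Fix one of their uniform finite-gate IQP families. For each circuit $C$ in it, the corresponding score is $C$ itself with the prescribed computational-basis readout. Its audible output is $\operatorname{enc}(z)$, where $z\in\{0,1\}^n$ is the measured string and $\operatorname{enc}$ is the polynomial-time injective encoding with polynomial-time inverse fixed in the statement. The musical output law is therefore exactly the pushforward of the IQP output law $\Pr_C(z)=[\Born_{\C}(C)]_{z,0^n}$ under $\operatorname{enc}$.

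Suppose a classical randomized polynomial-time algorithm $\mathcal S$ samples these musical strings exactly. We build a classical sampler for $C$ in three steps: run $\mathcal S$ on the description of $C$, apply $\operatorname{enc}^{-1}$ to the returned string, and output the resulting $z$. Injectivity makes the pushforward bijective onto the range of $\operatorname{enc}$. Hence the composite samples $\Pr_C$ exactly, and $\operatorname{enc}^{-1}$ is never applied outside its domain. The running time stays polynomial because the musical strings have polynomial length and both $\operatorname{enc}$ and its inverse run in polynomial time.

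This yields exact classical sampling of the IQP family, and Bremner--Jozsa--Shepherd show that exact classical sampling of IQP implies $\mathrm{PostBPP}=\mathrm{PP}$. The remaining step, which I expect to be the only real obstacle, is matching hypotheses precisely. Their theorem needs exact, not approximate, sampling; the problem is that their circuits use gates diagonal in the $X$ basis, whereas the paper's universal gate set is $\mathcal G$. This is fine, because the proposition is stated directly for their uniform finite-gate families. I would check that the score compiler requires no gate-set conversion, or that any conversion is exact, for example writing $X$-diagonal gates as $H$-conjugated diagonal gates. Once that holds, $\mathrm{PH}\subseteq\mathrm P^{\mathrm{PP}}=\mathrm P^{\mathrm{PostBPP}}\subseteq\Delta_3^{\mathrm P}$ by Toda's theorem, so the hierarchy collapses to its third level as claimed.
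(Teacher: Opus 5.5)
Your proposal is correct and follows the paper's proof. Applying the polynomial-time inverse encoding to each sample turns an exact musical-output sampler into an exact sampler for the full IQP output distribution, and since the proposition is stated directly for the finite-gate families of Bremner, Jozsa, and Shepherd, their exact-sampling consequence applies with no gate-set conversion; your further unpacking through $\mathrm{PostBPP}=\mathrm{PP}$ and Toda's theorem is the internal content of their theorem, which the paper simply cites.
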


\begin{proof}
Applying the inverse encoding to each sample gives an exact classical sampler
for the full IQP output distribution.  The circuit families needed
in~\cite{app:BremnerJozsaShepherd2011} use only a finite set of one- and
two-qubit gates.  Their exact-sampling consequence therefore applies directly.
\end{proof}

This concerns sampling full output strings.  By contrast, an efficient
classical randomized decider for the
\textsc{QMTP} promise would imply
$\mathrm{BQP}\subseteq\mathrm{BPP}$; its decision-completeness theorem alone does not
establish a collapse of the polynomial hierarchy.

\Needspace{7\baselineskip}

\clearpage
\begingroup
\renewcommand{\refname}{Appendix references}

\label{app:last-page}
\endgroup

\end{document}